\documentclass[10pt]{article}
\usepackage{amsmath}
\usepackage{amsfonts}
\usepackage{amssymb}
\usepackage{amsthm}
\usepackage{ifpdf}
\usepackage{subfig}
\usepackage{wrapfig}
\usepackage{cite}
\usepackage{clrscode}
\usepackage{comment}
\usepackage{appendix}

\ifpdf

  \usepackage[pdftex]{epsfig}
  \usepackage[pdftex]{hyperref}

\else

    \usepackage[dvips]{epsfig}
    \newcommand{\href}[2]{#2}

\fi

\newif\ifabstract
\newif\iffull

%
%


\ifabstract
	\fullfalse
\else
	\fulltrue
\fi

\newtoks\magicAppendix
\magicAppendix={}
\newtoks\magictoks
\newif\iflater
\laterfalse
\ifabstract
\long\def\later#1{\magictoks={#1}%
  \edef\magictodo{\noexpand\magicAppendix={\the\magicAppendix \par
    \the\magictoks%
  }}
  \magictodo}
\long\def\both#1{\magictoks={#1}%
  \edef\magictodo{\noexpand\magicAppendix={\the\magicAppendix \par
    \noexpand\setcounter{theorem-preserve}{\noexpand\arabic{theorem}}%
    \noexpand\setcounter{theorem}{\arabic{theorem}}%
    \noexpand\setcounter{section-preserve}{\noexpand\arabic{section}}%
    \noexpand\setcounter{section}{\arabic{section}}%
	\noexpand\let\noexpand\oldsection=\noexpand\thesection
	\noexpand\def\noexpand\thesection{\thesection}
	\noexpand\let\noexpand\oldlabel=\noexpand\label
	\noexpand\let\noexpand\label=\noexpand\blank
    \the\magictoks%
    \noexpand\setcounter{theorem}{\noexpand\arabic{theorem-preserve}}%
    \noexpand\setcounter{section}{\noexpand\arabic{section-preserve}}%
	\noexpand\let\noexpand\thesection=\noexpand\oldsection
	\noexpand\let\noexpand\label=\noexpand\oldlabel
  }}
  \magictodo
  \the\magictoks}
\else
\long\def\later#1{#1}
\long\def\both#1{#1}
\fi
\long\def\magicappendix{
	\latertrue%
	\the\magicAppendix%
}

\vfuzz2pt 
\hfuzz2pt 
\newtheorem{conjecture}{Conjecture}

\newtheorem{theorem}{Theorem}[section]

\newtheorem{corollary}[theorem]{Corollary}
\newtheorem{definition}[theorem]{Definition}
%
%
%
%
%
%
%


\newcommand{\ta}{\tilde{\alpha}}
\newcommand{\tb}{\tilde{\beta}}
\newcommand{\tg}{\tilde{\gamma}}
\newcommand{\setr}[2]{\left\{\ #1 \ \left|\ #2 \right. \ \right\}}

\newcommand{\pred}{\mathrm{pred}}

\newcommand{\Z}{\mathbb{Z}}
\newcommand{\N}{\mathbb{N}}

\newcommand{\strength}{{\rm str}}

\newcommand{\dom}{{\rm dom} \;}

\newcommand{\res}[1]{\textrm{res}(#1)}
\newcommand{\termasm}[1]{\mathcal{A}_{\Box}[\mathcal{#1}]}
\newcommand{\prodasm}[1]{\mathcal{A}[\mathcal{#1}]}

\newcommand{\calT}{\mathcal{T}}
\newcommand{\lab}{{\rm label}}

    \setcounter{topnumber}{2}
    \setcounter{bottomnumber}{2}
    \setcounter{totalnumber}{4}     
    \setcounter{dbltopnumber}{2}    




\pagestyle{plain}

\begin{document}
\ifabstract
 \addtolength{\belowcaptionskip}{-6pt}
 \addtolength{\abovecaptionskip}{-8pt}
\fi

\title{The two-handed tile assembly model is not intrinsically universal
}%

\author{
Erik D. Demaine\thanks{Computer Science and Artificial Intelligence Laboratory,
      Massachusetts Institute of Technology,
      32 Vassar St., Cambridge, MA 02139, USA,
      \protect\url{edemaine@mit.edu}}
\and
Matthew J. Patitz\thanks{Department of Computer Science and Computer Engineering, University of Arkansas,
\protect\url{patitz@uark.edu} This author's research was supported in part by National Science Foundation Grants CCF-1117672 and CCF-1422152.}
\and
Trent A. Rogers\thanks{Department of Computer Science and Computer Engineering, University of Arkansas,
\protect\url{tar003@uark.edu}.  This author's research was supported by the National Science Foundation Graduate Research Fellowship Program under Grant No. DGE-1450079, and National Science Foundation grants CCF-1117672 and CCF-1422152.}
\and  Robert T. Schweller\thanks{ Department of Computer Science, University of Texas--Pan American, Edinburg, TX, 78539, USA. \protect\url{rtschweller@utpa.edu}. This author's research was supported in part by National Science Foundation Grant CCF-1117672. }
\and Scott M. Summers\thanks{Department
of Computer Science and Software Engineering, University of Wisconsin--Platteville, Platteville, WI 53818, USA.
\protect\url{summerss@uwplatt.edu}.}
\and Damien Woods\thanks{Computer Science, California Institute of Technology, \protect\url{woods@caltech.edu}. This author's research was supported by National Science Foundation grants 0832824 (The Molecular Programming Project), CCF-1219274, and CCF-1162589.}}


\date{}

\maketitle

\vspace{-4ex}
\begin{abstract}
The well-studied Two-Handed Tile Assembly Model (2HAM) is a model of tile assembly in which pairs of large assemblies can bind, or self-assemble, together. In order to bind, two assemblies must have matching glues that can simultaneously touch each other, and stick together with strength that is at least the temperature $\tau$, where $\tau$ is some fixed positive integer. We ask whether the 2HAM is intrinsically universal, in other words we ask: is there a single universal 2HAM tile set $U$ which can be used to simulate any instance of the model? Our main result is a negative answer to this question. We show that for all $\tau' < \tau$, each temperature-$\tau'$ 2HAM tile system does not simulate at least one temperature-$\tau$ 2HAM tile system. This impossibility result proves that the 2HAM is not intrinsically universal, in stark contrast to the simpler (single-tile addition only) abstract Tile Assembly Model which is intrinsically universal (\emph{The tile assembly model is intrinsically universal}, FOCS 2012). However, on the positive side, we prove that, for every fixed temperature~$\tau \geq 2$, temperature-$\tau$ 2HAM tile systems are indeed intrinsically universal: in other words, for each~$\tau$ there is a single universal 2HAM tile set $U$ that, when appropriately initialized, is capable of simulating the behavior of any temperature-$\tau$ 2HAM tile system. As a corollary of these results we find an infinite set of infinite hierarchies of 2HAM systems with strictly increasing simulation power within each hierarchy. Finally, we show that for each $\tau$, there is a temperature-$\tau$ 2HAM system that simultaneously simulates all temperature-$\tau$ 2HAM systems.
\end{abstract}


\section{Introduction}
Self-assembly is the process through which unorganized, simple, components automatically coalesce according to simple local rules to form some kind of target structure. It sounds simple, but the end result can be extraordinary. For example, researchers have been able to self-assemble a wide variety of structures experimentally at the nanoscale, such as regular arrays~\cite{WinLiuWenSee98}, fractal structures~\cite{RoPaWi04,FujHarParWinMur07}, smiling faces~\cite{rothemund2006folding}, DNA tweezers~\cite{yurke2000dna}, logic circuits~\cite{qian2011scaling}, neural networks~\cite{qian2011neural}, and molecular robots\cite{DNARobotNature2010}. These examples are fundamental because they demonstrate that self-assembly can, in principle, be used to manufacture specialized geometrical, mechanical and computational objects at the nanoscale. Potential future applications of nanoscale self-assembly include the production of smaller, more efficient microprocessors and medical technologies that are capable of diagnosing and even treating disease at the cellular level.

Controlling nanoscale self-assembly for the purposes of manufacturing atomically precise components will require a bottom-up, hands-off strategy. In other words, the self-assembling units themselves will have to be ``programmed'' to direct themselves to do the right thing---efficiently and correctly. Thus, it is necessary to study the extent to which the process of self-assembly can be controlled in an algorithmic sense.

In 1998, Erik Winfree \cite{Winf98} introduced the abstract Tile Assembly Model (aTAM), an over-simplified discrete mathematical model of  nanoscale DNA  self-assembly pioneered by Seeman \cite{Seem82}. The aTAM essentially augments classical Wang tiling \cite{Wang61} with a mechanism for sequential ``growth'' of a tiling (in Wang tiling, only the existence of a valid, mismatch-free tiling is considered and not the order of tile placement). In the aTAM, the fundamental components are un-rotatable, but translatable square ``tile types'' whose sides are labeled with (alpha-numeric) glue ``colors'' and (integer) ``strengths''. Two tiles that are placed next to each other \emph{interact} if the glue colors on their abutting sides match, and they \emph{bind} if the strengths on their abutting sides match and sum to at least a certain (integer) ``temperature''. Self-assembly starts from a ``seed'' tile type and proceeds nondeterministically and asynchronously as tiles bind to the seed-containing-assembly. Despite its deliberate over-simplification, the aTAM is a computationally expressive model. For example, Winfree \cite{Winf98} proved that it is Turing universal, which implies that self-assembly can be directed by a computer program.

In this paper, we work in a generalization of the aTAM, called the \emph{two-handed}~\cite{Versus} (a.k.a., hierarchical \cite{CheDot12}, q-tile \cite{AGKS05g}, polyomino \cite{Luhrs08}) abstract Tile Assembly Model (2HAM). A central feature of the 2HAM is that, unlike the aTAM, it allows two ``supertile'' assemblies, each consisting of one or more tiles, to fuse together.   For two such assemblies to bind, they should not ``sterically hinder'' each other, and they should have a sufficient number of matching glues distributed along the interface where they meet. Hence the model includes notions of local  interactions (individual glues) and non-local interactions (large assemblies coming together).  In the 2HAM, an assembly of tiles is producible if it is either a single tile, or if it results from the stable combination of two other producible assemblies.

We study the \emph{intrinsic universality} in the 2HAM.  Intrinsic universality uses a special notion of simulation, where the simulator preserves the dynamics of the simulated system. For tile assembly systems this means that, modulo spatial rescaling, a simulator self-assembles the same assemblies as any simulated system, and even does this in the same way (via the same assembly sequences). In the field of cellular automata, the topic of intrinsic universality has given rise to a rich theory~\cite{ DurandRoka, bulkingI, Delorme-etal-2011,arrighi2012intrinsic, Goles-etal-2011, Ollinger-CSP08, ollingerRichard2011four} and indeed has also been studied in Wang tiling~\cite{LafitteW07,LafitteW08,LafitteW09} and  tile self-assembly~\cite{USA, IUSA, T1notIUforT2-2013,one-2012}. The aTAM has been shown to be intrinsically universal~\cite{IUSA}, meaning that there is a single set of tiles~$U$ that works at temperature 2, and when appropriately initialized, is capable of simulating the behavior of an arbitrary aTAM tile assembly system. Modulo rescaling, this single tile set $U$ represents the full power and expressivity of the entire aTAM model, at any temperature. On the other hand, it has been shown that at temperature-1, there is no tile set that can simulate the aTAM~\cite{T1notIUforT2-2013}. Interestingly, the latter negative result holds for 3D temperature-1 systems, which are known to be Turing universal~\cite{CooFuSch11}.  Here, we ask whether there is such a universal tile set for the 2HAM.

The theoretical power of non-local interaction in the 2HAM has been the subject of recent research. For example, Doty and Chen \cite{CheDot12} proved that, surprisingly, $N \times N$ squares do not self-assemble any faster in so-called {\em partial order} 2HAM systems than they do in the aTAM, despite being able to exploit massive parallelism.  More recently, Cannon, et al.~\cite{Versus}, while comparing the abilities of the 2HAM and the aTAM, proved three main results, which seem to suggest that the 2HAM is at least as powerful as the aTAM: (1) non-local binding in the 2HAM can dramatically reduce the tile complexity (i.e., minimum number of unique tile types required to self-assemble a shape) for certain classes of shapes; (2)  the 2HAM can simulate the aTAM in the following sense: for any aTAM tile system $\mathcal{T}$, there is a corresponding 2HAM tile system $\mathcal{S}$, which simulates the exact behavior---modulo connectivity---of $\mathcal{T}$, at scale factor 5; (3)  the problem of verifying whether a 2HAM system uniquely produces a given assembly is coNP-complete (for the aTAM this  problem is decidable in polynomial time \cite{ACGHKMR02}).

{\bf  Main results.\ } In this paper, we ask if the 2HAM is \emph{intrinsically universal}: does there exist a ``universal'' 2HAM tile set $U$ that, when appropriately initialized, is capable of simulating the behavior of an arbitrary 2HAM tile system? A positive answer would imply that such a tile set $U$ has the ability to model the capabilities of all 2HAM systems.\footnote{Note that the above simulation result of Cannon et al.\ does not imply that the 2HAM is intrinsically universal because (a) it is for 2HAM simulating aTAM, and (b) it is an example of a ``for all, there exists...'' statement, whereas intrinsic universality is a ``there exists, for all...'' statement.} Our first main result, Theorem~\ref{thm:2HAM-is-not-IU-general}, says that the 2HAM is \emph{not} intrinsically universal, which means that the 2HAM is  incapable of simulating itself. This statement stands in stark contrast to the case of the aTAM, which was recently shown to be intrinsically universal by Doty, Lutz, Patitz, Schweller, Summers and Woods \cite{IUSA}. Specifically, we show that for any sufficiently large temperature $\tau$, there is a temperature $\tau$ 2HAM system that cannot be simulated by any temperature $\tau' < \tau$ 2HAM system. It is worthy of note that, in order to prove this result,  we use a simple, yet novel combinatorial argument, which as far as we are aware of, is the first lower bound proof in the 2HAM that does not use an information-theoretic argument. In our proof of Theorem~\ref{thm:2HAM-is-not-IU-general} we show that the 2HAM cannot simulate massively cooperative binding, where the number of cooperative bindings is larger than the temperature of the simulator.

Our second main result, Theorem~\ref{sec:secondMainResult}, is positive:  we show, via constructions, that the 2HAM {\em is} intrinsically universal for fixed temperature, that is, the temperature~$\tau$ 2HAM can simulate the temperature~$\tau$ 2HAM. So although our impossibility result tells us that the 2HAM can not simulate ``too much'' cooperative binding, our positive result tells us it can indeed simulate {\em some} cooperative binding: an amount exactly equal to the temperature of the simulator.

As a corollary of these results, we get a separation between classes of 2HAM tile systems based on their temperatures. That is, we exhibit an infinite hierarchy of 2HAM systems, of strictly-increasing temperature, that cannot be simulated by lesser temperature systems but can downward simulate lower temperature systems. Moreover, we exhibit an infinite number of such hierarchies in Theorem~\ref{thm:infinite_hierarchy}.  
Thus, as was suggested as future work in~\cite{IUSA}, and as has been shown in the theory of cellular automata~\cite{Delorme-etal-2011}, we use the notion of intrinsic universality to classify, and separate, these tile assembly systems via their simulation ability.

As noted above, we show that temperature $\tau$ 2HAM systems are intrinsically universal. We actually show this for two different, seemingly natural, notions of simulation (called {\em simulation} and {\em strong simulation}), showing trade-offs between, and even within, these notions of simulation.  For both notions of simulation, we show tradeoffs between scale factor, number of tile types, and complexity of the initial configuration.  Finally, we show how to construct, for each $\tau$, a temperature-$\tau$ 2HAM system that simultaneously simulates all temperature-$\tau$ 2HAM systems. We finish with a conjecture:

\begin{conjecture}
There exists $c \in \mathbb{N}$, such that for each $\tau \geq c$, temperature $\tau$ 2HAM systems do not strongly simulate temperature $\tau -1 $ 2HAM systems.
\end{conjecture}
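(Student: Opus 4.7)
The plan is to extend the combinatorial obstruction technique of Theorem~\ref{thm:2HAM-is-not-IU-general} in a new direction: rather than bounding the cooperativity of a lower-temperature simulator, the goal is to exploit a rigidity in temperature-$\tau$ binding that is forced by strong simulation but absent from temperature-$(\tau-1)$ dynamics. The starting observation is that at temperature $\tau-1$ a single matching glue of strength $\tau-1$ suffices to bind two supertiles, while at temperature $\tau$ every binding must either use a single strength-$\tau$ glue or a cooperative combination summing to at least $\tau$. Strong simulation should insist on a tighter correspondence between the individual binding events of the simulator and those of the simulated system, so this asymmetry in binding mechanisms is the natural place to look for an obstruction.

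First I would construct an adversary temperature-$(\tau-1)$ 2HAM system $\mathcal{T}_{\tau-1}$ with a gadget consisting of two large supertiles $P$ and $Q$ sharing a long matching interface with many candidate binding positions, each a lone matching glue-pair of strength exactly $\tau-1$, arranged so that $P$ and $Q$ can legitimately combine at any one of these positions independently of the others. This yields a large combinatorial variety of distinct producer steps that any strong simulator must reproduce faithfully. Next, for any candidate temperature-$\tau$ strong simulator $\mathcal{S}$ with scale factor $m$, I would argue that each such binding event must correspond to a distinct localizable attachment step in $\mathcal{S}$ between the macro-block representations of $P$ and $Q$. A case analysis on the type of binding used by $\mathcal{S}$ should then force a contradiction: if $\mathcal{S}$ binds the macro-blocks via a single strength-$\tau$ glue pair, then only a bounded number of such positions along the macro-interface can be distinguished, and choosing the gadget long enough outruns this bound; if $\mathcal{S}$ binds cooperatively with $k \geq 2$ glue pairs summing to at least $\tau$, then each binding mode commits both macro-blocks in advance to a particular cooperative pattern whose geometric footprint can be exploited by a pigeonhole argument to produce either a spurious producible supertile (violating soundness of the simulation) or a missing one (violating completeness).

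The main obstacle, and the reason a threshold constant $c$ appears in the conjecture, is the cooperative case: ruling out all cleverly distributed cooperative encodings of the simulated strength-$(\tau-1)$ bindings requires showing that, once $\tau$ is sufficiently large, no placement of cooperative partners along the macro-interface can simultaneously match every legitimate binding of $\mathcal{T}_{\tau-1}$ without introducing at least one illegitimate one. I expect the crucial subtlety is isolating precisely where strong simulation is more restrictive than ordinary simulation, since Theorem~\ref{sec:secondMainResult} already shows ordinary temperature-$\tau$ simulation is possible for temperature-$\tau$ systems and a naive argument would incorrectly rule that out too. Identifying the right invariant preserved by strong simulations, and exhibiting a gadget whose legitimate producer dynamics force its violation once $\tau$ exceeds some combinatorial threshold $c$, is in my view the technical heart of the conjecture and the reason it remains open rather than being resolved alongside the main results of this paper.
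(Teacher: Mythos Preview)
This statement is presented in the paper as a \emph{conjecture}, not a theorem; the paper gives no proof and explicitly leaves it open. So there is nothing to compare your proposal against: your write-up is not a proof but a proposed line of attack, and you yourself recognize this in your final paragraph.

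As a strategy, your outline is plausible and identifies the right tension: a single strength-$(\tau-1)$ glue suffices to bind at temperature $\tau-1$, whereas any temperature-$\tau$ simulator must manufacture at least $\tau$ units of strength across the macro-interface, either by a strength-$\tau$ glue or by cooperation. You are also right that the crux is pinning down exactly where \emph{strong} simulation (Definition~\ref{scott-defn:alt-equiv-dyanmic-s-to-t-strong}) is more restrictive than ordinary simulation, since the paper's positive results already show that a naive argument would prove too much.

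That said, two steps in your sketch are underspecified in ways that matter. First, in your single-strength-$\tau$-glue case, the claim that ``only a bounded number of such positions along the macro-interface can be distinguished'' is not obviously correct: the simulator's scale factor $m$, and hence the length of the macro-interface, is chosen \emph{after} the simulated system, so making the gadget longer does not by itself outrun the simulator. You would need a bound that depends only on $|U|$ and $\tau$, not on $m$, analogous to the quantity $k=|U|^{4m^2}$ in the proof of Theorem~\ref{thm:2HAM-is-not-IU}, and then a counting argument that beats it. Second, in the cooperative case, note that Definition~\ref{scott-defn:alt-equiv-dyanmic-s-to-t-strong} allows the simulator to grow $\ta'$ and $\tb'$ into \emph{different} $\ta''$ and $\tb''$ depending on which target combination $\tilde{\gamma}$ is being realized. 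This means the simulator does not have to commit to a single cooperative pattern in advance for all binding positions of your gadget; it can tailor the grown macro-assemblies to each $\tilde{\gamma}$ separately. Your pigeonhole argument must therefore quantify over all such grown pairs, which is a substantially harder combinatorial task than the sketch suggests.
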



\section{Definitions}\label{sec:definitions}
\subsection{Informal definition of the 2HAM}
The 2HAM \cite{AGKS05g,DDFIRSS07} is a generalization of the aTAM in that it allows for two assemblies, both possibly consisting of more than one tile, to attach to each other. Since we must allow that the assemblies might require translation before they can bind, we define a \emph{supertile} to be the set of all translations of a $\tau$-stable assembly, and speak of the attachment of supertiles to each other, modeling that the assemblies attach, if possible, after appropriate translation.
We now give a brief, informal, sketch of the 2HAM.

A \emph{tile type} is a unit square with four sides, each having a \emph{glue} consisting of a \emph{label} (a finite string) and \emph{strength} (a non-negative integer).   We assume a finite set $T$ of tile types, but an infinite number of copies of each tile type, each copy referred to as a \emph{tile}.
A \emph{supertile} is (the set of all translations of) a positioning of tiles on the integer lattice $\Z^2$.  Two adjacent tiles in a supertile \emph{interact} if the glues on their abutting sides are equal and have positive strength.
Each supertile induces a \emph{binding graph}, a grid graph whose vertices are tiles, with an edge between two tiles if they interact.
The supertile is \emph{$\tau$-stable} if every cut of its binding graph has strength at least $\tau$, where the weight of an edge is the strength of the glue it represents.
That is, the supertile is stable if at least energy $\tau$ is required to separate the supertile into two parts.
A 2HAM \emph{tile assembly system} (TAS) is a pair $\calT = (T,\tau)$, where $T$ is a finite tile set and $\tau$ is the \emph{temperature}, usually 1 or 2.
Given a TAS $\calT=(T,\tau)$, a supertile is \emph{producible}, written as $\alpha \in \prodasm{T}$ if either it is a single tile from $T$, or it is the $\tau$-stable result of translating two producible assemblies without overlap.
A supertile $\alpha$ is \emph{terminal}, written as $\alpha \in \termasm{T}$ if for every producible supertile $\beta$, $\alpha$ and $\beta$ cannot be $\tau$-stably attached.
A TAS is \emph{directed} if it has only one terminal, producible supertile.\footnote{We do not use this definition in this paper but have included it for the sake of completeness.}

\iffull
\subsection{Formal definition of the 2HAM}
\fi
\ifabstract
\later{
\section{Formal definition of the 2HAM}
}
\fi
\later{
\label{def:2ham_formal}
We now formally define the 2HAM.

Two assemblies $\alpha$ and $\beta$ are \emph{disjoint} if $\dom \alpha \cap \dom \beta = \emptyset.$
For two assemblies $\alpha$ and $\beta$, define the \emph{union} $\alpha \cup \beta$ to be the assembly defined for all $\vec{x}\in\Z^2$ by $(\alpha \cup \beta)(\vec{x}) = \alpha(\vec{x})$ if $\alpha(\vec{x})$ is defined, and $(\alpha \cup \beta)(\vec{x}) = \beta(\vec{x})$ otherwise. Say that this union is \emph{disjoint} if $\alpha$ and $\beta$ are disjoint.

The \emph{binding graph of} an assembly $\alpha$ is the grid graph
$G_\alpha = (V, E )$, where $V =
\dom{\alpha}$, and $\{\vec{m}, \vec{n}\} \in E$ if and only if (1)
$\vec{m} - \vec{n} \in U_2$, (2)
$\lab_{\alpha(\vec{m})}\left(\vec{n} - \vec{m}\right) =
\lab_{\alpha(\vec{n})}\left(\vec{m} - \vec{n}\right)$, and (3)
$\strength_{\alpha(\vec{m})}\left(\vec{n} -\vec{m}\right) > 0$.
Given $\tau \in \mathbb{N}$, an
assembly is $\tau$-\emph{stable} (or simply \emph{stable} if $\tau$ is understood from context), if it
cannot be broken up into smaller assemblies without breaking bonds
of total strength at least $\tau$; i.e., if every cut of $G_\alpha$
has weight at least $\tau$, where the weight of an edge is the strength of the glue it represents. In contrast to the model of Wang tiling, the nonnegativity of the strength function implies that glue mismatches between adjacent tiles do not prevent a tile from binding to an assembly, so long as sufficient binding strength is received from the (other) sides of the tile at which the glues match.

For assemblies $\alpha,\beta:\Z^2 \dashrightarrow T$ and $\vec{u} \in \Z^2$, we write $\alpha+\vec{u}$ to denote the assembly defined for all $\vec{x}\in\Z^2$ by $(\alpha+\vec{u})(\vec{x}) = \alpha(\vec{x}-\vec{u})$, and write $\alpha \simeq \beta$ if there exists $\vec{u}$ such that $\alpha + \vec{u} = \beta$; i.e., if $\alpha$ is a translation of $\beta$. Given two assemblies $\alpha,\beta:\Z^2 \dashrightarrow T$, we say $\alpha$ is a \emph{subassembly} of $\beta$, and we write $\alpha \sqsubseteq \beta$, if $S_\alpha \subseteq S_\beta$ and, for all points $p \in S_\alpha$, $\alpha(p) = \beta(p)$.
Define the \emph{supertile} of $\alpha$ to be the set $\ta = \setr{\beta}{\alpha \simeq \beta}$.
A supertile $\ta$ is \emph{$\tau$-stable} (or simply \emph{stable}) if all of the assemblies it contains are $\tau$-stable; equivalently, $\ta$ is stable if it contains a stable assembly, since translation preserves the property of stability. Note also that the notation $|\ta| \equiv |\alpha|$ is the size of the supertile (i.e., number of tiles in the supertile) is well-defined, since translation preserves cardinality (and note in particular that even though we define $\ta$ as a set, $|\ta|$ does not denote the cardinality of this set, which is always $\aleph_0$).

For two supertiles $\ta$ and $\tb$, and temperature $\tau\in\N$, define the \emph{combination} set $C^\tau_{\ta,\tb}$ to be the set of all supertiles $\tg$ such that there exist $\alpha \in \ta$ and $\beta \in \tb$ such that (1) $\alpha$ and $\beta$ are disjoint (steric protection), (2) $\gamma \equiv \alpha \cup \beta$ is $\tau$-stable, and (3) $\gamma \in \tg$. That is, $C^\tau_{\ta,\tb}$ is the set of all $\tau$-stable supertiles that can be obtained by ``attaching'' $\ta$ to $\tb$ stably, with $|C^\tau_{\ta,\tb}| > 1$ if there is more than one position at which $\beta$ could attach stably to $\alpha$.

It is common with seeded assembly to stipulate an infinite number of copies of each tile, but our definition allows for a finite number of tiles as well. Our definition also allows for the growth of infinite assemblies and finite assemblies to be captured by a single definition, similar to the definitions of \cite{jSSADST} for seeded assembly.

Given a set of tiles $T$, define a \emph{state} $S$ of $T$ to be a multiset of supertiles, or equivalently, $S$ is a function mapping supertiles of $T$ to $\N \cup \{\infty\}$, indicating the multiplicity of each supertile in the state. We therefore write $\ta \in S$ if and only if $S(\ta) > 0$.

A \emph{(two-handed) tile assembly system} (\emph{TAS}) is an ordered triple $\mathcal{T} = (T, S, \tau)$, where $T$ is a finite set of tile types, $S$ is the \emph{initial state}, and $\tau\in\N$ is the temperature. If not stated otherwise, assume that the initial state $S$ is defined $S(\ta) = \infty$ for all supertiles $\ta$ such that $|\ta|=1$, and $S(\tb) = 0$ for all other supertiles $\tb$. That is, $S$ is the state consisting of a countably infinite number of copies of each individual tile type from $T$, and no other supertiles. In such a case we write $\calT = (T,\tau)$ to indicate that $\calT$ uses the default initial state.  For notational convenience we sometimes describe $S$ as a set of supertiles, in which case we actually mean that  $S$ is a multiset of supertiles with infinite count of each supertile. We also assume that, in general, unless stated otherwise, the count for any supertile in the initial state is infinite.

Given a TAS $\calT=(T,S,\tau)$, define an \emph{assembly sequence} of $\calT$ to be a sequence of states $\vec{S} = (S_i \mid 0 \leq i < k)$ (where $k = \infty$ if $\vec{S}$ is an infinite assembly sequence), and $S_{i+1}$ is constrained based on $S_i$ in the following way: There exist supertiles $\ta,\tb,\tg$ such that (1) $\tg \in C^\tau_{\ta,\tb}$, (2) $S_{i+1}(\tg) = S_{i}(\tg) + 1$,\footnote{with the convention that $\infty = \infty + 1 = \infty - 1$} (3) if $\ta \neq \tb$, then $S_{i+1}(\ta) = S_{i}(\ta) - 1$, $S_{i+1}(\tb) = S_{i}(\tb) - 1$, otherwise if $\ta = \tb$, then $S_{i+1}(\ta) = S_{i}(\ta) - 2$, and (4) $S_{i+1}(\tilde{\omega}) = S_{i}(\tilde{\omega})$ for all $\tilde{\omega} \not\in \{\ta,\tb,\tg\}$.
That is, $S_{i+1}$ is obtained from $S_i$ by picking two supertiles from $S_i$ that can attach to each other, and attaching them, thereby decreasing the count of the two reactant supertiles and increasing the count of the product supertile. If $S_0 = S$, we say that $\vec{S}$ is \emph{nascent}.

Given an assembly sequence $\vec{S} = (S_i \mid 0 \leq i < k)$ of $\calT=(T,S,\tau)$ and a supertile $\tg \in S_i$ for some $i$, define the \emph{predecessors} of $\tg$ in $\vec{S}$ to be the multiset $\pred_{\vec{S}}(\tg) = \{\ta,\tb\}$ if $\ta,\tb \in S_{i-1}$ and $\ta$ and $\tb$ attached to create $\tg$ at step $i$ of the assembly sequence, and define $\pred_{\vec{S}}(\tg) = \{ \tg \}$ otherwise. Define the \emph{successor} of $\tg$ in $\vec{S}$ to be $\succ_{\vec{S}}(\tg)=\ta$ if $\tg$ is one of the predecessors of $\ta$ in $\vec{S}$, and define $\succ_{\vec{S}}(\tg)=\tg$ otherwise. A sequence of supertiles $\vec{\ta} = (\ta_i \mid 0 \leq i < k)$ is a \emph{supertile assembly sequence} of $\calT$ if there is an assembly sequence $\vec{S} = (S_i \mid 0 \leq i < k)$ of $\calT$ such that, for all $1 \leq i < k$, $\succ_{\vec{S}}(\ta_{i-1}) = \ta_i$, and $\vec{\ta}$ is \emph{nascent} if $\vec{S}$ is nascent.

The \emph{result} of a supertile assembly sequence $\vec{\ta}$ is the unique supertile $\res{\vec{\ta}}$ such that there exist an assembly $\alpha \in \res{\vec{\ta}}$ and, for each $0 \leq i < k$, assemblies $\alpha_i \in \ta_i$ such that $\dom{\alpha} = \bigcup_{0 \leq i < k}{\dom{\alpha_i}}$ and, for each $0 \leq i < k$, $\alpha_i \sqsubseteq \alpha$.  For all supertiles $\ta,\tb$, we write $\ta \to_\calT \tb$ (or $\ta \to \tb$ when $\calT$ is clear from context) to denote that there is a supertile assembly sequence $\vec{\ta} = ( \ta_i \mid 0 \leq i < k )$ such that $\ta_0 = \ta$ and $\res{\vec{\ta}} = \tb$. It can be shown using the techniques of \cite{Roth01} for seeded systems that for all two-handed tile assembly systems $\calT$ supplying an infinite number of each tile type, $\to_\calT$ is a transitive, reflexive relation on supertiles of $\calT$. We write $\ta \to_\calT^1 \tb$ ($\ta \to^1 \tb$) to denote an assembly sequence of length 1 from $\ta$ to $\tb$ and $\ta \to_\calT^{\leq 1} \tb$ ($\ta \to^{\leq 1} \tb$) to denote an assembly sequence of length 1 from $\ta$ to $\tb$ if $\ta \ne \tb$ and an assembly sequence of length 0 otherwise.

A supertile $\ta$ is \emph{producible}, and we write $\ta \in \prodasm{\calT}$, if it is the result of a nascent supertile assembly sequence. A supertile $\ta$ is \emph{terminal} if, for all producible supertiles $\tb$, $C^\tau_{\ta,\tb} = \emptyset$.\footnote{Note that a supertile $\ta$ could be non-terminal in the sense that there is a producible supertile $\tb$ such that $C^\tau_{\ta,\tb} \neq \emptyset$, yet it may not be possible to produce $\ta$ and $\tb$ simultaneously if some tile types are given finite initial counts, implying that $\ta$ cannot be ``grown'' despite being non-terminal. If the count of each tile type in the initial state is $\infty$, then all producible supertiles are producible from any state, and the concept of terminal becomes synonymous with ``not able to grow'', since it would always be possible to use the abundant supply of tiles to assemble $\tb$ alongside $\ta$ and then attach them.} Define $\termasm{\calT} \subseteq \prodasm{\calT}$ to be the set of terminal and producible supertiles of $\calT$. $\calT$ is \emph{directed} (a.k.a., \emph{deterministic}, \emph{confluent}) if $|\termasm{\calT}| = 1$.
} 

\subsection{Definitions for simulation}\label{sec:defsSim}
In this subsection, we formally define what it means for one 2HAM TAS to ``simulate'' another 2HAM TAS.  For a tileset $T$, let $A^T$ and $\tilde{A}^T$ denote the set of all assemblies over $T$ and all supertiles over $T$ respectively. Let $A^T_{< \infty}$ and $\tilde{A}^T_{< \infty}$ denote the set of all finite assemblies over $T$ and all finite supertiles over $T$ respectively.

In what follows, let $U$ be a tile set. An $m$-\emph{block assembly}, or {\em macrotile},  over tile set $U$ is a partial function $\gamma : \mathbb{Z}_m \times \mathbb{Z}_m \dashrightarrow U$, where $\mathbb{Z}_m = \{ 0,1,\ldots m-1 \}$.  Let $B^U_m$ be the set of all $m$-block assemblies over $U$. The $m$-block with no domain is said to be $\emph{empty}$.  For an arbitrary assembly $\alpha \in A^U$ define $\alpha^m_{x,y}$ to be the $m$-block defined by $\alpha^m_{x,y}(i,j) = \alpha(mx+i,my+j)$ for $0\leq i,j < m$.

For a partial function $R: B^{U}_m \dashrightarrow T$, define the \emph{assembly representation function} $R^*: A^{U} \dashrightarrow A^T$ such that $R^*(\alpha) = \beta$ if and only if $\beta(x,y) = R(\alpha^m_{x,y})$ for all $x,y \in \mathbb{Z}^2$.
    Further,
     $\alpha$ is said to map \emph{cleanly} to $\beta$ under $R^*$ if either (1) for all non empty blocks $\alpha^m_{x,y}$, $(x+u,y+v) \in \dom{\beta}$ for some $u,v \in \{-1,0,1\}$ such that $u^2+v^2 < 2$, or (2) $\alpha$ has at most one non-empty $m$-block $\alpha^m_{x,y}$. In other words, we allow for the existence of simulator ``fuzz'' directly north, south, east or west of a simulator  macrotile, but we exclude the possibility of diagonal fuzz.

For a given \emph{assembly representation function} $R^*$, define the \emph{supertile representation function} $\tilde{R}: \tilde{A}^{U} \dashrightarrow \mathcal{P}(A^T)$ such that $\tilde{R}(\ta) = \{R^*(\alpha) | \alpha \in \ta \}$. $\ta$ is said to \emph{map cleanly} to $\tilde{R}(\ta)$ if $\tilde{R}(\ta)\in \tilde{A}^T$ and $\alpha$ maps cleanly to $R^*(\alpha)$ for all~$\alpha \in \ta$.

In the following definitions, let $\mathcal{T} = \left(T,S,\tau\right)$  be a 2HAM TAS and, for some initial configuration $S_{\mathcal{T}}$, that depends on $\mathcal{T}$, let $\mathcal{U} = \left(U,S_{\mathcal{T}},\tau'\right)$ be a 2HAM TAS, and let $R$ be an $m$-block representation function $R: B^U_m \dashrightarrow T$.

\begin{definition}\label{scott-defn:alt-equiv-prod}
We say that $\mathcal{U}$ and $\mathcal{T}$ have \emph{equivalent productions} (at scale factor $m$), and we write $\mathcal{U} \Leftrightarrow_R \mathcal{T}$ if the following conditions hold:
\begin{enumerate}
    \item \label{scott-defn:simulate:equiv_prod_a}$\left\{\tilde{R}(\ta) | \ta \in \prodasm{\mathcal{U}}\right\} = \prodasm{\mathcal{T}}$.
    \item \label{scott-defn:simulate:equiv_prod_b}For all $\ta \in \prodasm{\mathcal{U}}$, $\ta$ maps cleanly to $\tilde{R}(\ta)$
\end{enumerate}
\end{definition}

\begin{definition}\label{scott-defn:alt-equiv-dynamic-t-to-s}
We say that $\mathcal{T}$ \emph{follows} $\mathcal{U}$ (at scale factor $m$), and we write $\mathcal{T} \dashv_R \mathcal{U}$ if, for any $\ta, \tb \in \prodasm{\mathcal{U}}$ such that $\ta \rightarrow_{\mathcal{U}}^1 \tb$, $\tilde{R}(\ta) \rightarrow_\mathcal{T}^{\leq 1} \tilde{R}\left(\tb\right)$.
\end{definition}

\begin{definition}\label{scott-defn:alt-equiv-dyanmic-s-to-t-weak}
We say that $\mathcal{U}$ \emph{weakly models} $\mathcal{T}$ (at scale factor $m$), and we write $\mathcal{U} \models^-_R \mathcal{T}$ if, for any $\ta, \tb \in \prodasm{\mathcal{T}}$ such that $\ta \rightarrow_\mathcal{T}^1 \tb$, for all $\ta' \in \prodasm{\mathcal{U}}$ such that $\tilde{R}(\ta')=\ta$, there exists an $\ta'' \in \prodasm{\mathcal{U}}$ such that $\tilde{R}(\ta'')=\tb$, $\ta' \rightarrow_{\mathcal{U}} \ta''$, and $\ta'' \rightarrow_{\mathcal{U}}^1 \tb'$ for some $\tb' \in \prodasm{\mathcal{U}}$ with $\tilde{R}\left(\tb'\right)=\tb$.
\end{definition}

\begin{definition}\label{scott-defn:alt-equiv-dyanmic-s-to-t-strong}
We say that $\mathcal{U}$ \emph{strongly models} $\mathcal{T}$ (at scale factor $m$), and we write $\mathcal{U} \models^+_R \mathcal{T}$ if for any $\ta$, $\tb \in \prodasm{\mathcal{T}}$ such that $\tilde{\gamma} \in C^{\tau}_{\ta , \tb}$, then for all $\ta', \tb' \in \prodasm{\mathcal{U}}$ such that $\tilde{R}(\ta')=\ta$ and $\tilde{R}\left(\tb'\right)=\tb$, it must be that there exist $\ta'', \tb'', \tilde{\gamma}' \in \prodasm{\mathcal{U}}$, such that $\ta' \rightarrow_{\mathcal{U}} \ta''$, $\tb' \rightarrow_{\mathcal{U}} \tb''$, $\tilde{R}(\ta'')=\ta$, $\tilde{R}\left(\tb''\right)=\tb$, $\tilde{R}(\tilde{\gamma}')=\tilde{\gamma}$, and $\tilde{\gamma}' \in C^{\tau'}_{\ta'', \tb''}$.
\end{definition}

\begin{definition}\label{scott-defn:alt-simulate}
Let $\mathcal{U} \Leftrightarrow_R \mathcal{T}$ and $\mathcal{T} \dashv_R \mathcal{U}$.
\begin{enumerate}
    \item \label{scott-defn:alt-weak-simulate} $\mathcal{U}$ \emph{simulates} $\mathcal{T}$ (at scale factor $m$) if $\mathcal{U} \models^-_R \mathcal{T}$.
    \item \label{scott-defn:alt-strong-simulate} $\mathcal{U}$ \emph{strongly simulates} $\mathcal{T}$ (at scale factor $m$) if $\mathcal{U} \models_R^+ \mathcal{T}$.
\end{enumerate}
\end{definition}

For simulation, we require that when a simulated supertile $\ta$ may grow, via one combination attachment, into a second supertile $\tb$, then any simulator supertile that maps to $\ta$ must also grow into a simulator supertile that maps to $\tb$. The converse should also be true.

For strong simulation, in addition to requiring that all supertiles mapping to $\ta$ must be capable of growing into a supertile mapping to $\tb$ when $\ta$ can grow into $\tb$ in the simulated system, we further require that this growth can take place by the attachment of $\emph{any}$ supertile mapping to $\tilde{\gamma}$, where $\tilde{\gamma}$ is the supertile that attaches to $\ta$ to get $\tb$.

\subsection{Intrinsic universality}
\newcommand{\REPL}{\mathsf{REPR}}
\newcommand{\frakC}{\mathfrak{C}}

Let $\REPL$ denote the set of all $m$-block (or macrotile) representation functions.
Let $\frakC$ be a class of tile assembly systems, and let $U$ be a tile set. 
We say $U$ is \emph{intrinsically universal} for $\frakC$ if there are computable functions $\mathcal{R}:\frakC \to \REPL$ and $\mathcal{S}:\frakC \to \left(A^U_{< \infty} \rightarrow \mathbb{N} \cup \{\infty\}\right)$, and a $\tau'\in\Z^+$ such that, for each $\mathcal{T} = (T,S,\tau) \in \frakC$, there is a constant $m\in\N$ such that, letting $R = \mathcal{R}(\mathcal{T})$, $S_\mathcal{T}=\mathcal{S}(\mathcal{T})$, and $\mathcal{U}_\mathcal{T} = (U,S_\mathcal{T},\tau')$, $\mathcal{U}_\mathcal{T}$ simulates $\mathcal{T}$ at scale $m$ and using macrotile representation function $R$.
That is, $\mathcal{R}(\mathcal{T})$ gives a representation function $R$ that interprets macrotiles (or $m$-blocks) of $\mathcal{U}_\mathcal{T}$ as assemblies of $\mathcal{T}$, and $\mathcal{S}(\mathcal{T})$ gives the initial state used to create the necessary macrotiles from $U$ to represent $\mathcal{T}$ subject to the constraint that no macrotile in $S_{\calT}$ can be larger than a single $m \times m$ square.


\section{The 2HAM is not intrinsically universal}
In this section, we prove the main result of this paper: there is no universal 2HAM tile set 
that, when appropriately initialized,  is capable of simulating an arbitrary 2HAM system. That is, we prove that the 2HAM, unlike the aTAM, is not intrinsically universal.

\begin{theorem}\label{thm:2HAM-is-not-IU-general}
The 2HAM is not intrinsically universal.
\end{theorem}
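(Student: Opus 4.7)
\textbf{Proof plan for Theorem~\ref{thm:2HAM-is-not-IU-general}.} The plan is to show the negative result by exhibiting, for each sufficiently large target temperature $\tau$, a particular temperature-$\tau$ 2HAM system $\mathcal{T}_\tau$ that cannot be simulated by any temperature-$\tau' < \tau$ system. Intuitively, $\mathcal{T}_\tau$ should exhibit a \emph{single} binding event whose cooperativity exceeds $\tau'$. Concretely, I would construct $\mathcal{T}_\tau$ so that there are two producible supertiles $\ta,\tb \in \prodasm{\mathcal{T}_\tau}$, shaped as long, mostly featureless ``bars'', whose sole allowed combination is via an interface carrying exactly $\tau$ unit-strength cooperative glues; in addition, for each subset $S \subsetneq \{1,\ldots,\tau\}$ I include a producible variant bar $\tb_S$ that matches $\ta$ on exactly the glues indexed by $S$ and thus does not $\tau$-stably combine with $\ta$. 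Each $\tb_S$ is assembled from its own separate producer gadget so that all of them belong to $\prodasm{\mathcal{T}_\tau}$ but only $\tb = \tb_{\{1,\ldots,\tau\}}$ is in $C^\tau_{\ta,\cdot}$.

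Next I would argue by contradiction. Suppose $\mathcal{U} = (U, S_{\mathcal{T}_\tau}, \tau')$ simulates $\mathcal{T}_\tau$ via macrotile representation $R$ at scale $m$. Pick simulator supertiles $\ta', \tb' \in \prodasm{\mathcal{U}}$ with $\tilde{R}(\ta') = \ta$ and $\tilde{R}(\tb') = \tb$, and also pick, for each proper subset $S$, a simulator supertile $\tb'_S$ with $\tilde{R}(\tb'_S) = \tb_S$; these exist by Definition~\ref{scott-defn:alt-equiv-prod}. By the strong dynamic constraints of simulation (Definitions~\ref{scott-defn:alt-equiv-dyanmic-s-to-t-weak} and~\ref{scott-defn:alt-equiv-dynamic-t-to-s}), $\ta'$ must combine with some representative of $\tb$ but cannot combine with any representative of $\tb_S$ for $S \subsetneq \{1,\ldots,\tau\}$, since those simulated combinations are forbidden in $\mathcal{T}_\tau$.

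The heart of the argument is a combinatorial window/swap step. Fix the combination $\tg' \in C^{\tau'}_{\ta',\tb'}$ in $\mathcal{U}$ and look at the interface between $\ta'$ and $\tb'$. Because the bars in $\mathcal{T}_\tau$ are long and the $\tau$ cooperative glue positions are separated by stretches of ``neutral'' simulated geometry, the interface naturally decomposes into $\tau$ disjoint interface regions $I_1,\ldots,I_\tau$, one per simulated cooperative glue, separated by regions that are forced (by clean mapping, condition~\ref{scott-defn:simulate:equiv_prod_b} of Definition~\ref{scott-defn:alt-equiv-prod}) to be strength-free cuts. The total interface bond strength is at least $\tau'$, but since there are $\tau > \tau'$ such regions, pigeonhole produces an index $i^\star$ for which $I_{i^\star}$ contributes zero strength to the interface cut. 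Now splice: replace the block column of $\tb'$ covering position $i^\star$ with the corresponding column from $\tb'_{\{1,\ldots,\tau\}\setminus\{i^\star\}}$. The resulting simulator supertile $\tb''$ has the same interface strength against $\ta'$ (the $I_{i^\star}$ contribution was zero anyway), so $\ta'$ still combines with $\tb''$ in $\mathcal{U}$; but $\tilde{R}(\tb'') = \tb_{\{1,\ldots,\tau\}\setminus\{i^\star\}}$, yielding a combination in $\mathcal{U}$ that represents a forbidden combination in $\mathcal{T}_\tau$, contradicting $\mathcal{U} \Leftrightarrow_R \mathcal{T}_\tau$.

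The main obstacle will be executing the splice so that $\tb''$ is actually producible in $\mathcal{U}$ rather than just stable. This is where the careful design of $\mathcal{T}_\tau$ pays off: the variant bars $\tb_S$ must be made from localized gadgets so that replacing the glue-$i^\star$ region of $\tb'$ by the corresponding region of $\tb'_{S'}$ yields a supertile whose producibility can be justified by concatenating the assembly sequences of $\tb'$ and $\tb'_{S'}$ on the respective sides of the splice. A secondary obstacle is justifying the decomposition of the interface into $\tau$ independent contribution regions with clean cuts between them; I would handle this by padding the simulated bars with long, glue-free spacer tiles (or tiles whose only strong glues point internally) between the $\tau$ cooperative positions, which the clean-mapping condition forces the simulator's macrotile interface to reflect as bondless cut lines. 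With these two pieces in place, the pigeonhole/swap step goes through and contradicts simulation whenever $\tau' < \tau$.
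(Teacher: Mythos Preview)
Your plan has the right high-level shape---exhibit a system whose combination event requires more cooperation than a low-temperature simulator can enforce---but the core mechanics contain genuine gaps that the paper's proof avoids entirely.

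First, the splice step is the fatal problem, and you correctly flag it but do not solve it. Replacing a block column of $\tb'$ by the corresponding column of $\tb'_{S'}$ does not yield a supertile known to be in $\prodasm{\mathcal{U}}$; ``concatenating assembly sequences on the respective sides of the splice'' is not a valid 2HAM operation, because every step of a 2HAM assembly sequence must be a $\tau'$-stable combination of two producible supertiles, and there is no reason the two halves of your hybrid are themselves producible or stably attachable. Second, your pigeonhole step is simply false as stated: from $\tau$ regions with total interface strength at least $\tau'$ you cannot conclude that some region contributes zero (e.g.\ each region could contribute strength~1). Third, the ``clean cuts'' claim---that spacer tiles in $\mathcal{T}_\tau$ force strength-free cut lines in the simulator interface---does not follow from clean mapping; the simulator is free to place positive-strength glues anywhere along the interface, including in fuzz regions that map to empty space.

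The paper sidesteps all three issues by never splicing and never decomposing the interface. Instead it builds many half-ladders of height $h$ with $\tau$ rungs at arbitrary positions, giving $\binom{h}{\tau}$ distinct producible supertiles, each binding to its mirror-image mate. In the simulator, any combinable simulator half-ladder must bind its mate via some set of $\tau'$ rungs; there are only $\binom{h}{\tau'}$ choices of rung positions and only $|U|^{O(m^2 \tau')}$ ways to tile those rung neighborhoods. Taking $h$ large enough, pigeonhole yields two \emph{already producible} simulator half-ladders $\hat{l}_1,\hat{l}_2$ whose $\tau'$ binding rungs (and surrounding fuzz) are tiled identically. Then the mate $\hat{r}$ of $\hat{l}_1$ is automatically a mate of $\hat{l}_2$---no hybrid needs to be manufactured---yet $\tilde{R}(\hat{l}_2)$ and $\tilde{R}(\hat{r})$ share at most $\tau-1$ rungs and cannot combine in $\mathcal{T}$, contradicting Definition~\ref{scott-defn:alt-equiv-dynamic-t-to-s}. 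The key idea you are missing is to make the family of ``variant bars'' large enough (parameterized by rung \emph{positions}, not just a subset of $\{1,\ldots,\tau\}$) that a counting argument over simulator tilings forces two of them to collide, eliminating any need to construct a new simulator supertile by hand.
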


We first prove Theorem~\ref{thm:2HAM-is-not-IU}, which says that, for any claimed  2HAM simulator $\mathcal{U}$, that runs at temperature $\tau' $, there exists a  2HAM system, with  temperature $\tau > \tau' $, that cannot be simulated by $\mathcal{U}$. We we use this as the main tool to prove Theorem~\ref{thm:no-2HAM-IU-tile-set}, a restatement of Theorem~\ref{thm:2HAM-is-not-IU-general}; our main result.

\begin{theorem}\label{thm:2HAM-is-not-IU}

Let $\tau \in \mathbb{N}, \tau \geq 2$.  For every tile set $U$, there exists a 2HAM TAS $\mathcal{T} = (T, S, \tau)$ such that for any initial configuration $S_{\mathcal{T}}$ over $U$ and $\tau' \leq \tau-1$, the 2HAM TAS $\mathcal{U} = \left(U,S_{\mathcal{T}},\tau'\right)$ does not simulate $\mathcal{T}$.

\end{theorem}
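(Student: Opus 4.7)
\emph{Construction.} I would construct a temperature-$\tau$ 2HAM TAS $\mathcal{T} = (T, S, \tau)$ with the following producible supertiles. There is a single ``top'' supertile $\ta$, consisting of a long horizontal spine with $\tau$ downward-pointing single-tile fingers at widely-spaced $x$-positions $x_1, \dots, x_\tau$, each tipped with a strength-$1$ glue $g$. There is also a family of ``bottom'' supertiles $\tb_I$ indexed by $I \subseteq \{1,\dots,\tau\}$, each a horizontal spine with $\tau$ upward-pointing fingers where the $i$-th finger carries $g$ iff $i \in I$; each $\tb_I$ is producible via an independent nondeterministic local tile choice at each finger site. At temperature $\tau$, $\ta$ attaches to $\tb_I$ iff $I = \{1,\dots,\tau\}$, yielding the unique combined assembly $\tg$, since this is the only way to accumulate $\tau$ cooperating strength-$1$ glues across the finger-to-finger interface.

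\emph{Simulator contradiction via pigeonhole.} Suppose for contradiction that $\mathcal{U} = (U, S_\mathcal{T}, \tau')$ simulates $\mathcal{T}$ at scale $m$ via representation $R$, with $\tau' \leq \tau - 1$. By the equivalent-productions and follows clauses of Definition~\ref{scott-defn:alt-simulate}, $\mathcal{U}$ has producible $\ta'$ representing $\ta$ and, for each $I$, some $\tb'_I$ representing $\tb_I$, and in particular $\ta' + \tb'_{\{1,\dots,\tau\}} \to \tg'$ in $\mathcal{U}$ via a single $\tau'$-stable attachment. Let $W$ denote the total strength of simulator bonds across this interface; then $W \geq \tau'$. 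Let $C_1,\dots,C_\tau$ be the macrotile columns representing the $\tau$ finger positions and $w_i$ the interface strength inside $C_i$, so $\sum_i w_i \leq W$. A short case analysis finds an index $j^* \in \{1,\dots,\tau\}$ with $W - w_{j^*} \geq \tau'$: if $\sum_i w_i < \tau$, integrality forces some $w_{j^*}=0$, giving $W - w_{j^*} = W \geq \tau'$; otherwise $\sum_i w_i \geq \tau$, so $\min_i w_i \leq W/\tau$, and $W \geq \tau$ with $\tau' \leq \tau - 1$ yields $W - W/\tau = W(\tau-1)/\tau \geq \tau - 1 \geq \tau'$.

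\emph{Splicing to a forbidden attachment.} The finishing move is to produce a simulator supertile $\tb''$ that agrees with $\tb'_{\{1,\dots,\tau\}}$ outside the column $C_{j^*}$ and with $\tb'_{\{1,\dots,\tau\}\setminus\{j^*\}}$ inside $C_{j^*}$; then $\tb''$ represents $\tb_{\{1,\dots,\tau\}\setminus\{j^*\}}$, while its interface to $\ta'$ still carries strength at least $W - w_{j^*} \geq \tau'$, so $\ta' + \tb''$ attaches in $\mathcal{U}$. Under $R$ this combined supertile represents $\ta \cup \tb_{\{1,\dots,\tau\}\setminus\{j^*\}}$, which is \emph{not} $\mathcal{T}$-producible, contradicting clause~(\ref{scott-defn:simulate:equiv_prod_a}) of Definition~\ref{scott-defn:alt-equiv-prod}. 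The main obstacle is justifying that $\tb''$ is indeed producible in $\mathcal{U}$: a priori, $\mathcal{U}$ could assemble the various $\tb'_I$ via globally entangled configurations, so that no such column-splice need appear among its producible supertiles. I plan to overcome this by refining the construction of $\mathcal{T}$ so that each finger choice must be realized by an independent local assembly event (for example, the spine assembles first as a common base, and each finger then attaches independently from a small local supertile). Together with the follows condition on simulation, this locality should force $\mathcal{U}$ to support analogous independent finger-by-finger assembly, enabling the desired splice and closing the argument. Theorem~\ref{thm:no-2HAM-IU-tile-set} then follows by applying this result to any candidate universal tile set $U$ at any finite $\tau'$ and choosing $\tau > \tau'$.
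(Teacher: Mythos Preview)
Your proposal has a genuine gap, and the fix you suggest does not close it.

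The splicing step is the problem. You want to produce a simulator supertile $\tb''$ that agrees with $\tb'_{\{1,\dots,\tau\}}$ outside column $C_{j^*}$ and with $\tb'_{\{1,\dots,\tau\}\setminus\{j^*\}}$ inside $C_{j^*}$. But nothing in the simulation definitions forces such a hybrid to lie in $\prodasm{\mathcal{U}}$. Your proposed remedy---engineer $\mathcal{T}$ so that each finger attaches via an independent local event, and then invoke the \emph{follows} condition---does not help: Definition~\ref{scott-defn:alt-equiv-dynamic-t-to-s} constrains what combinations $\mathcal{U}$ may perform, and Definition~\ref{scott-defn:alt-equiv-dyanmic-s-to-t-weak} guarantees that each $\mathcal{T}$-step can be mimicked, but neither constrains the \emph{internal structure} of the representing supertiles. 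The simulator is free to represent $\tb_I$ and $\tb_{I'}$ by supertiles that differ in every macrotile block, not just in the columns where $I$ and $I'$ differ, so a column-wise splice need not be producible. Locality in $\mathcal{T}$ does not propagate to locality in $\mathcal{U}$.

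The paper's proof avoids splicing altogether. Rather than a fixed comb with $\tau$ fingers, it builds \emph{half-ladders} of height $h$, where $h$ is a free parameter chosen (after $U$, $m$, $\tau'$ are fixed) to be enormous. There are $\binom{h}{\tau}$ left half-ladders, each of which must have a combinable simulator representative binding its mate via some $\tau'$-subset $X$ of rung positions; pigeonhole over the $\binom{h}{\tau'}$ choices of $X$, and then over the at most $|U|^{4m^2(\tau-1)}$ ways to tile the $\tau'$ rung neighborhoods, yields two \emph{already-producible} simulator half-ladders $\hat{\tilde{l}}_1,\hat{\tilde{l}}_2$ whose rungs at the positions in $X$ are tiled identically. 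Then the mate of $\hat{\tilde{l}}_1$ is automatically a mate of $\hat{\tilde{l}}_2$---no hybrid needs to be manufactured---while the underlying half-ladders $\tilde{l}_2$ and $\tilde{r}$ share only $\tau' < \tau$ rung positions and hence do not combine in $\mathcal{T}$, contradicting Definition~\ref{scott-defn:alt-equiv-dynamic-t-to-s}. The essential idea you are missing is to introduce a large combinatorial parameter into $\mathcal{T}$ (here, $h$) so that pigeonhole can be applied to genuinely producible simulator supertiles, rather than trying to surgically construct one.
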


\textbf{The basic idea} of the proof of Theorem~\ref{thm:2HAM-is-not-IU} is to use Definitions~\ref{scott-defn:alt-equiv-dyanmic-s-to-t-weak} and~\ref{scott-defn:alt-equiv-prod} in order to exhibit two producible supertiles in $\mathcal{T}$, that do not combine in $\mathcal{T}$ because of a lack of total binding strength, and show that the supertiles that simulate them in $\mathcal{U}$ do combine in the (lower temperature) simulator $\mathcal{U}$. Then we argue that Definition~\ref{scott-defn:alt-equiv-dynamic-t-to-s} says that, because the simulating supertiles can combine in the simulator $\mathcal{U}$, then so too can the supertiles being simulated in the simulated system $\mathcal{T}$, which contradicts the fact that the two originally chosen supertiles from  $\mathcal{T}$ do not combine in $\mathcal{T}$.

\begin{proof}
Our proof is by contradiction. Therefore, suppose, for the sake of obtaining a contradiction, that there exists a universal tile set $U$ such that, for any 2HAM TAS $\mathcal{T} = (T,S,\tau)$, there exists an initial configuration $S_{\mathcal{T}}$ and $\tau'\leq \tau-1$, such that $\mathcal{U} = \left(U, S_{\mathcal{T}}, \tau'\right)$ simulates $\mathcal{T}$. Define $\mathcal{T} = (T, \tau)$ where $T$ is the tile set defined in Figure~\ref{fig:ladders-tile-set}, the default initial state is used, and $\tau > 1$.  Let $\mathcal{U} = \left(U, S_{\mathcal{T}}, \tau'\right)$ be the temperature $\tau'\leq \tau-1$ 2HAM system, which uses tile set $U$ and initial configuration $S_{\mathcal{T}}$ (depending on $\mathcal{T}$) to simulate $\calT$ at scale factor $m$. Let $\tilde{R}$ denote the assembly replacement function that testifies to the fact that $\mathcal{U}$ simulates $\mathcal{T}$.

\begin{figure}[htp]
\begin{center}
\includegraphics[width=4.5in]{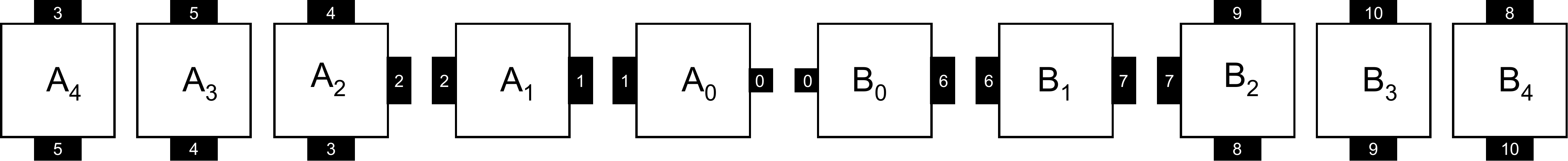}
\caption{The tile set for the proof of Theorem~\ref{thm:2HAM-is-not-IU}.  Black rectangles represent strength-$\tau$ glues (labeled 1-8), and black squares represent the strength-1 glue (labeled 0).}
\label{fig:ladders-tile-set}
\end{center}
\end{figure}

We say that a supertile $\tilde{l} \in \prodasm{\mathcal{T}}$ is a \emph{left half-ladder} of height $h \in \mathbb{N}$ if it contains $h$ tiles of the type A2 and $h-1$ tiles of type A3, arranged in a vertical column, plus $\tau$ tiles of each of the types A1 and A0. (An example of a left half-ladder is shown on the left in Figure~\ref{fig:ladders}. The dotted lines show positions at which tiles of type A1 and A0 could potentially attach, but since a half-ladder has exactly $\tau$ of each, only $\tau$ such locations have tiles.) Essentially, a left half-ladder consists of a single-tile-wide vertical column of height $2h-1$ with an A2 tile at the bottom and top, and those in between alternating between A3 and A2 tiles. To the east of exactly $\tau$ of the A2 tiles, an A1 tile is attached and to the east of each A1 tile, an A0 tile type is attached. These A1-A0 pairs, collectively, form the $\tau$ \emph{rungs} of the left half-ladder. We can define \emph{right} half-ladders similarly. A \emph{right half-ladder} of height $h$ is defined exactly the same way but using the tile types B3, B2, B1, and B0 and with rungs growing to the left of the vertical column. The east glue of A0 is a strength-$1$ glue matching the west glue of B0.

\begin{figure}[htp]
\begin{center}
\ifabstract
\includegraphics[width=1.8in]{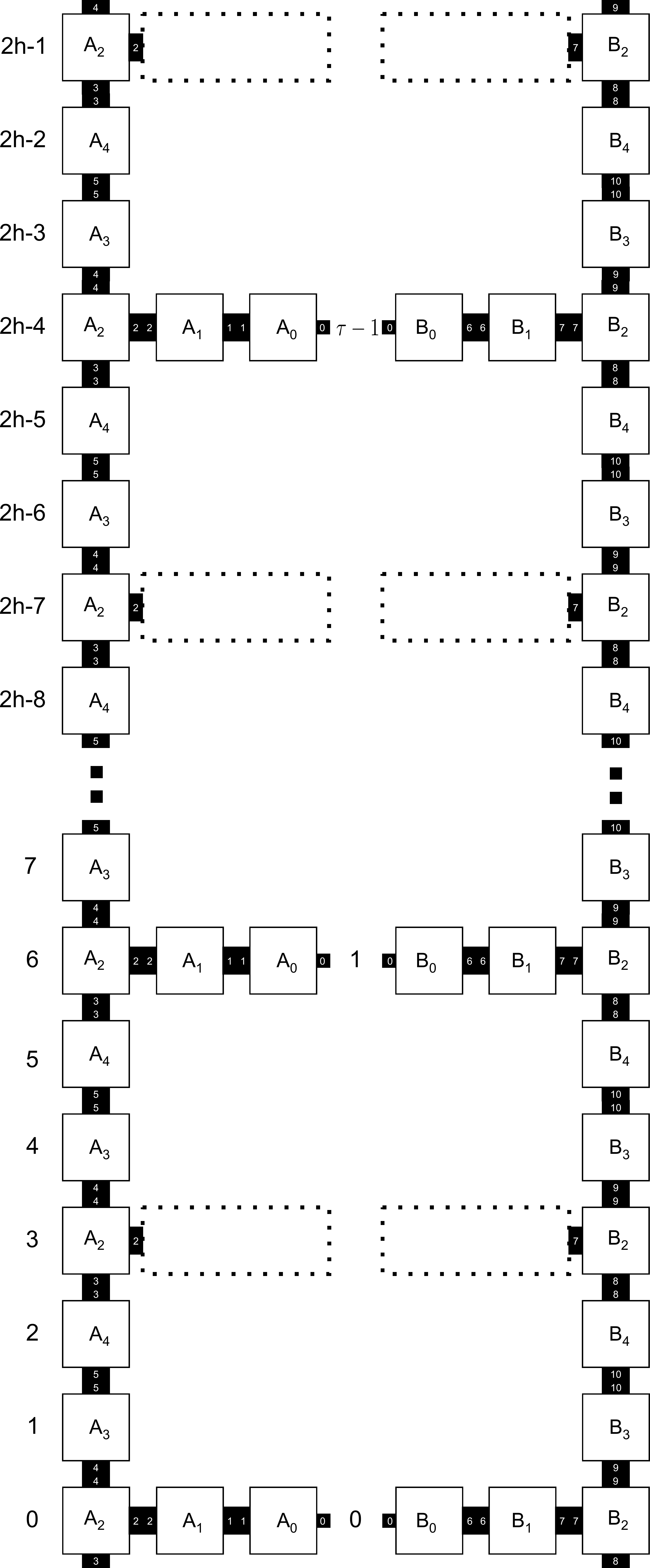}
\else
\includegraphics[width=2in]{images/ladders}
\fi
\caption{Example half-ladders with $\tau$ rungs.}
\label{fig:ladders}
\end{center}
\end{figure}

Let $LEFT \subseteq \prodasm{T}$ and $RIGHT \subseteq \prodasm{T}$ be the set of all left and right half-ladders of height $h$, respectively. Note that there are $\binom{h}{\tau}$ half-ladders of height $h$ in $LEFT$ ($RIGHT$). Define, for each $\tilde{l} \in LEFT$, the \emph{mirror image} of $\tilde{l}$ as the supertile $\bar{\tilde{l}} \in RIGHT$ such that $\bar{\tilde{l}}$ has rungs at the same positions as $\tilde{l}$.

For some $\tilde{l} \in LEFT$, we say that $\tilde{\hat{l}} \in \prodasm{U}$ is a \emph{simulator} left half-ladder of height $h$ if $\tilde{R}\left( \tilde{\hat{l}}\right) = \tilde{l}$. Note that $\tilde{\hat{l}}$ need not be unique. (One could even imagine $\tilde{\hat{l}}$ and $\tilde{\hat{l}}'$ satisfying $\tilde{R}\left(\tilde{\hat{l}}\right) = \tilde{l}$ and $\tilde{R}\left(\tilde{\hat{l}}'\right) = \tilde{l}$ but $\tilde{\hat{l}}$ and $\tilde{\hat{l}}'$ only differ by a single tile!) The notation $C^{\tau}_{\tilde{\alpha},\tilde{\beta}}$ is defined as the set of all supertiles that result in the $\tau$-stable combination of the supertiles $\tilde{\alpha}$ and $\tilde{\beta}$.

For some $\tilde{\hat{r}} \in \prodasm{U}$, we say that $\tilde{\hat{r}}$ is a \emph{mate} of $\tilde{\hat{l}}$ if $\tilde{R}\left(\tilde{\hat{r}}\right) = \tilde{r} \in RIGHT$, where $\tilde{r} = \bar{\tilde{l}}$, $C^{\tau}_{\tilde{l},\tilde{r}} \ne \emptyset$ (they combine in $\mathcal{T}$), and $C^{\tau-1}_{\hat{\tilde{l}},\hat{\tilde{r}}} \ne \emptyset$ (they combine in $\mathcal{U}$). For a simulator left half-ladder $\tilde{\hat{l}}$, we say that $\tilde{\hat{l}}$ is \emph{combinable} if $\tilde{\hat{l}}$ has a mate. Part~\ref{scott-defn:alt-weak-simulate} of Definition~\ref{scott-defn:alt-simulate} guarantees the existence of at least one combinable simulator left half-ladder for each left half-ladder. It is easy to see from Part~\ref{scott-defn:alt-weak-simulate} of Definition~\ref{scott-defn:alt-simulate} that an arbitrary simulator left half-ladder need not be combinable, since by Definition~\ref{scott-defn:alt-equiv-dyanmic-s-to-t-weak}, it may be a half-ladder $\tilde{\hat{l}} \in \prodasm{\mathcal{U}}$, which must first ``grow into'' a combinable left half-ladder $\tilde{\hat{l}}'$ (analogous to $\ta' \rightarrow_{\mathcal{U}} \ta''$ in Definition~\ref{scott-defn:alt-equiv-dyanmic-s-to-t-weak}).

Denote as $LEFT'$ some set that contains exactly one combinable simulator left half-ladder for each $\tilde{l} \in LEFT$. Note that, by Definitions~\ref{scott-defn:alt-equiv-prod} and~\ref{scott-defn:alt-equiv-dyanmic-s-to-t-weak}, there must be at least one combinable simulator left half-ladder $\tilde{\hat{l}}$ for each $\tilde{l}$, but that there also may be more than one, so the set $LEFT'$, while certainly not empty, need not be unique. By the definition of $LEFT'$, it is easy to see that $\left|LEFT'\right| = \binom{h}{\tau}$.  We know that each combinable simulator left half-ladder $\tilde{\hat{l}}$ has exactly $ \tau$ rungs, and furthermore, since glue strengths in the 2HAM cannot be fractional, it is the case that $\tau'$ of these rungs bind to (the corresponding rungs of) a mate with a combined total strength of at least $\tau'$.
(Note that some, but not all, of these $\tau'$ rungs may be redundant in the sense that they do not interact with positive strength.)

There are $\binom{h}{\tau'}$ ways to position/choose $\tau'$ rungs on a (simulator) half-ladder of height $h$. (Note that a rung on a simulator half-ladder need not be a $m\times m$ block of tiles but merely a collection of rung-like blocks that map to rungs in the input system $\mathcal{T}$ via $\tilde{R}$.)  Now consider the size $\binom{h}{\tau'}$ set of all possible rung positions, each denoted by a subset $X \subset \{0, 1, \ldots, h-1\}$, and the size $\binom{h}{\tau}$ set $LEFT'$.  For each simulated half-ladder $\tilde{\hat{l}} \in LEFT'$, there must exist a set of $\tau'$ rungs $X$ such that $\tilde{\hat{l}}$ binds to a mate via the rungs specified by $X$, with total strength at least $\tau'$.  As there are $\binom{h}{\tau}$ elements of $LEFT'$ and only $\binom{h}{\tau'}$ choices for $X$, the Generalized Pigeonhole Principle implies that there must be some set $LEFT'' \subset LEFT'$ with  $\left| LEFT'' \right| \geq {\binom{h}{\tau}}/{\binom{h}{\tau'}}$ such that every simulator left half-ladder in $LEFT''$ binds to a mate via the $\tau'
$ rungs specified by a single choice of $X$, with total strength at least $\tau'$.  In the case that $h \geq 2\tau$, we have that $\left| LEFT'' \right| \geq {\binom{h}{\tau}}/{\binom{h}{\tau'}} \geq {\binom{h}{\tau}}/{\binom{h}{\tau-1}} = \frac{h-\tau+1}{\tau}$.

Let $k = |U|^{4m^2}$, which is the number of ways to tile a neighborhood of four $m \times m$ squares from a set of $|U|$ distinct tile types. If $h = \tau\left(k^{\tau-1} + \tau\right)$, then $\left| LEFT'' \right| \geq k^{\tau-1} + 1$.
There are $k^{\tau'} \leq k^{\tau-1}$ ways to tile $\tau'$ neighborhoods that map to tiles of type A0  (plus any additional simulator fuzz that connects to simulated A0 tiles), under $\tilde{R}$, at the ends of the $\tau'$ rungs of a simulator left half-ladder.
This tells us that there are at least two (combinable) simulator left half-ladders $\tilde{\hat{l}}_1,\tilde{\hat{l}}_2 \in LEFT''$ such that $\tilde{\hat{l}}_1$ binds to a mate via the rungs specified by $X$, with total strength at least $\tau'$, $\tilde{\hat{l}}_2$ binds to a mate via the rungs specified by $X$, with total strength at least $\tau'$ and the rungs (along with any surrounding fuzz) specified by $X$ of $\tilde{\hat{l}}_1$ are tiled exactly the same as the rungs specified by $X$ of $\tilde{\hat{l}}_2$ are tiled. Thus, we can conclude that $\tilde{\hat{r}}$, a mate of $\tilde{\hat{l}}_1$, is a mate of $\tilde{\hat{l}}_2$. We can conclude this because, while $\tilde{\hat{l}}_1$ and $\tilde{\hat{l}}_2$ agree exactly along $\tau'$ of their rungs, they also each have one rung in a unique position and since consecutive rungs in $\mathcal{T}$ have at least two empty spaces between then, the offset simulator rungs (and even their fuzz) cannot prevent $\tilde{\hat{l}}_2$ from matching up with the mate of $\tilde{\hat{l}}_1$.

However, $\tilde{R}\left(\tilde{\hat{r}}\right) = \tilde{r} \in R$, $\tilde{R}\left(\tilde{\hat{l}}_2\right) = \tilde{l}_2 \in L$ but $C^{\tau}_{\tilde{r},\tilde{l}_2} = \emptyset$ because $\tilde{r}$ and $\tilde{l}_2$ differ from each other in one rung location and therefore interact in $\mathcal{T}$ with total strength at most $\tau-1$. This is a contradiction to Definition~\ref{scott-defn:alt-equiv-dynamic-t-to-s}, which implies $C^{\tau}_{\tilde{r},\tilde{l}_2} \ne \emptyset$.
 \end{proof}

We now have the main tool needed to prove our main Theorem~\ref{thm:2HAM-is-not-IU-general}, which we restate as follows.

\begin{theorem}\label{thm:no-2HAM-IU-tile-set}
There is no universal tile set $U$ for the 2HAM, i.e., there is no~$U$ such that, for all 2HAM tile assembly systems $\mathcal{T} = (T,S,\tau)$, there exists an initial configuration $S_{\mathcal{T}}$ over $U$ and temperature $\tau'$ such that $\mathcal{U} = \left(U,S_{\mathcal{T}},\tau'\right)$ simulates~$\mathcal{T}$.
\end{theorem}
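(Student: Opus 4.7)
My plan is to obtain Theorem~\ref{thm:no-2HAM-IU-tile-set} as a one-line corollary of the already-proved Theorem~\ref{thm:2HAM-is-not-IU}. The formal notion of ``universal tile set'' given in Section~\ref{sec:defsSim} equips $U$ with a single, fixed simulator temperature $\tau' \in \mathbb{Z}^+$; the initial configuration $S_\mathcal{T}$ is allowed to depend on the system $\mathcal{T}$ being simulated, but $\tau'$ is not. So the statement I need is: for every candidate universal pair $(U,\tau')$ there exists a particular 2HAM system $\mathcal{T}$ such that no choice of $S_\mathcal{T}$ makes $(U, S_\mathcal{T}, \tau')$ simulate $\mathcal{T}$.

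The argument is by contradiction. Suppose such a universal $(U, \tau')$ exists. I would then apply Theorem~\ref{thm:2HAM-is-not-IU} with the temperature parameter set to $\tau := \tau'+1 \geq 2$. That theorem produces a concrete 2HAM system $\mathcal{T} = (T,S,\tau)$, built from the half-ladder tile set of Figure~\ref{fig:ladders-tile-set}, with the property that for every initial configuration $S_\mathcal{T}$ over $U$ and every temperature $\tau'' \leq \tau - 1 = \tau'$, the system $(U,S_\mathcal{T},\tau'')$ does \emph{not} simulate $\mathcal{T}$. Specializing $\tau'' = \tau'$ contradicts the assumed universality of $(U,\tau')$ directly, proving the theorem.

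I do not expect any real obstacle: all of the combinatorial content, namely the pigeonhole argument over rung patterns of the half-ladder supertiles and the clean-mapping accounting of simulator fuzz, is already done inside the proof of Theorem~\ref{thm:2HAM-is-not-IU}; the present step is purely a repackaging of quantifiers. The only mild subtlety worth flagging in the write-up is the phrasing of Theorem~\ref{thm:no-2HAM-IU-tile-set}, in which ``there exists an initial configuration $S_{\mathcal{T}}$ over $U$ and temperature $\tau'$'' could be misread as allowing $\tau'$ to depend on $\mathcal{T}$. I would resolve this by stating up front that the notion of universality in play is the one from Section~\ref{sec:defsSim}, where $\tau'$ is a single constant chosen together with $U$; under that reading the above two-step reduction is complete.
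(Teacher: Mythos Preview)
Your reduction is correct only under the weaker reading of the statement, and the paper explicitly proves the stronger one. The quantifier order in Theorem~\ref{thm:no-2HAM-IU-tile-set} is $\neg\exists U\ \forall\mathcal{T}\ \exists S_{\mathcal{T}},\tau'\,(\ldots)$, so the simulator temperature $\tau'$ is allowed to depend on the target system $\mathcal{T}$. Your proposed fix---declaring that $\tau'$ is chosen once together with $U$ as in the formal definition of intrinsic universality---changes the statement being proved rather than proving it; indeed the paper intends the stronger reading and its proof actually establishes it.

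The missing idea is short but essential: a \emph{finite} tile set $U$ has a maximum glue strength, say $g$. One then takes the half-ladder system at temperature $4g+1$ (i.e., the system from the proof of Theorem~\ref{thm:2HAM-is-not-IU} with all $\tau$-strength glues replaced by strength $4g+1$). Now, for this one fixed $\mathcal{T}$, every choice of $\tau'$ fails: if $\tau' \le 4g$, Theorem~\ref{thm:2HAM-is-not-IU} rules it out exactly as you say; if $\tau' \ge 4g+1$, then no supertile of size $>1$ over $U$ is $\tau'$-stable (any cut separates a single tile whose total abutting glue strength is at most $4g$), so the simulator produces only singleton tiles and cannot represent nontrivial assemblies of $\mathcal{T}$. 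This second branch is precisely what your argument lacks, and without it one cannot conclude against a would-be simulator that is free to raise its temperature with $\mathcal{T}$.
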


\begin{proof}
Our proof is by contradiction, so assume that $U$ is a universal tile set. Denote as $g$ the strength of the strongest glue on any tile type in $U$. Let $\mathcal{T}' = (T',4g+1)$ be a modified version of the TAS $\mathcal{T} = (T,\tau)$ from the proof of Theorem~\ref{thm:2HAM-is-not-IU} with each $\tau$-strength glue in $T$ converted to a strength $4g+1$ glue in $T'$ (all other glues and labels are unmodified). By the proof of Theorem~\ref{thm:2HAM-is-not-IU}, we know that for any initial configuration $S_{\mathcal{T}}$ over $U$, $\mathcal{U} = \left(U,S_{\mathcal{T}},\tau'\right)$ does not simulate $\mathcal{T}$ for any $\tau' < 4g+1$. If $\tau' \geq 4g+1$, then the size of the largest supertile in $\mathcal{A}[\mathcal{U}]$ is 1 (since $g$ is the maximum glue strength in $U$, the supertiles in the initial state (input) $S_{\mathcal{T}}$ are not $\tau'$-stable and indeed no tile can bind to any assembly with strength $\geq 4g +1$), whence $U$ is not a universal tile set.%
\end{proof}

\section{The temperature-$\tau$ 2HAM is intrinsically universal}\label{sec:simulationsOverview}

In this section we state our second main result, which states that for fixed temperature $\tau \geq 2$ the class of 2HAM systems at temperature $\tau$  is intrinsically universal. In other words,  for such $\tau$  there is a tile set that, when appropriately initialized, simulates any temperature $\tau$ 2HAM system. Denote as $\textrm{2HAM}(k)$ the set of all 2HAM systems at temperature $k$.
\begin{theorem}\label{sec:secondMainResult}
  For all $ \tau \geq 2  $, 2HAM($\tau$) is intrinsically universal.
\end{theorem}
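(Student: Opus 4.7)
The plan is to construct, for each fixed $\tau \geq 2$, a single tile set $U = U_\tau$ that operates at temperature $\tau$ and, given an appropriate initial configuration $\mathcal{S}(\mathcal{T})$ and representation $\mathcal{R}(\mathcal{T})$, simulates any temperature-$\tau$ 2HAM system $\mathcal{T} = (T,S,\tau)$ at some scale factor $m = m(T)$. The key leverage the simulator now has, which it lacked in the proof of Theorem~\ref{thm:2HAM-is-not-IU}, is that it operates at the \emph{same} temperature $\tau$ as the simulated system, so that each strength-$s$ simulated glue can be encoded by $s$ cooperating strength-$1$ positions on the interface between adjacent macrotiles, with the simulator's binding threshold matching exactly.

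First I would define the macrotile representation. For each tile type $t \in T$, designate an $m \times m$ block with a rigid interior plus four boundary regions (one per side) that encode the label and strength of $t$'s corresponding glue. Each label is written in binary along a side using a small alphabet of $U$-tiles, and each side exposes exactly $s$ strength-$1$ ``teeth'' where $s$ is the strength of the glue being simulated. Label matching is enforced by standard ``zipper'' or ``comb'' constructions so that teeth from two abutting macrotiles mate (contributing their full strength) precisely when their label encodings match, and contribute zero otherwise. The scale factor $m$ is chosen large enough to fit the longest binary label of $T$, the tile-type identifier, and a constant amount of inert framing.

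Second, I would design $\mathcal{S}(\mathcal{T})$ so that for each $t \in T$ it contains a small collection of supertiles, each fitting inside one $m\times m$ square as required, whose only $\tau$-stable combinations within a single macrotile footprint assemble into the unique macrotile encoding $t$. These initial pieces are wired with internal strength-$\tau$ glues so that partial macrotiles cannot attach to other partial or complete macrotiles: every inter-macrotile combination in $\mathcal{U}$ occurs only after both sides have fully assembled. With this, I would verify the three conditions of Definition~\ref{scott-defn:alt-simulate}. Equivalent productions ($\mathcal{U} \Leftrightarrow_R \mathcal{T}$) follows by induction on assembly sequences: every $\hat{\ta} \in \prodasm{\mathcal{U}}$ cleanly decomposes into fully-formed macrotiles whose representation sits in $\prodasm{\mathcal{T}}$, and every $\ta \in \prodasm{\mathcal{T}}$ admits at least one preimage. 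The follows condition $\mathcal{T} \dashv_R \mathcal{U}$ holds because the only source of $\geq \tau$ binding strength between macrotile-assemblies is cooperation among matching simulated glues, so any one-step $\mathcal{U}$-combination projects to a valid $\mathcal{T}$-combination (or the identity, if it occurs inside a macrotile). For weak modeling $\mathcal{U} \models^-_R \mathcal{T}$, given $\ta \to^1_\mathcal{T} \tb$ and a preimage $\ta'$ of $\ta$, I would grow $\ta'$ along its boundary into a fully-formed configuration whose exposed teeth mate with a fully-formed preimage of the combining partner, yielding the required $\tb'$.

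The main obstacle will be controlling the non-local nature of 2HAM combinations: unlike the aTAM construction of \cite{IUSA}, where macrotiles grow out from a seeded neighbor and can query their local context, in the 2HAM two arbitrary macrotile-assemblies may try to combine along a long shared interface, and the construction must both (a) prevent \emph{spurious} combinations where total strength is pooled from mismatched labels or across positions where simulated steric hindrance should forbid attachment, and (b) prevent \emph{missing} combinations by ensuring that every macrotile-assembly can in fact be completed and mated up to any legal neighbor. Complete macrotiles before exposing their strength-$1$ teeth, and check labels tooth-by-tooth with a $\tau$-strength ``veto'' mechanism that cancels a tooth's contribution on any mismatch, so that the strength pooled across an interface is exactly the sum of strengths of matching simulated glues. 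The same construction adapts to strong simulation $\mathcal{U} \models^+_R \mathcal{T}$ by ensuring every preimage (not merely one) can grow to a mate; the trade-offs among scale, tile count, and initial-state complexity advertised in the introduction come from varying how much of the label-checking machinery is pre-encoded in $\mathcal{S}(\mathcal{T})$ versus self-assembled from $U$.
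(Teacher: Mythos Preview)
Your overall architecture matches the paper's: encode each tile type as an $m\times m$ macrotile, encode each simulated glue of strength $s$ by $s$ cooperating strength-$1$ positions along the appropriate side, and exploit the fact that the simulator runs at the \emph{same} temperature $\tau$ so that ``sum of matching simulated glue strengths $\geq \tau$'' translates directly to ``number of matching strength-$1$ contacts $\geq \tau$.'' You also correctly identify the hard part, namely controlling what happens at an interface where the simulated glues do \emph{not} match.

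The gap is in your proposed resolution of that hard part. A ``$\tau$-strength veto mechanism that cancels a tooth's contribution on any mismatch'' is not a well-defined operation in the 2HAM: there are no negative glues, and binding of two supertiles is a single all-at-once event, so there is no opportunity to sequentially ``check labels tooth-by-tooth'' and then decide whether to permit binding. The only tools available are (i) which strength-$1$ glues are exposed where, and (ii) geometry (steric hindrance). Once two fully-formed macrotile assemblies are translated into abutment, either they overlap somewhere (and cannot combine at all) or they do not (and the interaction strength is simply the sum of matching exposed glues). Your write-up does not say how, on a face where the simulated glues mismatch, the two macrotiles can still sit adjacent \emph{without overlap} while contributing zero strength, yet on a face where the simulated glues match the same geometry suddenly yields the correct strength. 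Getting both behaviors simultaneously is exactly the crux.

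The paper resolves this purely geometrically, and differently for the two notions of simulation. For strong simulation, each macrotile grows an \emph{arm} into the adjacent $m\times m$ region; the glue identity $g$ is mapped to a pair $(i,j)$ that fixes both the arm's position along the side and its length, with complementary lengths on opposite sides and with north/west arms staggered relative to south/east arms. Two arms representing the same glue meet at a bump/dent ``glue pad'' (encoding the label) adjacent to a run of $\mathrm{str}(g)$ strength-$1$ ``binding'' tiles; two arms representing different glues occupy disjoint locations and simply do not touch. Thus mismatches contribute zero strength and cause no overlap, with no veto needed. For (weak) simulation the paper instead uses detachable \emph{glue gadgets}: the body of the macrotile exposes a tile-type-specific bump pattern, a separate gadget carrying the glue-specific bump pattern snaps on, and on a mismatched face one of the two sides can simply omit its gadget so the faces abut cleanly. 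If you replace your ``veto'' language with one of these concrete geometric mechanisms, the rest of your outline goes through.
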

\ifabstract
In the full version of this paper we prove this theorem for two different, but seemingly natural notions of simulation.
\else
We prove this theorem for two different, but seemingly natural notions of simulation.
\fi
The first, simply called {\em simulation}, is where we require that when a simulated supertile $\ta$ may grow, via one 
attachment, into a second supertile $\tb$, then any simulator supertile that maps to $\ta$ must also grow into a simulator supertile that maps to $\tb$. The converse should also be true.
\ifabstract
\else
Results for {\em simulation}  are given in Section~\ref{sec:weakSimulation}.

\fi
The second notion, called {\em strong simulation}, is a stricter definition where  in addition to requiring that all supertiles mapping to $\ta$ must be capable of growing into a supertile mapping to $\tb$ when $\ta$ can grow into $\tb$ in the simulated system, we further require that this growth can take place by the attachment of $\emph{any}$ supertile mapping to $\tilde{\gamma}$, where $\tilde{\gamma}$ is the supertile that attaches to $\ta$ to get~$\tb$. \ifabstract
\else
Results for {\em strong simulation}  are given in Section~\ref{sec:strongSimulation}.

\fi
For each of the two notions of simulation we provide three results, and in all cases we provide lower scale factor for simulation relative to strong simulation.  \ifabstract  \else
Specifically, strong simulation achieves a modest linear scale factor simulation, but a compact single input assembly is sufficient to encode the entire simulated system.  In contrast, for simulation (i.e.\ not strong), we are able to achieve a logarithmic scale factor in the size of the simulated system.  However, such small scale requires that the simulated system be encoded in a larger (linear) number of input assemblies, in order to  describe the simulated system without loss of information.
\fi

\ifabstract
\else
For strong simulation, Theorems~\ref{thm:strongSim2},~\ref{thm:strongSim1}, and~\ref{thm:strongSim3} provide three different proofs of our main positive result (Theorem~\ref{sec:secondMainResult}) with each of the three providing different trade-offs between number tile types, scale factor, and complexity of initial configuration for the simulator. For simulation,
Theorems~\ref{thm:weak1},~\ref{thm:weak2} and~\ref{thm:weak3}, provide similar trade-offs.
\fi

When we combine our negative and positive results, we get a separation between classes of 2HAM tile systems based on their temperatures.

\begin{theorem}\label{thm:infinite_hierarchy}
There exists an infinite number of infinite hierarchies of 2HAM systems with strictly-increasing power (and temperature) that can simulate downward within their own hierarchy.
\end{theorem}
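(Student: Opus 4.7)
The plan is to build one hierarchy per prime $p$, using powers of $p$ as the temperatures. For each prime $p$, let $H_p$ denote the hierarchy of 2HAM systems whose temperatures lie in the increasing sequence $(p, p^2, p^3, \ldots)$. Distinctness of these infinitely many infinite hierarchies follows immediately from unique factorization: $\{p^k : k \geq 1\} \cap \{q^k : k \geq 1\} = \emptyset$ whenever $p \ne q$ are primes, so the collection $\{H_p\}_{p\text{ prime}}$ gives infinitely many pairwise-disjoint infinite hierarchies. It remains to show that within each $H_p$, higher temperatures downward-simulate lower temperatures, and that strictly more systems become expressible as we climb.

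For downward simulation, the key observation is a glue-scaling lemma: given any temperature-$p^j$ system $\mathcal{T}_j = (T_j, p^j)$ and any $k > j$ in $H_p$, I would construct a tile set $T_k$ from $T_j$ by multiplying every glue strength by the integer $c = p^{k-j}$ while leaving glue labels and tile geometry unchanged. Every edge in the binding graph of any assembly then has its weight multiplied by $c$, hence every cut does too, so an assembly is $p^k$-stable under $T_k$ if and only if it is $p^j$-stable under $T_j$. Consequently, $\mathcal{T}_k := (T_k, p^k)$ and $\mathcal{T}_j$ have identical producible supertiles, terminal supertiles, and assembly sequences; in particular $\mathcal{T}_k$ strongly simulates $\mathcal{T}_j$ at scale $1$ under the identity-on-tiles representation function. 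Thus level $k$ of $H_p$ captures every system at any lower level in the same hierarchy.

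For the strictly-increasing-power claim, I would invoke Theorem~\ref{thm:2HAM-is-not-IU} with $\tau = p^k$ for each $k \geq 2$: for every tile set $U$, there exists a temperature-$p^k$ 2HAM system that $U$ fails to simulate at any temperature $\tau' \leq p^k - 1$, in particular at $p^{k-1}$. Since every temperature-$p^{k-1}$ 2HAM system is of the form $(U, p^{k-1})$ for some tile set $U$, no temperature-$p^{k-1}$ system simulates that bad system at level $k$; combined with the previous paragraph, this places level $k$ strictly above level $k-1$ in simulation power. The argument is essentially a corollary of our main negative result together with the glue-scaling trick, so there is no serious obstacle; the only routine detail is verifying that scale-$1$ identity simulation meets Definition~\ref{scott-defn:alt-simulate}, which is immediate because $\mathcal{T}_k$ and $\mathcal{T}_j$ coincide as dynamical systems on $\Z^2$-assemblies.
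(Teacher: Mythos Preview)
Your approach matches the paper's: separation via Theorem~\ref{thm:2HAM-is-not-IU}, downward simulation via multiplying all glue strengths by the integer temperature ratio, and infinitely many hierarchies obtained by varying the base. Two differences are worth noting. First, you index hierarchies by primes, buying pairwise-disjoint hierarchies; the paper uses any base $c\geq 4$, so its hierarchies may nest. Second, you skip the paper's appeal to Theorem~\ref{sec:secondMainResult}: the paper couples glue scaling with intrinsic universality at each fixed temperature, so it obtains a \emph{single} universal tile set at level $c^i$ that simulates every system at every lower level $c^{i'}$, whereas your scale-$1$ argument only shows that each lower-level system is simulated by \emph{some} higher-level system. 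Either reading suffices for the theorem as stated, and yours is more self-contained since it does not need the positive construction. One caution: your step from ``for every tile set $U$ there exists a bad temperature-$p^k$ system'' (Theorem~\ref{thm:2HAM-is-not-IU} as stated) to ``no temperature-$p^{k-1}$ system simulates that bad system'' tacitly swaps quantifier order; this is justified because the proof of Theorem~\ref{thm:2HAM-is-not-IU} in fact constructs one fixed ladder system per~$\tau$ that defeats all lower-temperature simulators, but you should say so explicitly rather than leave the reader to reconcile the quantifiers.
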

\begin{proof}
Our first main result (Theorem~\ref{thm:2HAM-is-not-IU}) tells us that the temperature-$\tau$ 2HAM cannot be simulated by any temperature $\tau'  < \tau$ 2HAM.
Hence we have, for all $i > 0, c \geq 4$, $\textrm{2HAM}\left(c^i\right) \succ \textrm{2HAM}\left(c^{i - 1}\right)$, where $\succ$ is the relation ``cannot be simulated by''. Moreover, Theorem~\ref{sec:secondMainResult} tells us that temperature $\tau$ 2HAM is intrinsically universal for fixed temperature $\tau$. Suppose that $\tau' < \tau$ such that $ \tau / \tau'   \in \mathbb{N}$. Then the temperature-$\tau$ 2HAM can simulate temperature $\tau'$ (by simulating strength $g \leq \tau'$ attachments in the temperature $\tau'$ system with strength $g {\tau}/{\tau'}$ attachments in the temperature $\tau$ system). Thus, for all $0 < i' \leq i$, $\textrm{2HAM}\left(c^{i}\right)$ can simulate,  via Theorem~\ref{sec:secondMainResult}, $\textrm{2HAM}\left(c^{i'}\right)$. The theorem follows by noting that our choice of $c$ was arbitrary.
\end{proof}

We have shown that for each $\tau \geq 2$ there exists a single set of tile types~$U_\tau$, and a set of input supertiles over $U_\tau$, such that the 2HAM system strongly simulates any 2HAM TAS $\mathcal{T}$. A related question is:  does there exist a tile set that can simulate, or strongly simulate, all temperature $\tau$ 2HAM TASs simultaneously?  Surprisingly, the answer is yes!
\ifabstract
\else
The proof of the following theorem is given in Section~\ref{sec:simAll}.
\fi

\begin{theorem}\label{thm:ham-for-all-overview-version}
For each $\tau > 1$, there exists a 2HAM system $\mathcal{S} = (U_{\tau},\tau)$ which simultaneously strongly simulates all 2HAM systems $\mathcal{T} = (T,\tau)$.
\end{theorem}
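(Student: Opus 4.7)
The plan is to design $U_\tau$ so that, starting from the default initial state, $\mathcal{S} = (U_\tau, \tau)$ self-assembles an appropriate input seed for every temperature-$\tau$ 2HAM system and then independently simulates each one, with cross-interference prevented by an embedded system identifier. The construction is an extension of the intrinsic-universality tile set supplied by Theorem~\ref{sec:secondMainResult}.

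First I would enumerate the temperature-$\tau$ 2HAM systems as $\mathcal{T}_1, \mathcal{T}_2, \ldots$ via canonical finite binary encodings $\sigma_1, \sigma_2, \ldots$; this is possible since each $\mathcal{T}_i = (T_i,\tau)$ has a finite tile set with labels drawn from a fixed countable alphabet. Next, I would extend the universal tile set of Theorem~\ref{sec:secondMainResult} by three new components: \emph{description tiles} that self-assemble arbitrary finite binary strings (using a small set of ``0'', ``1'', ``start'', ``end'' tile types); \emph{parser tiles} that attach to a complete, well-formed description $\sigma_i$ and, cooperatively, construct from it exactly the compact input seed $S_{\mathcal{T}_i}$ that Theorem~\ref{sec:secondMainResult} would use to simulate $\mathcal{T}_i$, except with a ``system-ID signature'' derived from $\sigma_i$ written onto every outward-facing glue of the seed; and \emph{simulator tiles} that are a version of the universal set from Theorem~\ref{sec:secondMainResult} in which every macrotile boundary propagates the signature, so that any $\tau$-stable attachment forces the two macrotiles to have matching signatures. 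For each $i$, I let $R_i$ be the representation function supplied by Theorem~\ref{sec:secondMainResult} on macrotiles whose signature encodes $\sigma_i$, and undefined otherwise.

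The verification then has three parts, corresponding to the three clauses of strong simulation. Equivalent productions (Definition~\ref{scott-defn:alt-equiv-prod}) follows from the observations that the description-plus-parser components can self-assemble exactly the seed used in Theorem~\ref{sec:secondMainResult} for each $\mathcal{T}_i$, and that once such a seed exists the simulator tiles reproduce exactly the dynamics of the original universal construction, so the image under $R_i$ of the $i$-signed producibles equals $\prodasm{\mathcal{T}_i}$. The ``follows'' property (Definition~\ref{scott-defn:alt-equiv-dynamic-t-to-s}) and the strong-modeling property (Definition~\ref{scott-defn:alt-equiv-dyanmic-s-to-t-strong}) are inherited from the same properties in the Theorem~\ref{sec:secondMainResult} simulator, because signatures do not affect within-system dynamics, and because the signature protocol makes arbitrary pairs of $i$-signed macrotiles (regardless of their internal tiling) interchangeable at their boundaries, which is exactly what strong simulation demands.

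The main obstacle will be that every in-progress assembly draws from a common pool of tiles, so description tiles building $\sigma_i$ are physically identical to those building $\sigma_j$, and one must rule out ``hybrid'' behaviour across simulated systems. The fix is twofold. First, the signature must be \emph{locally verifiable}: by writing $\sigma_i$ (or a sufficiently long self-synchronising redundant encoding of it) onto every cooperative glue interface of each macrotile, a $\tau$-stable attachment is forced to certify full signature agreement, so simulator macrotiles produced for $\mathcal{T}_i$ cannot $\tau$-stably attach to those produced for $\mathcal{T}_j$ when $i\neq j$. Second, ``junk'' supertiles that arise from partial or hybrid descriptions (and hence carry no valid signature) must be handled: the parser is designed so that it only activates the simulator-tile interface upon reading a complete, syntactically valid $\sigma_i$, which keeps junk supertiles terminal; the few that remain can be absorbed into each $R_i$ by the standard device of mapping them to a fixed single-tile supertile of $\mathcal{T}_i$, which preserves all three simulation clauses because such supertiles are terminal and their ``representatives'' are producible single tiles. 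A related detail that must be pinned down carefully is arranging the signature length and the cooperativity pattern so that the parser can be triggered, and so that the single description tile types do not accidentally create an unbounded cascade of stable attachments before a full $\sigma$ is read off; this is handled by the usual macrotile-scale trick from Theorem~\ref{sec:secondMainResult}, namely giving the description and parser blocks enough interior area to implement a Turing-style read-and-check of $\sigma$ at temperature $\tau$.
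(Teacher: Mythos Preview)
Your proposal captures the paper's core idea: nondeterministically self-assemble a description of some temperature-$\tau$ system, then grow macrotiles carrying a system identifier so that macrotiles from distinct simulated systems cannot $\tau$-stably interact. The paper does exactly this, though more directly than you sketch. Four tile types assemble an arbitrary binary string~$r$; a tile-based Turing machine reads~$r$ and writes out the $r$th canonical temperature-$\tau$ tile set via an explicit enumeration procedure; one tile type from that set is nondeterministically selected; and the resulting macrotile grows four sides, each carrying~$r$ as a pattern of \emph{geometric} bumps and dents (this must be geometry, not glue labels, since $U_\tau$ is finite while there are infinitely many systems to distinguish) followed by a glue segment encoding the chosen side's glue by position and by a matching number of strength-$1$ glues. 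The bump-and-dent region is grown before the glue segment, so that macrotiles from distinct systems are sterically blocked before their binding glues can touch.

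Two points where your route diverges and runs into trouble. First, you route through the input-seed-plus-simulator machinery of Theorem~\ref{sec:secondMainResult}; the paper skips this indirection and builds per-tile macrotiles directly, which sidesteps your worry about cascades from description and parser tiles. The only unbounded nondeterminism is the number line for~$r$, and the paper explicitly notes that this forces the result to hold only in the sense of \emph{finite self-assembly} (an infinite $r$-line is a possible limit). Second, your handling of ``junk'' supertiles---mapping each to a fixed single tile of some $\mathcal{T}_i$---does not work as stated: such a supertile carries no signature, so there is no canonical~$i$, and if you pick one anyway then Definition~\ref{scott-defn:alt-equiv-dyanmic-s-to-t-strong} would require that junk supertile to grow into something that can combine with \emph{every} macrotile representing a neighbor of that tile, which it cannot. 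The paper avoids this entirely by introducing a tailored notion of \emph{simultaneous} strong simulation in which each producible supertile is required to map into \emph{at most one} $\mathcal{T}_i$, so undecided or partial supertiles simply sit outside every simulated system rather than being shoehorned into one.
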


\ifabstract
\fi
\iffull
\section{The temperature-$\tau$ 2HAM is intrinsically universal: strong simulation}\label{sec:strongSimulation}

We present a total of six simulation results, three in this section and three in Section~\ref{sec:weakSimulation}.  In this section the three simulation results exhibit for any integer $\tau \geq 2$, a single set of tiles $U_\tau$ that at temperature~$\tau$ {\em strongly simulates} any temperature~$\tau$ 2HAM system, given a proper configuration of initial assemblies over~$U_\tau$.
Each of the three results in this section depict different tradeoffs between the number of encoded input supertiles and the scale of the simulation.
 Theorem~\ref{thm:strongSim3} achieves this while having the  simulated tile set encoded as a single input assembly.

For the aTAM it is known~\cite{IUSA}  that  there is a single tile set $U$ that simulates any aTAM tile assembly system $\mathcal{T}$, when initialized with a {\em single} seed assembly~$\sigma_{\mathcal{T}}$ that encodes  $\mathcal{T}$.
Assembly proceeds by  additions of single tiles to this seed. In this paper, where we study the 2HAM, it makes sense to allow the simulator to be programmed with multiple copies of the seed (input), rather than a single copy. In particular, this is the case in Theorem~\ref{thm:strongSim3} for strong simulation (and thus, also the weaker notion of {\em simulation}) where the simulator's input consists of infinitely many copies of both a single seed supertile, as well as the simulator's tiles.  However, the definition of input configuration allows fancier input configurations: it  permits us to have numerous distinct  seed assemblies. By exploiting this we achieve better scaling  in Theorems~\ref{thm:strongSim2}, \ref{thm:strongSim1},~\ref{thm:weak1}, \ref{thm:weak2}, and~\ref{thm:weak3} than in the single-seed case of Theorem~\ref{thm:strongSim3} (the six results  also give trade-offs in numbers of tile types in the simulator).
However, since these improvements in scaling (and possibly number of tile types) come at the expense of having many seed assemblies in the simulator, there is an intuitive sense in which ``less'' self-assembly, or at least a different form of self-assembly, is happening as fewer, and larger, assemblies themselves can act as large polyomino jigsaw pieces that come together to simulate tiles.  It is worth pointing out that our main result, an impossibility result, holds {\em despite} the fact that the simulator may try to use a large number of complicated-looking input assemblies.

Let $|| M ||$ denote the number of distinct elements in the multiset $M$, i.e.\ $|| M || = |M'|$ is the cardinality of the set $M'$ defined by ignoring multiplicities in the multiset $M$.    Let $T_{\mathrm{sup}}$ denote the set of supertiles induced by a tile set $T$.
By this, we simply mean $T_{\mathrm{sup}}$ is the set of supertiles formed by taking all tiles in $T$ and translating them  to all locations in $\mathrm{Z}^2$.

\subsection{Strong simulation with small scale and few tile types}

\begin{theorem}\label{thm:strongSim2}
For every $\tau \geq 2$
there exists a single set of tile types $U_\tau$, with $|U_\tau| = O(1)$ (i.e.\ independent of $\tau$),
such that for all 2HAM systems $\mathcal{T} = (T,\sigma,\tau)$,
there is a set $I_\mathcal{T}$ of $||\sigma||$ input supertiles
such that the 2HAM system $\mathcal{U}_{\mathcal{T}} = (U_\tau, U_{\tau, \mathrm{sup}} \cup I_\mathcal{T},  \tau)$ strongly simulates $\mathcal{T}$ at scale $O( \sqrt{|G| (\tau + \log |G|) } )$, where $G$ is the set of glues in $T$.
\end{theorem}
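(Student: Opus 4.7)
The plan is to construct, for each $\tau \geq 2$, a fixed tile set $U_\tau$ of $O(1)$ tile types, together with an $m \times m$ macrotile template. For each distinct supertile $\ta \in \sigma$ of the simulated system we instantiate a ``macro-assembly'' built out of macrotiles (one macrotile per tile of $\ta$, glued along matching internal edges), and place it into $I_\mathcal{T}$; this yields exactly $||\sigma||$ input supertiles. Each macrotile carries an internal identifier for the simulated tile type it represents and, on each of its four sides, a geometric encoding of the simulated glue label (drawn from $|G|$ possibilities) together with a cooperative strength-$\tau$ binding interface. The representation function $R$ decodes the internal identifier block. Two macrotiles placed adjacently will cooperatively bind with strength $\geq \tau$ iff their touching sides carry matching simulated glue labels with summed strength $\geq \tau$, and otherwise at most $\tau - 1$ bonds form.

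To achieve the stated scale $m = O(\sqrt{|G|(\tau + \log |G|)})$ while keeping $|U_\tau| = O(1)$, I would lay out each edge of the macrotile as a sequence of ``glue slots,'' one per possible element of $G$. Only the slot corresponding to the side's actual simulated glue is \emph{exposed} with the appropriate strength-$1$ binding tiles of $U_\tau$; the remaining slots are filled with inert padding tiles whose glue labels do not interact. Each slot occupies area $O(\tau + \log|G|)$: a strip of $O(\tau)$ tiles carrying the cooperative strength-$1$ bonds needed to sum to $\tau$ when matched, plus an $O(\log|G|)$-bit binary identifier that selects which of the $|G|$ glues this slot represents. Folding the $|G|$ slots (four sides' worth) into a two-dimensional arrangement inside the $m \times m$ macrotile yields total area $\Theta(|G|(\tau + \log|G|))$, giving $m = O(\sqrt{|G|(\tau + \log|G|)})$. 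Because the encoding is entirely geometric, the tile set $U_\tau$ can be chosen once and for all as a small library of ``slot body,'' ``slot identifier,'' ``padding,'' ``separator,'' and ``identifier block'' tiles, yielding $|U_\tau| = O(1)$ independent of both $\tau$ and $\mathcal{T}$.

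Given this construction, the three conditions of strong simulation would be verified in sequence. First, for equivalent productions (Definition~\ref{scott-defn:alt-equiv-prod}), I would argue that $R$ decodes each macrotile uniquely, that compositions of macrotiles in $\prodasm{\mathcal{U}_\mathcal{T}}$ correspond under $\tilde{R}$ to supertiles in $\prodasm{\mathcal{T}}$, and that each such supertile maps cleanly (no diagonal fuzz) since the slot/padding layout is restricted to the four cardinal edges. Second, for $\mathcal{T} \dashv_R \mathcal{U}_\mathcal{T}$ (Definition~\ref{scott-defn:alt-equiv-dynamic-t-to-s}), I would show that any strength-$\tau$ combination in $\mathcal{U}_\mathcal{T}$ forces exposed matching slots of combined strength $\geq \tau$, which decodes to a legal temperature-$\tau$ combination in $\mathcal{T}$. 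Third, for strong modelling (Definition~\ref{scott-defn:alt-equiv-dyanmic-s-to-t-strong}), I would use the fact that every macrotile representing a given simulated tile type is structurally identical, so any two simulator supertiles $\ta', \tb'$ mapping to $\ta, \tb$ expose the same matching slots along the relevant boundary as the ``canonical'' pair, and hence can combine.

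The main obstacle will be ruling out spurious cooperative binding between non-matching macrotiles: at temperature $\tau$, even a handful of unintended matching glues spread across the shared edge could accumulate strength $\geq \tau$ and trigger an illegal attachment. Preventing this requires careful design of the slot identifiers, of the inert padding glue labels, and of the separators between adjacent slots, so that any two misaligned or label-mismatched edges expose strictly fewer than $\tau$ compatible strength-$1$ bonds across every possible shift. A secondary subtlety, needed specifically for strong simulation, is that the binding interface of a macrotile edge must depend only on the simulated glue assigned to that edge and not on the interior of the macrotile or on assembly history; this will follow from the fact that the input supertiles are fully prefabricated and never grow internally during simulation.
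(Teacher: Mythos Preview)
Your overall framework---prefabricated macrotiles, one per simulated tile, with geometric glue encodings and strength-$1$ bonds that sum to the simulated strength---matches the paper's approach. The gap is in the scale-factor argument. You propose to ``lay out each edge of the macrotile as a sequence of glue slots, one per possible element of $G$,'' and then recover the $\sqrt{\,\cdot\,}$ bound by ``folding the $|G|$ slots \ldots\ into a two-dimensional arrangement inside the $m\times m$ macrotile.'' But the binding interface between two adjacent macrotiles is, by default, the one-dimensional shared edge of length $m$; if all $|G|$ slots of size $O(\tau+\log|G|)$ must sit on that edge, you get $m=\Theta(|G|(\tau+\log|G|))$, not its square root. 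Packing slots into the interior does not help, since interior tiles cannot touch the neighboring macrotile. The ``folding'' step is thus not a bookkeeping detail---it is the entire geometric content of the theorem, and as written it does not go through.

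The paper's mechanism is to make the binding interface genuinely two-dimensional: each macrotile side carries a single \emph{arm} that protrudes outward into the $k\times k$ region adjacent to the body, and the simulated glue $g$ is encoded by a pair $(i,j)\in\{0,\ldots,\lceil\sqrt{|G|}\rceil-1\}^2$, where $j$ is the arm's position along the side and $i$ determines the arm's length. East/west (and north/south) arm lengths are complementary, so matching arms meet end-to-end at combined length exactly $k$, forcing all bound macrotiles onto a common $2k\times 2k$ grid; mismatched arms either miss each other entirely or are blocked by a bump/dent pad at the arm tip. This two-coordinate encoding is what rasterizes the $|G|$ possible pads into a $k\times k$ region with $k=O(\sqrt{|G|(\tau+\log|G|)})$, and it simultaneously resolves the spurious-binding concern you flag at the end: only the $\mathrm{str}(g)$ ``red'' tiles at the pad expose nonzero glue, and geometry prevents any nonmatching pads from ever becoming adjacent.
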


\begin{figure}[t] 
\begin{center}
  \subfloat[][]{%
        \label{fig:strongSimSketch}%
        \centering
        \includegraphics[width=0.5\linewidth]{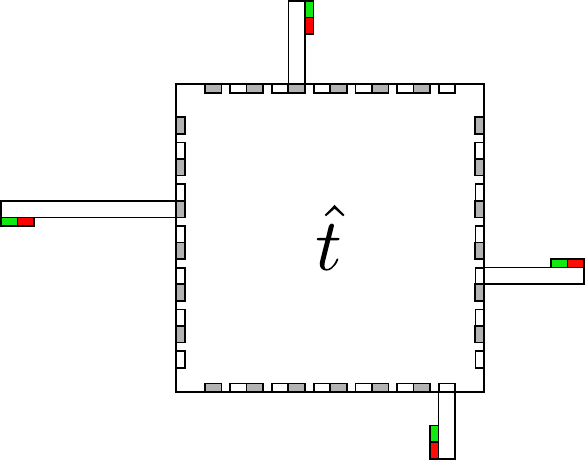}
       \hspace{0.05\linewidth}
       }%
  \subfloat[][]{%
        \label{fig:strongSimArmsDetail}%
        \centering
        \includegraphics[width=0.3\linewidth]{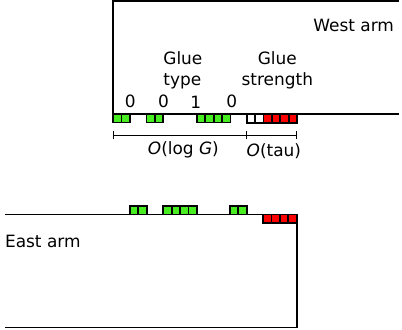}
        }
\caption{Strong simulation. (a) Input assembly design. For each tile type $t$ in the simulated TAS there is a unique macrotile $\hat{t}$. Glue pads and binding pads are shown in green and red, respectively. North arms project from grey regions so that they are staggered relative to south arms, west and east arms act similarly. For each unique glue $g$ in the simulated TAS, there is a unique pair of coordinates for the glue/binding pads, one denoting the position along the side of the macrotile (i.e. the particular grey/white location), and one denoting the distance that the arm projects outward from the macrotile.  (b) Detail showing two arms with matching glues (encoded as binary bumps and dents using green tiles) of strength 4 (encoded using 4 red tiles), from a temperature $\tau = 6$ system. Each red tile provides unit binding strength, and so macrotiles bind with strength equal to the number of matching red tiles. Only red tiles provide non-zero binding strength.  }
\end{center}
\end{figure}

\paragraph{Construction.} For a given simulated TAS $\mathcal{T} = (T,\sigma,\tau)$, the TAS $\mathcal{U}_\mathcal{T}$ represents the initial state of $\mathcal{T}$ as follows.

Each {\em singleton} tile type $t \in T$ that is used in the initial state $\sigma$ is represented as a macrotile of the form shown in Figure~\ref{fig:strongSimSketch} (larger input supertiles from $\sigma$ are described below). Tile type $t$ is mapped to a unique macrotile $\hat{t}$ as follows. First, each glue $g \in G$, from the tile set $T$, is uniquely encoded as a pair $\hat{g} = (i,j) \in \ell \times \ell$ where $\ell=\{x \in \mathbb{N}|x < \left\lceil \sqrt{|G|} \right\rceil\}$ (using the inverse of some simple pairing function). Assume that tile type $t$ has the four glues $(g_{\mathrm{north}}, g_{\mathrm{west}}, g_{\mathrm{south}}, g_{\mathrm{east}})$. Next, these glues are encoded as the four coordinates $(\hat{g}_{\mathrm{north}}, \hat{g}_{\mathrm{west}}, \hat{g}_{\mathrm{south}}, \hat{g}_{\mathrm{east}})$ using the above encoding. The macrotile $\hat{t}$ is composed of five parts: a square {\em body} (of size $k \times k$, where $ k= O( \sqrt{G (\tau + \log G) } )$), and four arms so that one arm is placed in each of four square regions, each of size $k\times k$, and adjacent to the body. The location of the red and green glue-binding pad on an arm uniquely encodes the relevant glue of tile type $t$: for example $g_{\mathrm{west}}$ is encoded by the glue-binding pad location $( k i /\lceil \sqrt{|G|} \rceil, k j /\lceil \sqrt{|G|}\rceil)$, where $\hat{g}_{\mathrm{west}} =  (i,j)$. (The arm length  is $k i /\lceil \sqrt{|G|} \rceil$ and its position along the supertile side is $k j /\lceil \sqrt{|G|}\rceil$.)   Arm lengths on east sides are ``complimentary'' to those on west sides, in the following sense. Let $g = g_{\mathrm{west}} = g_{\mathrm{east}}$ be some glue that appears on the east side of one tile and west side of another. As described above, the arm with the pad $\hat{g}_{\mathrm{west}}$ has length such that the glue pad appears at location $k i /\lceil \sqrt{|G|} \rceil$. However, the arm with the pad $\hat{g}_{\mathrm{east}}$ is defined to have length such that the glue pad appears at location $k - (k i /\lceil \sqrt{|G|} \rceil)$. The same complementarity trick is used for north and south arms.  Finally, as can be seen in Figure~\ref{fig:strongSimSketch}, arms are {\em staggered} relative to each other: north and west arms sit on grey patches, south and east arms sit on white patches.

Figure~\ref{fig:strongSimArmsDetail} shows the individual glue-binding pads at the end of two arms. Each glue $g \in G$ is uniquely represented as a bit sequence, which in turn is represented using bumps and dents in a $1 \times 4 \left\lceil \log |G| \right\rceil$ region shown in green on the west arm in Figure~\ref{fig:strongSimArmsDetail}. The same bump-dent pattern would be used on a north arm. For east and south arms, the complementary bump-dent pattern is used. To illustrate this, Figure~\ref{fig:strongSimArmsDetail} shows a west and east arm that share the same glue type. It can be seen that the arms are able to be translated so that the green regions {\em fit} together.  A glue-binding pad also encodes the binding strength $\mathrm{str}(g) \leq \tau$ of its represented glue $g$, in a straightforward way as a sequence of $\tau - \mathrm{str}(g)$ white tiles followed by $ \mathrm{str}(g)$ red tiles.  Each red tile, on the west arm, exposes a single strength 1 glue to its south. The red tiles are the {\em only} tiles on the entire macrotile $\hat{t}$ that expose positive strength glues. A matching east arm, as shown in Figure~\ref{fig:strongSimArmsDetail}, exposes the same number $ \mathrm{str}(g)$ of matching strength 1 glues from its $ \mathrm{str}(g)$ red tiles.\footnote{Note that in the glue-binding pad region there are no ``single tile'' bumps: this ensures that the simulator tile set $U$ does not contain strength $\tau$ glues, which in turn simplifies our construction.}

Due to their complementary green regions, and their matching sequence of exactly $\mathrm{str}(g)$ red tiles, the two arms shown in Figure~\ref{fig:strongSimArmsDetail} can be translated so that they bind together with strength $\mathrm{str}(g)$.

This completes the description of the encoding of the {\em singleton} tile types $t \in T$ that are used in the initial state $\sigma$. The remaining  supertiles of $\sigma$ (i.e. of size $> 1$) are encoded as in the following paragraph.

Consider 2 macrotiles $\hat{t}_1,\hat{t}_2$, that represent tiles $t_1, t_2$. From the above description, it can be seen that $\hat{t}_1$ and $\hat{t}_2$ can be positioned so that their bodies' centers lie on the same horizontal line, at exactly $2k$ distance apart, such that $\hat{t}_1,\hat{t}_2$ do not intersect. Furthermore, if $t_1$ and $t_2$ have a matching glue on their east and west sides, respectively, then the glue-binding pads of the east arm of~$\hat{t}_1$ and the west arm of~$\hat{t}_2$ will be positioned so that their matching bump-dent patterns  interlock, and their arms bind with whatever strength $t_1$ and $t_2$ bind. Finally,  if $t_1$ and $t_2$ do not have matching glues on their east and west sides, respectively, then it is the case the glue-binding pads of the two arms do not touch, and indeed their ``mismatching arms'' do not intersect. This holds for  the only other potential binding position (i.e.\ north-south)  of two arbitrary tiles $t_1$ and $t_2$.  Figure~\ref{fig:strongSimOneSeedOverview} shows six macrotiles translated into position to simulate an assembly of 6 tiles in some simulated TAS. The figure shows 5 matching arms (simulating matching glues) and two mismatched arms (simulating mismatched glues). Supertiles of size $\> 1$ in $\sigma$ are encoded in this manner.

We have now completely specified the initial state from which the self-assembly process proceeds in  $\mathcal{U}_\mathcal{T}$.

\begin{figure}[t]
\begin{center}
        \includegraphics[width=1\linewidth]{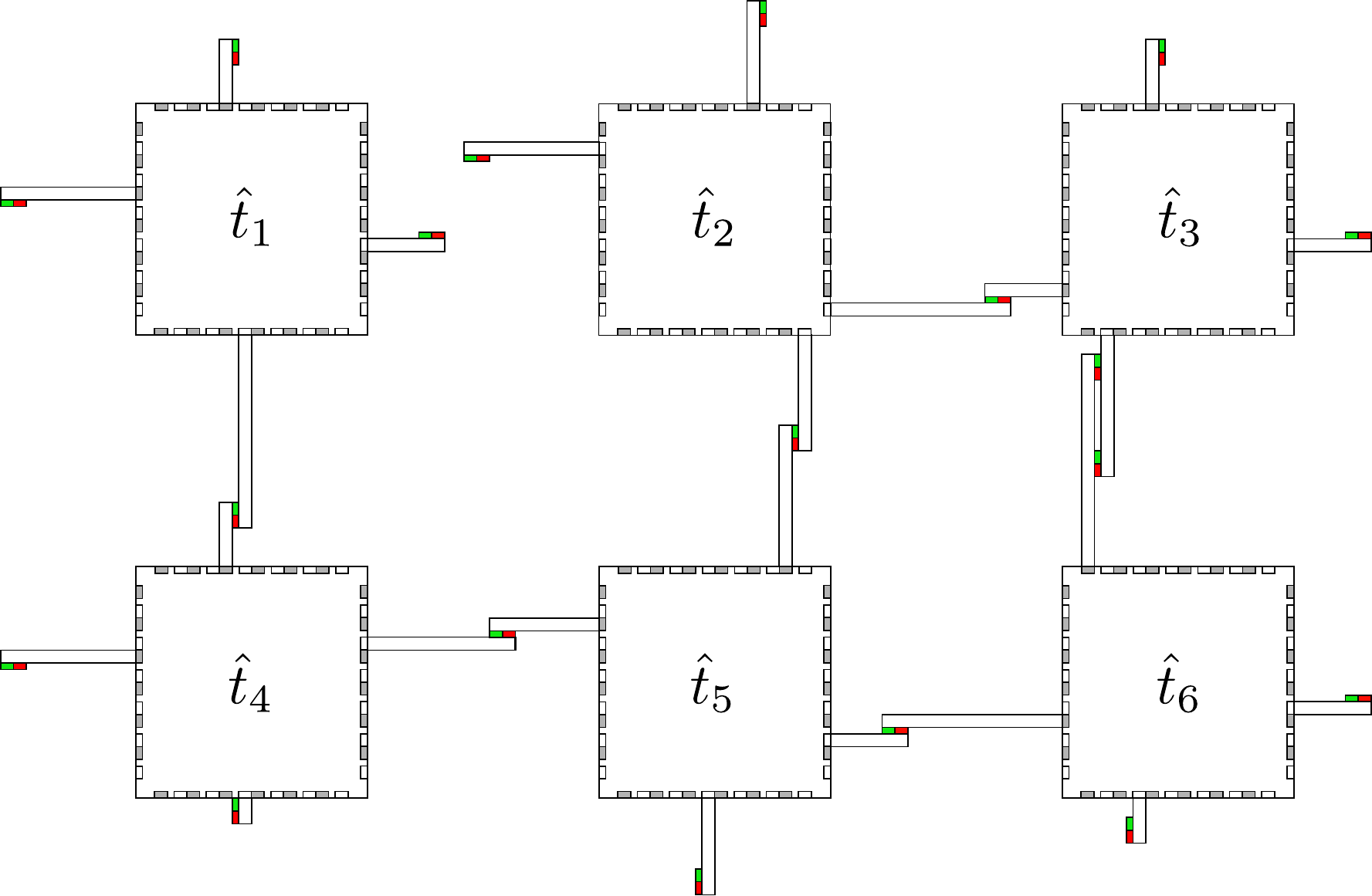}
\caption{Strong simulation. Example assembly with 6 macrotiles and 2 simulated mismatches. The square ``body'' of each tile is of size $k \times k$, and the assembly sits on a $2k \times 2k$ lattice.}
\label{fig:strongSimAssembly}
\end{center}
\end{figure}

\paragraph{Scale.}
Each glue $g \in G$ of the simulated TAS $\mathcal{T}$ is encoded using a glue-binding pad consisting of $O(\log G)$ (green) tiles and $\tau$ (red) tiles. There are $|G|$ such glue-binding pads, which are rasterized into each of four $k \times k$ regions, where  $k = O( \sqrt{G (\tau + \log G) } )$. To see that the tile set size is a constant (i.e.\ independent of the simulated TAS $\mathcal{T}$), note that all tiles on the outside of the macrotile expose strength 0 glues, except for the red tiles, which each expose the same strength 1 glue. Hence, the interior of each macrotile can be filled in using a single filler tile, with a constant size set of tile types used to fill the exterior.

\paragraph{Correctness of Simulation.}  Via the following two cases, macrotiles stay ``on-grid.'' (1) Due to their arm lengths, if two {\em matching} green glue pads bind (causing the binding of two macrotiles) then the combined (horizontal or vertical) arm length is exactly $k$. Thus {\em matching} macrotiles only bind in a way that their centers are exactly  distance   $2k$ apart. (2) Due to the green glue pad design, if two glue-binding pads mismatch (i.e.\ represent two mismatching glues in $T$) then they can not bind, since the two green pads sterically hinder each other. Taken together, this means that whenever two macrotiles (that encode two tile types $t_1, t_2 \in T$) bind, they are always positioned on a  $2k \times 2k$ square grid.

This immediately implies that whenever larger assemblies, with multiple macrotiles, bind, they have all of their tiles positioned on a  $2k \times 2k$ square grid.

It remains to show that, due to our macrotile design,  $\mathcal{U}_\mathcal{T}$  strongly simulates~$\mathcal{T}$ (i.e.\ point 2 of Definition~\ref{scott-defn:alt-simulate}). Firstly, due the glue-binding pad design and the fact that $\mathcal{T}$ and $\mathcal{U}_\mathcal{T}$ work at the same temperature $\tau$, a pair of supertiles in $\mathcal{T}$ bind if and only if their corresponding encoded pair of supertiles bind in  $\mathcal{U}_\mathcal{T}$.  This, taken together with the fact that the initial state of $\mathcal{U}_\mathcal{T}$ is an encoding of the initial state of $\mathcal{T}$, implies that the two systems have the same dynamics (in the strong sense), thus satisfying Definitions~\ref{scott-defn:alt-equiv-dynamic-t-to-s} and \ref{scott-defn:alt-equiv-dyanmic-s-to-t-strong}. Definition~\ref{scott-defn:alt-equiv-prod}(1) (equivalent production) is  satisfied since equivalent dynamics implies equivalent production, and Definition~\ref{scott-defn:alt-equiv-prod}(2) is satisfied as our choice of macrotile design directly implies that  each supertile produced in $\mathcal{U}_\mathcal{T}$ maps cleanly to a supertile in $\mathcal{T}$. Taken together, these facts are sufficient to satisfy Definition~\ref{scott-defn:alt-simulate}(2).

This completes the proof of Theorem~\ref{thm:strongSim2}.

\subsection{Strong simulation with smaller scale  but more tile types}

\begin{theorem}\label{thm:strongSim1}
For each integer $\tau \geq 2$ there exists a single set of tile types $U_{\tau}$, $|U_{\tau}| = O(\tau)$, such that for any 2HAM system $\mathcal{T} = (T,\sigma,\tau)$, there exists a set $I_\mathcal{T}$ of $||\sigma||$ input supertiles   such that the 2HAM system $\mathcal{U}_{\mathcal{T}} = (U_\tau, U_{\tau,\mathrm{sup}} \cup I_\mathcal{T},  \tau)$ strongly simulates $\mathcal{T}$ at scale $O( \sqrt{|G|  \log |G| } )$, where $G$ is the set of glues in $T$.
\end{theorem}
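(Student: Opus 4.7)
The plan is to adapt the construction from Theorem~\ref{thm:strongSim2} by changing only how glue strength is encoded on each glue-binding pad. In the construction of Theorem~\ref{thm:strongSim2}, the pad at the end of each arm consisted of an $O(\log |G|)$-tile green bump/dent pattern that encoded the identity of the glue $g$, followed by a unary strength gadget of length exactly $\tau$, namely a run of $\tau - \mathrm{str}(g)$ white tiles and $\mathrm{str}(g)$ red tiles, each red tile exposing a single strength-$1$ glue. This unary gadget is what forced the pad length, and hence the arm length, to scale with $\tau$, producing the extra factor of $\tau$ under the square root in the scale bound.

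First I would replace the unary strength gadget by a single strength-carrying tile. Concretely, for each integer $s \in \{1,2,\dots,\tau\}$ introduce a dedicated ``strength-$s$'' tile type whose relevant side exposes a single glue of strength exactly $s$ (and whose other sides expose strength-$0$ glues). Now a pad consists of the same $O(\log |G|)$ green tiles encoding the glue identity, followed by exactly one strength tile whose strength equals $\mathrm{str}(g)$. On a matching east arm the complementary strength-$s$ tile is used, with the same glue label, so that two pads of equal encoded glue bind with the intended strength $\mathrm{str}(g) \le \tau$ and two pads of distinct encoded glues cannot bind at all (their green bump/dent regions sterically hinder each other, exactly as before). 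The interior of each macrotile, and the remaining parts of each arm, are filled with the same constant collection of ``filler'' tiles as in the previous construction. The total number of tile types in $U_\tau$ is therefore $O(\tau)$, dominated by the $\tau$ strength tiles.

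Next I would redo the counting of the scale factor. Each glue $g \in G$ is now encoded by a pad of length $O(\log |G|)$ rather than $O(\tau + \log |G|)$. The same rasterization of the $|G|$ pads into a square region that produced side length $k = O(\sqrt{|G|(\tau + \log |G|)})$ in Theorem~\ref{thm:strongSim2} now produces side length $k = O(\sqrt{|G|\log |G|})$, matching the scale stated in Theorem~\ref{thm:strongSim1}. The complementarity trick for arm lengths, the staggering of north/west versus south/east arms, and the placement of macrotile bodies on a $2k \times 2k$ lattice all carry over verbatim, since they depend only on the geometry of the arms and not on how strength is encoded at the tip of each arm.

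Finally, correctness of strong simulation follows essentially line-for-line from the argument for Theorem~\ref{thm:strongSim2}. Two macrotiles $\hat{t}_1, \hat{t}_2$ encoding tile types $t_1, t_2 \in T$ bind in $\mathcal{U}_{\mathcal{T}}$ precisely when $t_1, t_2$ share a matching glue in $\mathcal{T}$, and when they do, the single strength tile at the interface contributes exactly the strength $\mathrm{str}(g)$ of the simulated glue; no other tiles on the exterior of a macrotile expose positive-strength glues. Hence two supertiles combine in $\mathcal{U}_{\mathcal{T}}$ at temperature $\tau$ if and only if their $\tilde R$-images combine in $\mathcal{T}$ at temperature $\tau$, and the encoded initial state $I_\mathcal{T}$, with $\|I_\mathcal{T}\| = \|\sigma\|$ distinct seed supertiles, represents $\sigma$. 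This yields equivalent production, clean mapping, and the strong dynamical correspondence of Definitions~\ref{scott-defn:alt-equiv-prod}, \ref{scott-defn:alt-equiv-dynamic-t-to-s}, and~\ref{scott-defn:alt-equiv-dyanmic-s-to-t-strong}. The only point that needs any real care---and which I would view as the main (small) obstacle---is verifying that a pad consisting of a single strength-$s$ tile still cannot interact in any unintended way with mismatched pads or with ``offset'' arms, so that the rigid $2k \times 2k$ lattice alignment continues to hold; this is handled by keeping the green bump/dent block in front of the strength tile, so that the strength tile is only ever exposed once the glue-identity check has already succeeded.
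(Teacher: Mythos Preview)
Your proposal is correct and essentially identical to the paper's argument: modify the construction of Theorem~\ref{thm:strongSim2} by replacing each arm's run of $\mathrm{str}(g)$ strength-$1$ red tiles with a single red tile exposing a strength-$\mathrm{str}(g)$ glue, which adds $O(\tau)$ tile types and shrinks each pad from $O(\tau+\log|G|)$ to $O(\log|G|)$, yielding scale $O(\sqrt{|G|\log|G|})$. Your discussion of the scale recount and the correctness argument is in fact more detailed than the paper's own three-sentence justification.
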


Theorem~\ref{thm:strongSim1} simply is a trade-off in tile types for scale factor with Theorem~\ref{thm:strongSim2}. The construction for Theorem~\ref{thm:strongSim2} is modified so that where before each arm had $\mathrm{str}(g)$ red tiles each exposing a strength 1 glue, now each arm has a {\em single} red tile that exposes one strength $\mathrm{str}(g)$ glue. This adds an additional $\tau -1$ tile types to the previous construction: one red tile type for each strength $s$ where $\{ 1 \leq s \leq \tau \} $). However, by shrinking the red binding pads to size~1, it results in a reduction in scale factor to $O( \sqrt{|G|  \log |G| } )$, giving the  statement of Theorem~\ref{thm:strongSim1}.

\begin{figure}[t] 
\begin{center}
  \subfloat[][]{%
        \label{fig:strongSimOneSeedOverview_seed}%
        \centering
        \includegraphics[width=0.4\linewidth]{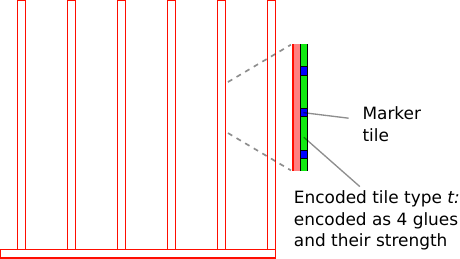}
       \hspace{0.0\linewidth}  
       }%
  \subfloat[][]{%
        \label{fig:strongSimOneSeedOverview}%
        \centering
        \includegraphics[width=0.6\linewidth]{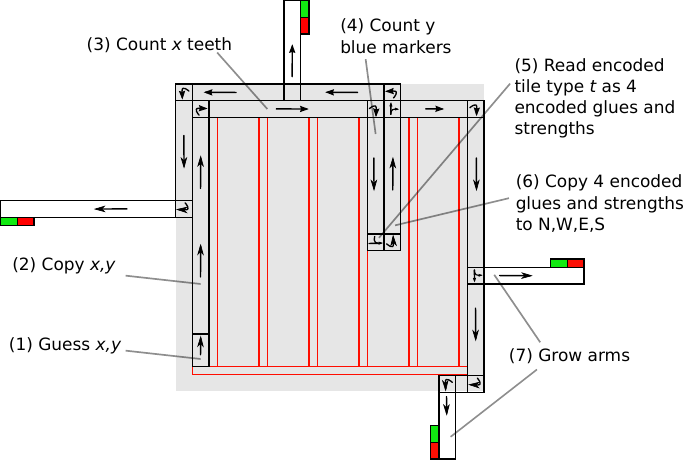}
        }
\caption{Overview of simulation for Theorem~\ref{thm:strongSim3}. (a) Input supertile. The red comb structure encodes the entire input simulated tile set $T_{\mathrm{in}}$. Each of the $\lceil\sqrt{|T_{\mathrm{in}}|}\rceil$ teeth of the comb encodes $\lceil\sqrt{|T_{\mathrm{in}}|}\rceil$ tile types from $T_{\mathrm{in}}$. Each simulated tile $t \in T_{\mathrm{in}}$ is encoded using $O(\log |G| + \log \tau)$  tiles (shown in green in zoom-in) that encode the 4 glues of $t$, and their strengths, in binary. Encoded tile types are separated by a single marker tile shown in blue.  (b) Growth of an input supertile into a macrotile that represents some tile type $t \in T$ from the simulated TAS $\mathcal{T}$.  Numbers and arrows are used to indicate order and direction of growth. Growth of a crawler (black outline) begins in the lower left. (1) The crawler guesses two integers $x,y$, where $0\leq x,y \leq \lfloor\sqrt{|T_{\mathrm{in}}|}\rfloor$, and (2) copies these values to the top left corner. (3) The crawler grows right, counting $x$ teeth, (4) grows south counting $y$ blue tiles (see (a)), and (5) copies the encoded tile type $t$ (as 4 encoded glues with strengths). (6) The glue information is copied to the four sides of the macrotile, and the glues are used as input to counters that (7) grow arms to the relevant lengths, including the red and green binding pads.    }
\label{fig:strongSimOneSeed}
\end{center}
\end{figure}

\subsection{Strong simulation with larger scale  but with fewer tile types and with only a  single macrotile}\label{subsec:strongSim3Sec}

The following Theorem provides an interesting tradeoff with the prior two results. It uses a  larger scale factor than Theorems~\ref{thm:strongSim2} and~\ref{thm:strongSim1}, but gives a universal 2HAM TAS for each $\tau$ where the {\em simulated tile set} is encoded in a single simulator supertile. This supertile has the ability to ``grow into'' any encoded tile in the simulated system $\mathcal{T}$.

\begin{theorem}\label{thm:strongSim3}
For every $\tau \geq 2$
there exists a single set of tile types $U_\tau$, with $|U_\tau| = O(1)$ (i.e.\ independent of $\tau$),
such that for all 2HAM systems $\mathcal{T} = (T,\sigma,\tau)$,
there is a single supertile~$s_T$ (that represents $T_{\mathrm{sup}} \cap \sigma$) and a set $I_\sigma$ containing $||\sigma - T_{\mathrm{sup}}||$ supertiles (that represent   $\sigma - T_{\mathrm{sup}}$)  such that the 2HAM system $\mathcal{U}_{\mathcal{T}} = (U_\tau, U_{\tau.\mathrm{sup}} \cup \{s_T\} \cup I_\sigma,  \tau)$ strongly simulates $\mathcal{T}$ at scale $O( \sqrt{|T|}(\log \tau + \log |G|) + \sqrt{ |G| (\tau + \log |G|) } )$, where $G$ is the set of glues in $T$.
\end{theorem}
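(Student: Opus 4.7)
The plan is to combine the macrotile arm/binding-pad design from Theorem~\ref{thm:strongSim2} with a compact universal encoding scheme that represents the entire simulated tile set $T$ inside the single seed supertile $s_T$, as sketched in Figure~\ref{fig:strongSimOneSeed}. The outer region of each fully-grown macrotile (the body plus the four arms with their glue/binding pads) will be chosen to look identical to the macrotile of Theorem~\ref{thm:strongSim2}, so that once a seed has ``expressed'' a particular tile type $t \in T$, binding between expressed macrotiles will behave exactly as before. The entire task is therefore to realize any tile type $t \in T$ as such a macrotile starting from a single shared initial seed, using only $O(1)$ tile types.

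First I would specify $s_T$. It will consist of a ``comb'' of $\lceil \sqrt{|T|} \rceil$ teeth, each encoding $\lceil \sqrt{|T|} \rceil$ tile types of $T$; each tile type is written down as its four glues (strength and label) in binary, using $O(\log|G|+\log\tau)$ bits, with single marker tiles (blue in Figure~\ref{fig:strongSimOneSeedOverview_seed}) separating successive tile encodings. This gives $|s_T| = O(\sqrt{|T|}(\log\tau+\log|G|))$. Only a constant number of tile types from $U_\tau$ are needed to realize $s_T$, since the geometric layout and separator structure are fixed; the encoded ``data'' is carried by a constant-alphabet bit-string of tiles. Infinitely many copies of $s_T$ are supplied to the initial state.

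Next I would implement the ``crawler'' sketched in Figure~\ref{fig:strongSimOneSeedOverview} using $O(1)$ further tile types. Growth begins at a designated corner of an $s_T$ seed; the crawler nondeterministically guesses two integers $0 \le x,y \le \lceil\sqrt{|T|}\rceil$, written in binary. These coordinates select one tile type $t \in T$. The crawler then copies $(x,y)$ to the top of the comb, walks right $x$ teeth, walks down $y$ separators, reads off the $O(\log|G|+\log\tau)$-bit encoding of $t$, and transports the resulting four glue-strength pairs to the four sides of a $k \times k$ body region. For each side it feeds the glue's coordinate pair $\hat g = (i,j)$ and its strength $\mathrm{str}(g)$ to fixed-size binary counters that roll out an arm of the appropriate length and position (respecting the staggered grey/white offsets), terminating in a green bump/dent glue pad and a red binding pad of length $\mathrm{str}(g)$, exactly as in Theorem~\ref{thm:strongSim2}. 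Since counters, decoders, movement gadgets, and the arm generator each use only constantly many tile types, the full ``comb $+$ crawler $+$ arm generator'' machinery fits within $|U_\tau|=O(1)$, independent of $T$, $G$, and $\tau$. Every $t \in T$ is realizable by some sequence of crawler guesses, but each individual seed yields exactly one completed $\hat t$. The remaining supertiles of $\sigma$ of size greater than one are encoded directly as rigid Theorem~\ref{thm:strongSim2}-style assemblies in the set $I_\sigma$, contributing $||\sigma - T_{\mathrm{sup}}||$ input supertiles.

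Finally I would verify scale and correctness. The comb side contributes $O(\sqrt{|T|}(\log\tau+\log|G|))$ and the arm/binding-pad side contributes $O(\sqrt{|G|(\tau+\log|G|)})$ to $k$, giving the advertised scale. Because the external boundary of a completed macrotile is literally the Theorem~\ref{thm:strongSim2} boundary, the strong-simulation argument of that theorem carries over verbatim: matching pairs of macrotiles interlock at distance exactly $2k$ with strength equal to the simulated glue strength, mismatching pairs are blocked by the green glue pads, and Definitions~\ref{scott-defn:alt-equiv-prod}, \ref{scott-defn:alt-equiv-dynamic-t-to-s}, and~\ref{scott-defn:alt-equiv-dyanmic-s-to-t-strong} are immediate. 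The main obstacle I anticipate is preventing partial crawler assemblies from binding to other supertiles prematurely---a partially-grown macrotile with dangling red tiles could allow spurious attachments violating strong simulation. I would address this by routing all positive-strength glues of the crawler inward and releasing the red binding tiles only as the terminal decoration of each arm-counter computation, so that any crawler intermediate is externally inert and can only interact with other supertiles after the macrotile is fully formed and identical to the Theorem~\ref{thm:strongSim2} template.
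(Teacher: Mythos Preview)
Your proposal is correct and follows essentially the same approach as the paper: a comb-shaped seed encoding $T$ in $\lceil\sqrt{|T|}\rceil$ teeth, a nondeterministic crawler that guesses $(x,y)$, navigates to the chosen tile encoding, and routes the four glue/strength pairs to arm counters that grow the Theorem~\ref{thm:strongSim2}-style arms and pads. The paper's correctness argument is exactly the one you anticipate---once $(x,y)$ is fixed the macrotile's identity is determined, and each arm's green glue pad is forced to complete (via a two-wide bump/dent path) before its red binding pad appears, so partially grown macrotiles cannot bind spuriously; your ``release red tiles only as terminal decoration of each arm'' is precisely this mechanism.
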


\paragraph{Construction.} For notation, let $T_{\mathrm{in}}$ be the set of  singleton tiles used to make the multiset of singleton supertiles $T'  = T_\mathrm{sup} \cap \sigma$. The 2HAM simulator $\mathcal{U}_\mathcal{T}$ starts simulation from (A) a single input supertile $s_T$ that encodes {\em all} singleton tiles from $T_{\mathrm{in}} $, (B) $||\sigma   - T_{\mathrm{sup}}||$  supertiles that encode non-singleton input supertiles (i.e.\ they encode $\sigma- T_{\mathrm{sup}} $), as well as (C)  its own tile set $U_\tau$.

We begin by describing  (A), the single supertile $s_T$.  Figure~\ref{fig:strongSimOneSeedOverview_seed} shows the supertile assembly $s_T$, which we call a ``comb.''   The comb  encodes the entire simulated set of tile types, $T_{\mathrm{in}}$, that are actually used by $\mathcal{T}$: each of the $\lceil\sqrt{|T_{\mathrm{in}}|}\rceil$ teeth of the comb encodes $\leq \lceil\sqrt{|T_{\mathrm{in}}|}\rceil$ tile types from~$T_{\mathrm{in}}$. Each simulated tile $t \in T_{\mathrm{in}}$ is encoded using $O(\log |G| + \log \tau)$  tiles (shown in green in the zoom-in in Figure~\ref{fig:strongSimOneSeedOverview_seed}) that encode the 4 glues of $t$, and their strengths, in binary. Encoded tile types are separated by a single marker tile shown in blue in  Figure~\ref{fig:strongSimOneSeedOverview_seed}.

The initial configuration of the simulator contains an infinite number of copies of the comb. Via the self-assembly process, each copy of the comb chooses to encode a tile type $t \in T_{\mathrm{in}}$, as follows.  Using tile types from $U_\tau$,  growth initiates from the lower left corner of the comb as shown in part (1) of Figure~\ref{fig:strongSimOneSeedOverview}.   An  $O(\log |G| + \log \tau)$ width crawler is initiated.\footnote{The crawlers, counters, computational primitives (guessing strings, computing simple numerical functions on bit strings, and even simulating Turing machines), and geometric primitives (copying bit sequences around in two-dimensional space) used in this and later constructions are relatively straightforward implementations similar to those used in the aTAM in \cite{IUSA}, among others.  These primitives are designed to assemble on the edges of existing supertiles (or assemblies in the aTAM), and can be made (and usually already are) ``2HAM-safe'' (essentially, ``polyomino safe'' as in \cite{Luhrs08}), meaning that in the 2HAM they function identically and correctly without danger of unwanted supertiles forming which are unattached to the desired supertiles.  The general technique is to limit the number of $\tau$-strength glues on any particular tile type which assembles the primitive to $1$, so that the largest unattached supertile which can form from them is a size $2$ duple.  All other attachments, and even the incorporation of the duples, requires cooperation provided by the surface of the supertile onto which the primitive is intended to form.  Since the constructions for these primitives are standard and straightforward, we omit the details here.}
  This crawler ``guesses'' which tile type the comb should represent via a nondeterministic procedure (in a way that guarantees that any $t\in T_{\mathrm{in}}$ can be guessed), which works as follows. Via nondeterministic placement of $O(\log |G|)$ binary (0 or 1) tiles, the crawler guesses two bit strings, representing positive integers $x,y$. This pair of integers will act as indices to a location on the two-dimensional comb. The crawler climbs to the north of the leftmost tooth, as shown in Figure~\ref{fig:strongSimOneSeedOverview}(2), and then crawls to the east, counting exactly $x$ teeth (3). The crawler heads south (4), counting $y$ blue marker tiles (shown in Figure~\ref{fig:strongSimOneSeedOverview_seed}: zoom-in). At the $y^{\mathrm{th}}$ marker, the crawler has found the encoding of its chosen tile type $t \in T_{\mathrm{in}}$. Here (5), through cooperative binding, the crawler ``reads'' the encoding of $t$ from  the tooth. (The tile type $t$ is encoded as $O(\log |G| + \log \tau)$ tiles, that represent the 4 glues, and their strengths in binary.) The crawler rotates and copies this information to the north (6). Upon reaching the top of the comb, the crawler splits into multiple crawlers, sending each of the 4 glue-strength pairs to their 4 respective sides~(7). In this process each encoded glue~$g$ acts as an index of the position, and length, of the relevant macrotile arm (in other words, an encoded glue $g$ acts as input to a counter that counts to an arm position, and then turns, and counts out an arm length---it can be easily seen that the $O(\log |G|)$ bits used to encode the glue in the crawler are sufficient to store the counter input). An arm generates a green and red glue-binding pad (of the same form as used the proof of in Theorem~\ref{thm:strongSim2}; see Figure~\ref{fig:strongSimArmsDetail}), positioned as shown in Figure~\ref{fig:strongSimOneSeedOverview}.  Note that the green pad with the bumps and dents that encode the glue type in geometry must complete first (and therefore must be two tiles wide so that a path can grow out into each bump and then back down before continuing to the next) before the red pad forms, which is easily done by designing them to assemble as a path completely through green and then to red.  This prevents the situation where the red pad with the generic glues could form first, allowing another macrotile to potentially bind without having the identity of the glue verified by the geometry (meaning that it could allow macrotiles representing mismatched glues of the simulated system to bind). Arm positions and lengths, and the position and structure of glue-binding pads, all follow the form used to prove Theorem~\ref{thm:strongSim2}, although the scaling is different here: specifically, as argued below, the body (outlined in grey in Figure~\ref{fig:strongSimOneSeedOverview}) is of size $k \times k$ where $k \in O( \sqrt{|T|}(\log \tau + \log |G|) + \sqrt{ |G| (\tau + \log |G|) } )$.

This completes the description of (A); the encoding of the {\em singleton} tile types $t \in T$ that are used in the initial state $\sigma$. The remaining supertiles (B) of $\sigma$ (i.e. of size $> 1$) are ``hard-coded'' as supertiles $\hat{s}$ (using tiles from $U_\tau$) that encode supertiles $s \in \sigma$ (analogously to Theorem~\ref{thm:strongSim2}). Essentially these pre-built (hardcoded) supertile assemblies are composed of fully grown comb macrotiles, complete with arms and glue-binding pads.

\paragraph{Scale.}
We first analyze the size of the {\em body} of a filled-out comb macrotile (shaded in grey in Figure~\ref{fig:strongSimOneSeedOverview}).  $T_\mathrm{in} \subseteq T$, so $|T_\mathrm{in}| \leq |T|$.  Each of the $\leq \lceil\sqrt{|T|}\rceil$ teeth of the comb encodes $\leq \lceil\sqrt{|T|}\rceil$ tile types from~$T_{\mathrm{in}}$. Each simulated tile $t \in T_{\mathrm{in}}$ is encoded using $O(\log |G| + \log \tau)$  tiles (shown in green in the zoom-in in Figure~\ref{fig:strongSimOneSeedOverview_seed}) that encode the 4 glues of $t$, and their strengths, in binary. Crawlers are of width $O(\log |G| + \log \tau)$. Together these terms sum to $\ell_1  = O( \sqrt{|T|}(\log \tau + \log |G|) )$, giving the $\ell_1 \times \ell_1$ scaling for the body (shaded in grey in Figure~\ref{fig:strongSimOneSeedOverview}).

The square arm regions for a filled-out comb macrotile  use the same values as appeared in Theorem~\ref{thm:strongSim2}: the glue-binding pad is of size $O(\tau + \log |G|)$, there are $|G|$ such pads, which are rasterized into an $\ell_2 \times \ell_2$ region where $\ell_2 = \sqrt{ |G| (\tau + \log |G|) } $.

We take  $\max(\ell_1 , \ell_2) = k$ to get our final scaling of $k \times k$ where  $$k = O( \sqrt{|T|}(\log \tau + \log |G|) + \sqrt{ |G| (\tau + \log |G|) } ) \, . $$

\paragraph{Correctness of Simulation.}
The simulation correctness uses the same argument as in the proof of Theorem~\ref{thm:strongSim2}, along with the following observations.  In Theorem~\ref{thm:strongSim2} the macrotiles were ``hardcoded'' in advance and so no actual growth takes place within the macro tile itself. Here, for Theorem~\ref{thm:strongSim3}, the macrotiles grow from a comb. Macrotiles bind via the red binding pads (Figure~\ref{fig:strongSimOneSeedOverview}). First, observe that after a comb has selected the values $x,y$, its (future) identity $\hat{t}$ is completely determined. Second, observe that macrotiles bind to each other via their red binding pads only. Third, from the above description of the construction, the red binding pads form only {\em after}  the green glue pads. These three facts together, imply that macrotiles  bind in the simulator if and only if they bind in the simulated system. From here, the correctness argument proceeds as in the proof of Theorem~\ref{thm:strongSim2}.


\section{The temperature-$\tau$ 2HAM is intrinsically universal: simulation}\label{sec:weakSimulation}

In this section we give three results that exhibit for any integer $\tau \geq 2$, a single set of tiles $U_\tau$ that can simulate any temperature $\tau$ 2HAM system at temperature $\tau$ given a proper configuration of initial assemblies over $U_\tau$.  Of particular focus in this section is the achievement of logarithmic scale factors in the size of the simulated tile system.  Beyond this, each of the three results in this section depicts different tradeoffs between the size of $U_\tau$ in terms of $\tau$ versus the scale of the simulation in terms of $\tau$.

As in Section~\ref{sec:strongSimulation}, we adopt the following notation. Let $|| M ||$ denote the number of distinct elements in the multiset $M$, i.e.\ $|| M || = |M'|$ is the cardinality of the set $M'$ induced by the multiset $M$.  Let $T_{\mathrm{sup}}$ denote the set of supertiles induced by a tile set $T$. By this, we simply mean $T_{\mathrm{sup}}$ is the set of supertiles formed by taking all tiles in $T$ and translating them  to all locations in $\mathrm{Z}^2$.

\begin{theorem}\label{thm:weak1}
For each integer $\tau \geq 2$ there exists a single set of tile types $U_{\tau}$, $|U_{\tau}| = O(\tau)$, such that for any 2HAM system $\mathcal{T} = (T,\sigma,\tau)$, there exists a set $I_\mathcal{T}$ of $O(|T|+||\sigma||)$ input supertiles  such that the 2HAM system $\mathcal{U_T}=(U_\tau , U_{\tau,\mathrm{sup}} \bigcup I_\mathcal{T},  \tau)$ simulates $\mathcal{T}$ at scale $O( \sqrt{\log |T|} )$.
\end{theorem}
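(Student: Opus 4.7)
The plan is to reuse the high-level macrotile template from Theorem~\ref{thm:strongSim2}, but to replace the $\Theta(|G|(\tau+\log|G|))$-area geometric glue pad with a compact binary encoding of the tile type plus a cooperative equality-check computation that runs along each macrotile interface. The key observation is that under the weaker notion of simulation (Definition~\ref{scott-defn:alt-simulate}(\ref{scott-defn:alt-weak-simulate})), we only require that every simulator representation of a simulated supertile be able to first \emph{grow} into some canonical form before combining; this frees us from the need for the geometric matching that drove the large scale in the strong-simulation proofs.

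Concretely, I would fix $k = \Theta(\sqrt{\log|T|})$ and define a universal simulator tile set $U_\tau$ of size $O(\tau)$ containing the standard 2HAM-safe computational primitives (crawlers, binary counters, bitwise equality testers, and a cooperative strength-$s$ attachment gadget for each $1 \leq s \leq \tau$), analogous to those used in the proofs of Theorems~\ref{thm:strongSim2} and~\ref{thm:strongSim3}. For each $t \in T$ there is one hardcoded input supertile $\hat{t}$ whose body is a $k \times k$ binary block listing, in a fixed format, the four glue labels of $t$ in $O(\log |T|)$ bits (using $|G| \leq 4|T|$); the four strengths are recorded by selecting from among the $O(\tau)$ strength gadget tile types in $U_\tau$, so they do not inflate the asymptotic scale. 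For each non-singleton $s \in \sigma$ there is one additional hardcoded input supertile built by rigidly gluing copies of the relevant $\hat{t}$'s into a polyomino with the correct relative offsets. Together this yields $|T| + ||\sigma \setminus T_{\mathrm{sup}}|| = O(|T|+||\sigma||)$ input supertiles.

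The interface mechanism is the heart of the construction. Bodies sit on a $(2k) \times (2k)$ grid, so between any two potentially adjacent bodies there is a $k \times k$ computational region. When two bodies abut, cooperative crawlers (assembled from $U_\tau$ onto the existing body surfaces) read out the $O(\log|T|)$ glue bits of each body, route them into this shared region, and feed them into a bitwise equality tester filling the $\Theta(\log|T|) = \Theta(k^2)$ area. The tester exposes a sequence of $s$ strength-$1$ cooperation glues on the shared seam \emph{if and only if} all bits agree (where $s$ is the common encoded strength), and exposes no positive-strength glue otherwise. Summing over all four sides, two macrotiles therefore bind with exactly the same total strength as their simulated counterparts in $\mathcal{T}$.

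Verification of Definitions~\ref{scott-defn:alt-equiv-prod}, \ref{scott-defn:alt-equiv-dynamic-t-to-s}, and~\ref{scott-defn:alt-equiv-dyanmic-s-to-t-weak} then proceeds as in Theorem~\ref{thm:strongSim2}: clean mapping holds because all inter-macrotile growth is confined to the $(2k)\times(2k)$ neighborhood of a body, while equivalent production and the follows relation are immediate from the correctness of the equality tester. For weak modeling, any simulator supertile representing $\tilde\alpha$ first completes its boundary crawlers into a canonical representative (the $\ta'\to_{\mathcal U}\ta''$ step in Definition~\ref{scott-defn:alt-equiv-dyanmic-s-to-t-weak}) and then combines through the correctly-computed interface. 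The main obstacle I expect is ensuring no spurious $\tau$-stable polyomino of computational tiles can assemble away from a macrotile body and later attach to a seam so as to spoof a matching glue. This is handled, as in~\cite{IUSA,Luhrs08}, by designing the computational tile types so that each one carries at most a single $\tau$-strength glue, forcing every nontrivial computation to be cooperatively anchored to an already-assembled macrotile body; standard 2HAM-safety arguments then rule out the spurious polyominoes.
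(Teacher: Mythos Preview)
Your proposal has a genuine gap at the heart of the interface mechanism. You write that when two bodies abut, crawlers from each side route their glue bits into the shared $k\times k$ region and ``feed them into a bitwise equality tester'' which then exposes binding glues on the seam iff the bits agree. But in the 2HAM two supertiles combine in a \emph{single atomic step}, and that step succeeds only if the interface already presents total strength $\geq \tau$. Any tile-based equality computation that reads inputs from \emph{both} sides can run only \emph{after} the two bodies are in a fixed relative position, i.e.\ after they are already bound; so the tester's output cannot be what causes the binding. Conversely, if you try to let the crawlers on each body precompute independently and expose matching glues at matching coordinates, you no longer have an ``all bits agree $\Rightarrow$ strength $s$, else $0$'' semantics: $k$ agreeing bits would give strength roughly $k$, and two distinct glues sharing many bits could spuriously bind. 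The paper avoids this chicken-and-egg problem by using \emph{geometry rather than computation}: each macrotile face receives (via a separately floating ``glue gadget'' input supertile) a binary bump/dent pattern encoding the glue; complementary patterns interlock so that a single generic strength-$s$ red glue (one of $O(\tau)$ tile types) can meet, while any single mismatched bit sterically blocks the interface.

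A second, related issue is how mismatches are handled under weak modeling. You grow every $\ta'$ into a canonical $\ta''$ by completing \emph{all} boundary crawlers, and then attach some $\tb'$. But if the simulated combination $\ta\cup\tb$ has adjacent faces with mismatching glues, both sides of that seam in your simulator now carry fully grown crawler structures occupying the same shared region, and you give no mechanism for them to coexist. The paper's construction is built precisely around this: because the glue gadgets are separate input supertiles that attach optionally, the witness $\tb'$ is chosen with glue gadgets present \emph{only} at faces where $\ta$ and $\tb$ match, leaving mismatched seams geometrically clear. That asymmetry between $\ta''$ (all gadgets attached) and $\tb'$ (only matching gadgets attached) is exactly what Definition~\ref{scott-defn:alt-equiv-dyanmic-s-to-t-weak} permits and what your canonical-form idea does not exploit.
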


\begin{figure}[htp]
\begin{center}
\includegraphics[width=\textwidth]{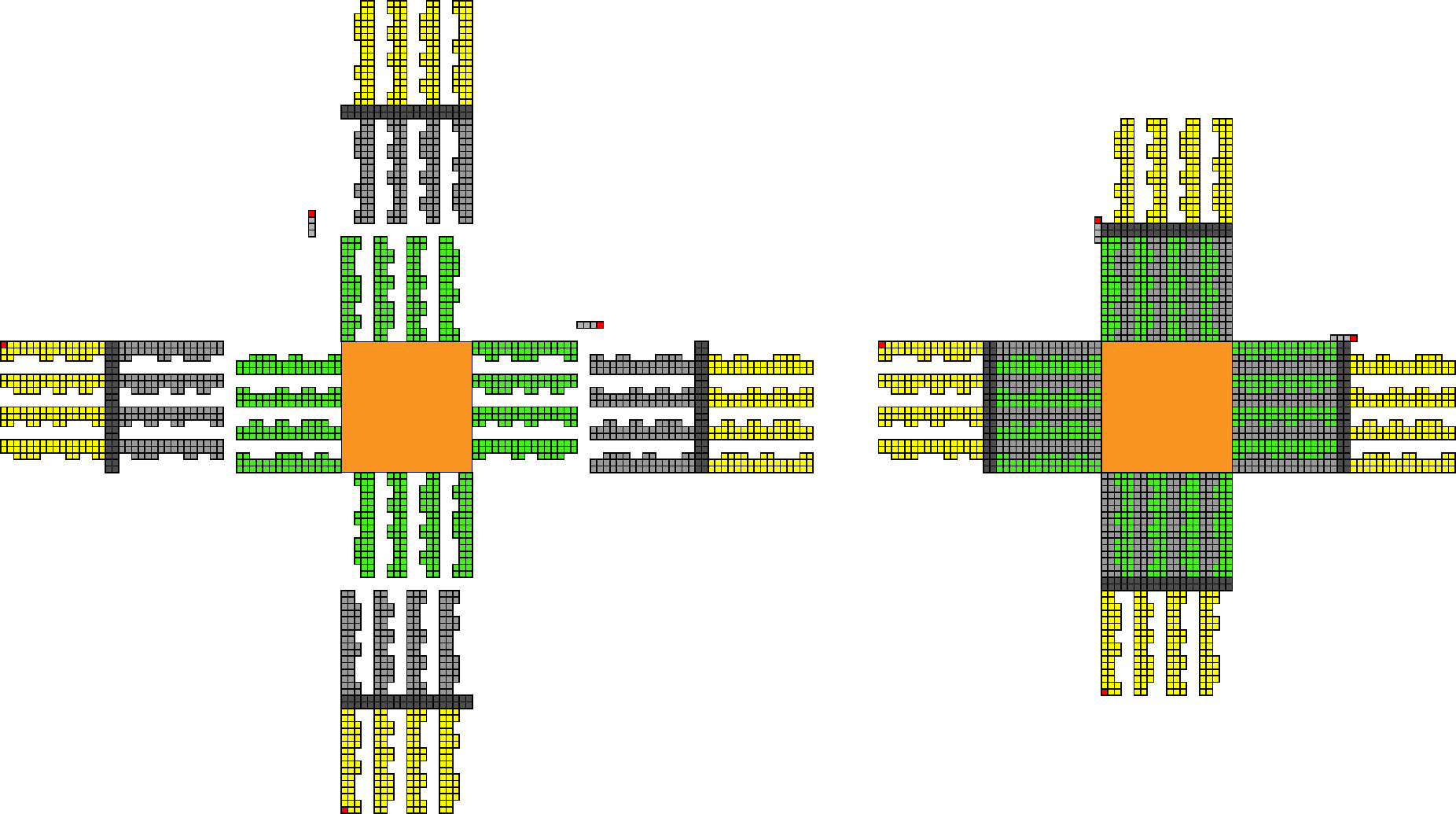}
\caption{These input assemblies for each tile type $t$ comprise the construction for Theorem~\ref{thm:weak1} and are each termed the \emph{megatile} representation of their respective tile $t$.}
\label{fig:sqrtLogInputBlocksMacroManyGlues}
\end{center}
\end{figure}

\paragraph{Construction.}
For a given tile system $\mathcal{T} = (T, \sigma,\tau)$, we first construct a collection of supertiles for each tile in $T$ corresponding to the supertiles depicted in Figure~\ref{fig:sqrtLogInputBlocksMacroManyGlues}.  We then construct an assembly for each element of $\sigma$ by placing combined instances of the assemblies for tile types according the tiles making up the assemblies from $\sigma$.  First, for each tile in $T$, we construct a corresponding \emph{macrotile} consisting of a scale $4 \lceil \sqrt{\log{|T|}} \rceil$ orange square assembly with protruding green \emph{teeth} growths whose geometry encodes a unique identifying binary number for the tile represented by the macrotile.  For each set of green teeth on each of the four sides of the macrotile, there is also a matching \emph{glue gadget} assembly shown with grey and yellow tiles.  The grey portion of this gadget encodes the complement of the macrotile's green geometry, allowing for a snug fit of the appropriate glue gadget to each face of the macrotile.  Additionally, some $\tau$ strength glue is exposed on some portion of the green teeth that matches a glue on the grey portion of each glue gadget (note that all glue gadgets share this same glue and thus rely on the unique complementary geometry to enforce that the unique correct glue gadget is the only gadget that may attach to the macrotile face).  The yellow portion of the glue gadget encodes in geometry a unique binary string representing the glue that occurs on the respective face of the tile to be simulated.  Further, a given glue type represented by a gadget on an east face exposes a binary bump pattern that is the complement to the glue gadget for the same glue type occurring on any west face.  The same complementary setup is used for north and south glue gadgets.  A close-up of the green teeth portion of the assembly, and the attachable glue gadget, is shown in Figure~\ref{fig:sqrtLogInputGlueGadgets}.

\begin{figure}[htp]
\begin{center}
\includegraphics[width=0.6\textwidth]{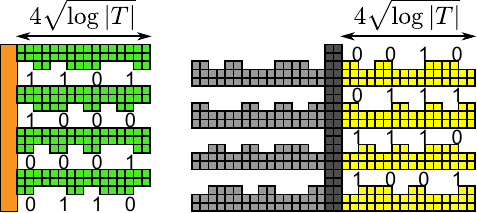}
\caption{Each tile type can select the proper glue gadget by exposing a pattern of green binary teeth unique to the tile type.  Each glue gadget exposes a binary sequence of yellow teeth encoding the glue type represented by the glue gadget.}
\label{fig:sqrtLogInputGlueGadgets}
\end{center}
\end{figure}

In addition to the macrotiles for each tile in $T$, we also have a supertile consisting of a collection of macrotiles for each supertile in $S \in \sigma$.  The construction is the natural one in which a copy of the representing macrotile is placed, at scale, for each of the corresponding tiles in $S$.  For any adjacent tile faces in $S$ that match glue type, the glue gadgets for the respective sides of the respective macrotiles are attaching, along with the glue assemblies, implying that the two macrotiles are connected to each other with the same net strength that the tiles in $S$ are connected with.  For mismatched adjacent glues, both glue gadgets are unattached (or, alternately, just one of the two glue gadgets is unattached), which ensures that the macro blocks can sit together without occupying the same tile positions.  For all glue faces in $S$ that are exposed, i.e., are not adjacent to another tile face, the corresponding macrotile edge does not have its glue gadget attached.  This initial lack of glue gadgets is fundamentally important to ensure a proper simulation in the case of glue mismatches.

Given this set of supertiles $I_{\mathcal{T}}$ derived from $T$ and $\sigma$, we obtain the system $\mathcal{U_{T}} = (U_\tau , U_{\tau,\mathrm{sup}} \bigcup I_{\mathcal{T}}, \tau)$ where $U_\tau$ is a generic set of tiles used to form the body of the construction's macrotiles, as well as a set of $\tau$ tiles used to supply a set of tiles that form the exposed red glues of strength from 1 to $\tau$, thereby yielding a size $|U_\tau| = O(\tau)$ tile set for the construction.  Note that $U_\tau$ only depends on $\tau$ and can be used regardless of which system is being simulated.

Assembly for system $\mathcal{U_{T}} = (U_\tau , U_{\tau, \mathrm{sup}} \bigcup I_{\mathcal{T}}, \tau)$ occurs via attachment between a base macrotile's exposed macro face and any of the macrotile's corresponding glue gadgets.  Upon attachment of either a north or east glue gadget, a cooperative binding site is exposed allowing for the attachment of the 4-tile assembly shown as 3 grey tiles and a red tile in Figure~\ref{fig:sqrtLogInputBlocksMacroManyGlues}.  In the case that the strength of the glue represented by the corresponding glue gadget is less than $\tau$, this 4-tile assembly attaches a glue (exposed on the surface of the red tile) of strength equal to the strength of that glue.    In the case that the glue to be represented is a $\tau$-strength glue, the red glue has strength only $\tau-1$ and the glue gadget is assumed to expose an additional strength-1 glue somewhere along its surface.  The exposed red glues on north and east macrotile glue gadgets match (any of) the red glues of equal strength on the west and south glue gadgets.  Again, since the same glue is used for all distinct glues of the same strength, the geometric compatibility of the binary teeth of the gadget are relied upon to ensure only complementary glues will realize this attraction.

\paragraph{Scale and Tile Set Size.} The scale of this construction is $O(\sqrt{ \log |T| })$ with the mapping from macrotile assemblies to simulated assemblies being the natural mapping implied by the mapping from each macrotile to the tile type the macrotile is derived from.  The bulk of the assemblies can be constructed from an $O(1)$ set of tile types.  An additional special set of $\tau$ tile types is  required to place exposed glues of strengths in the range of $1$ to $\tau$.  Instances of the special set of tiles are represented in the figure by red tiles.

\paragraph{Correctness of Simulation.}  The mapping of supertile blocks to tiles in $T$ is the natural mapping from the large orange blocks to the unique tile in $T$ that the macrotile was designed for.  Whenever two assemblies of macrotiles attach, they must do so based on a sufficient number of matching glue gadgets with a sufficient strength of at least $\tau$.  By design, then, the assemblies mapped to by these two macrotiles have the same exposed glue, and therefore must be able to attach.  Thus, the system $\mathcal{T}$ from which the macrotiles and macro assemblies are derived follows the derived system $\mathcal{U_{T}}$.

We now argue that the derived system $\mathcal{U_{T}}$ weakly models the original system $\mathcal{T}$.  Suppose some assembly $\alpha \in \prodasm{\mathcal{T}}$ can attach to some $\beta \in \prodasm{\mathcal{T}}$ to produce some $c \in \prodasm{\mathcal{T}}$.  Now consider any $\alpha'$ macrotile assembly that maps to $\alpha$.  We must show that $\alpha'$ can grow into an $\alpha''$ that also maps to $\alpha$, and that there must exist a $\beta'$ that can attach to $\alpha''$ to form an assembly that maps to $c$.

The $\alpha''$ that suffices is the super tile consisting of attaching to $\alpha$ all glue gadgets on any exposed macro surfaces of $\alpha$, along with the 4-tile glue assemblies that expose the red tiles with the exposed glues.  The $\beta'$ that suffices is any supertile that maps to $\beta$ with the added restriction that the only glue gadgets that have attached to $\beta'$'s exposed macro edges correspond to the set of glue faces that correspond to matched pairs of glues in the bonding of $\alpha$ and $\beta$.  This ensures that $\alpha'$ and $\beta'$ will not have mutually exclusive \emph{mismatched} glue gadgets that might prevent attachment via geometric hindrance.  Further, the inclusion of all matching glues ensures that enough affinity for attachment will be present based on the assumption that $\alpha$ and $\beta$ are combinable.

\begin{theorem}\label{thm:weak2}
For each integer $\tau \geq 2$ there exists a single set of tile types $U_\tau$, $|U_\tau|=O(1)$ (i.e. independent of $\tau$), such that for all 2HAM systems $\mathcal{T} = (T,\sigma,\tau)$, there exists a set $I_\mathcal{T}$ of $O(|T| + ||\sigma||)$ input supertiles such that the 2HAM system $\mathcal{U_T} = (U_\tau, U_{\tau,\mathrm{sup}} \bigcup I_\mathcal{T},  \tau)$ simulates $\mathcal{T}$ at scale $O( \sqrt{\log |T| + \tau } )$.
\end{theorem}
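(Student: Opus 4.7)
The approach is to adapt the construction of Theorem~\ref{thm:weak1} so as to eliminate the $\tau$-dependence of the tile-set $U_\tau$, paying for this reduction with a slightly larger scale factor. In the construction for Theorem~\ref{thm:weak1}, the only tiles whose count scales with $\tau$ are the red glue-exposing tiles, of which there is one per strength $s\in\{1,\ldots,\tau\}$. The natural replacement, mirroring the relationship between Theorems~\ref{thm:strongSim1} and~\ref{thm:strongSim2}, is to represent a simulated glue of strength $s$ not by a single strength-$s$ red tile but by a run of $s$ identical strength-$1$ red tiles; two matching glue gadgets then interlock via $s$ cooperative strength-$1$ bonds and so bind with total strength exactly $s$. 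This reduces $|U_\tau|$ to $O(1)$ by eliminating the strength-indexed family of red tile types, while leaving the $O(1)$-size families of body, tooth, grey, and yellow tile types unchanged.

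The cost of this modification is that each glue gadget must now reserve a region large enough to hold up to $\tau$ red strength-$1$ tiles, in addition to the $O(\log |T|)$ geometric bumps encoding the glue identity. By rasterizing both kinds of information into a square patch on each side of the macrotile (as in the $k \times k$ packing argument used in Theorem~\ref{thm:strongSim2}), the required linear dimension becomes $k = O(\sqrt{\log |T| + \tau})$, which is the claimed scale factor. All other aspects of the construction carry over directly from the proof of Theorem~\ref{thm:weak1}: the macrotile bodies built from generic orange and tooth tiles, the grey/yellow glue gadgets with complementary bump/dent geometry, the cooperative 4-tile red attachment mechanism, the input set $I_\mathcal{T}$ of $O(|T| + ||\sigma||)$ supertiles (one macrotile per tile type in $T$, plus one macro-assembly per input supertile in $\sigma$), and the natural representation function sending each completed macrotile to the tile type it encodes.

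The main subtlety, and the expected principal obstacle, is verifying that replacing a single strength-$s$ glue by a run of $s$ strength-$1$ glues does not introduce spurious bindings, since the simulator now uses only a single universal red glue. Concretely, one must ensure: (i) two glue gadgets encoding different simulated glues cannot align their red strips to accumulate strength $\geq \tau$, because the yellow geometric bumps sterically prevent any alignment other than the intended matching one; (ii) the red strength-$1$ pad is forced to form only \emph{after} the identity-encoding geometry has completed, so that no gadget can bind via its red tiles without first having its identity verified geometrically---this is enforced by an assembly-ordering path analogous to the one described for Theorem~\ref{thm:strongSim3}; and (iii) the strength-$1$ red glues used internally in one macrotile do not interact incorrectly with those of a neighboring macrotile, which again follows from the geometric separation enforced by the yellow teeth. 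Once these are established, the two simulation properties---that $\mathcal{T}$ follows $\mathcal{U}_\mathcal{T}$ and that $\mathcal{U}_\mathcal{T}$ weakly models $\mathcal{T}$---follow by exactly the same argument as in Theorem~\ref{thm:weak1}, since a pair of macrotiles binds in $\mathcal{U}_\mathcal{T}$ with total strength $\geq \tau$ if and only if the corresponding pair of simulated tiles binds in $\mathcal{T}$ with strength $\geq \tau$.
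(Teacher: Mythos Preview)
Your proposal is correct and follows essentially the same approach as the paper: modify the construction of Theorem~\ref{thm:weak1} by replacing the $\tau$-indexed family of strength-$s$ red tiles with a linear encoding via strength-$1$ red tiles, packed into an $O(\sqrt{\tau})\times O(\sqrt{\tau})$ region on each macrotile face, yielding the scale $O(\sqrt{\log|T|+\tau})$ and an $O(1)$ tile set. The paper realizes this via a cooperative ``coating'' chain that fills the rectangular region after the glue gadget attaches, which is exactly the ordering mechanism you flag in point~(ii); the correctness argument is deferred to that of Theorem~\ref{thm:weak1}, as you do.
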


\begin{figure}[htp]
\begin{center}
\includegraphics[width=\textwidth]{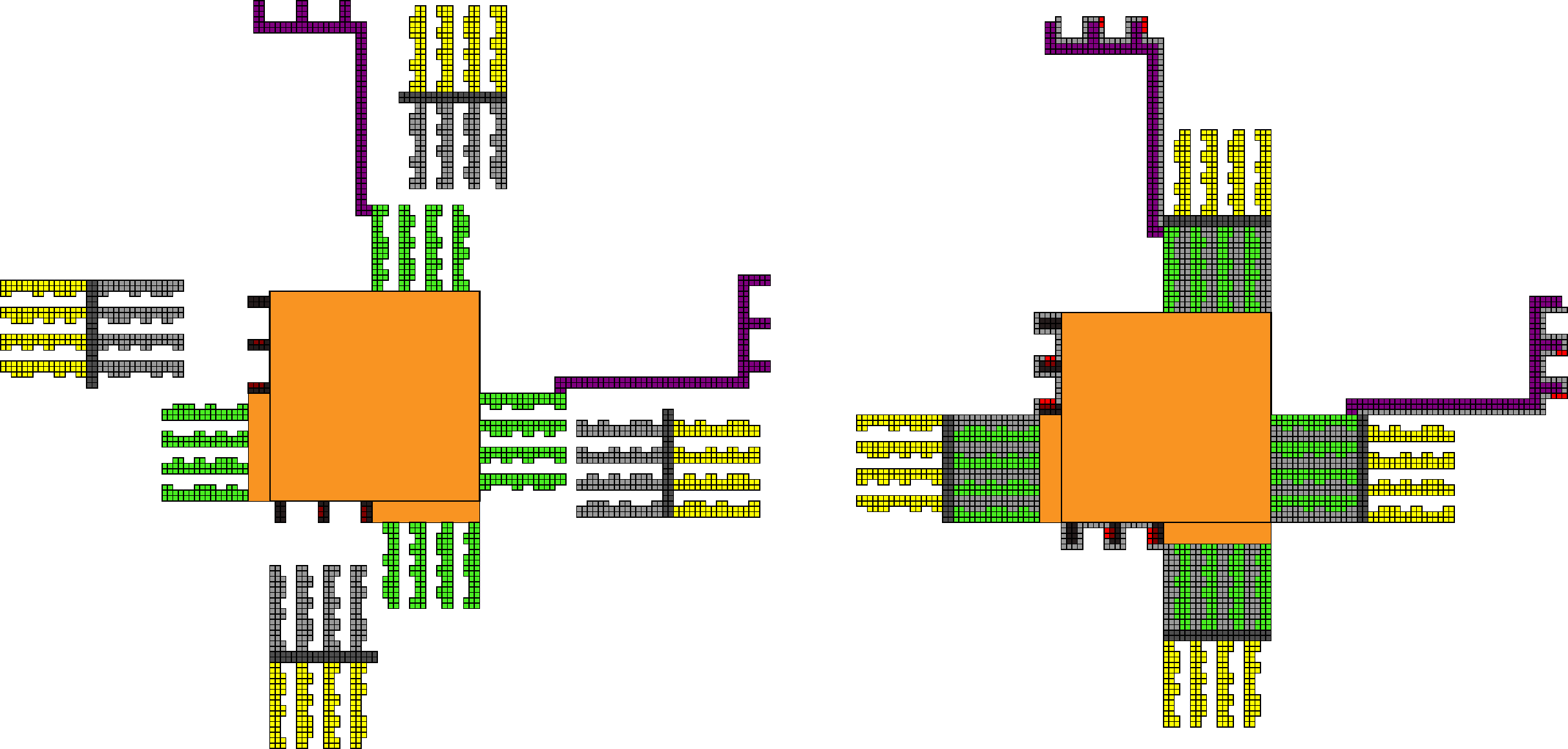}
\caption{To simulate any 2HAM system from a set of tiles $U_\tau$ that does not scale in size with $\tau$, the construction for Theorem~\ref{thm:weak2} extends the construction of Theorem~\ref{thm:weak1} to encode glue strength linearly with single strength glues.  This construction increases the scale by a $\sqrt{\tau}$ factor.  Alternately, Theorem~\ref{thm:weak3} provides a much more modest scale factor increase of $O(\sqrt{\log \tau})$ by encoding strengths through a binary representation and paying the price of a size $|U_\tau| = O(\log \tau)$ set of tiles, which represents a nice compromise between the alternate extremes.
}
\label{fig:sqrtLogInputBlocksMacro}
\end{center}
\end{figure}

\paragraph{Construction.}  The construction for this simulation is a modification of the construction for Theorem~\ref{thm:weak1} and the blocks utilized for each tile in $T$ are shown in Figure~\ref{fig:sqrtLogInputBlocksMacro}.  The key extension in this construction is that the the exposed glues that provide the affinity for attachment among macrotiles with attached glue assemblies can no longer make use of a size $\tau$ set of tiles, but instead must encode $\tau$ through a smaller tile set.  In this construction we utilize a linear encoding of $\tau$ that uses an optimal $O(1)$ set of tiles by stacking single strength glues to create a net force of the desired glue strength.  This linear encoding is displayed in the $\sqrt{\tau} \times \sqrt{\tau}$ rectangular regions of dark tiles on the west and south block faces, and at the end of the extended purple arms on north and east faces.  As with the previous construction, attachment of the glue assemblies exposes a cooperative bonding site, in this case permitting a chain of grey tiles to \emph{coat} the $O(\sqrt{\tau}) \times O(\sqrt{\tau})$ rectangular region with tiles that expose a number of red tiles equal to the strength of the glue represented on the macrotile face.  Each red tile exposes a single strength glue, yielding the desired net force of attraction.

\paragraph{Scale and Tile Set Size.} The scale of this construction is $O(\sqrt{ \log |T| + \tau })$.  Each assembly can be constructed from an $O(1)$ set of tile types as the additional $O(\tau)$ glues types are no longer needed based on the linear encoding of each glue strength.

\paragraph{Correction of Simulation.}  The argument for correct simulation is essentially the same as for Theorem~\ref{thm:weak1}.  A sample figure depicting how the subtle issue of mismatched glues are handled in this simulation is shown in Figure~\ref{fig:sqrtLogAssemblyMismatch}.

\begin{theorem}\label{thm:weak3}
For each integer $\tau \geq 2$ there exists a single set of tile types $U_{\tau}$, $|U_{\tau}| = O(\log \tau)$, such that for any 2HAM system $\mathcal{T} = (T,\sigma,\tau)$, there exists a set $I_\mathcal{T}$ of $O(|T| + ||\sigma||)$ input supertiles such that the 2HAM system $\mathcal{U_T}=(U_\tau , U_{\tau,\mathrm{sup}} \bigcup I_\mathcal{T},  \tau)$ simulates $\mathcal{T}$ at scale $O( \sqrt{\log |T| + \log{\tau} } )$.
\end{theorem}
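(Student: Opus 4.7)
The plan is to prove Theorem~\ref{thm:weak3} by modifying the construction of Theorem~\ref{thm:weak2}, replacing its unary (linear) strength encoding with a binary encoding. The point is to hit a middle ground between Theorem~\ref{thm:weak1} (which used $O(\tau)$ tile types, one per strength, to avoid any $\tau$-dependence in the scale) and Theorem~\ref{thm:weak2} (which used $O(1)$ tile types but paid a $\sqrt{\tau}$ cost because stacking $s$ strength-$1$ glues needs area $\Theta(s)$). With a binary encoding, a strength $s \leq \tau$ is realized by exposing $O(\log \tau)$ glues of total strength $s$, drawn from a fixed set of power-of-two strength tiles; this gives both $|U_\tau|=O(\log \tau)$ and area $O(\log \tau)$ for the strength region, and hence the claimed scale $O(\sqrt{\log |T| + \log \tau})$.

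Concretely, I will reuse the macrotile framework from Theorem~\ref{thm:weak2} verbatim: for each $t \in T$ construct an orange square body whose protruding green teeth binary-encode the identity of $t$, with corresponding grey/yellow glue gadgets that may attach on each of the four faces and whose yellow teeth binary-encode the represented glue identity (with the usual east/west and north/south complementarity trick). The only change is in how the exposed red region of a glue gadget generates bonding affinity. I will augment $U_\tau$ with tiles $r_0, r_1, \ldots, r_{\lceil \log \tau \rceil}$, where $r_i$ exposes a single glue with its own unique label of strength $2^i$; distinct strengths use distinct labels so an $r_i$ tile only ever bonds to another $r_i$. To realize a simulated glue of strength $s = \sum_i b_i 2^i$, the cooperative coating mechanism from Theorem~\ref{thm:weak2} places exactly the set $\{r_i : b_i = 1\}$ in a dedicated $O(\sqrt{\log \tau}) \times O(\sqrt{\log \tau})$ sub-region of the glue gadget (with geometry forcing both mating gadgets to expose the same set of $r_i$'s in corresponding positions, which is automatic since both sides encode the same strength $s$). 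For the scale, the body still occupies $O(\sqrt{\log |T|}) \times O(\sqrt{\log |T|})$, the tile-identity and glue-identity teeth likewise need $O(\sqrt{\log |T|})$ per side, and the binary strength region contributes $O(\sqrt{\log \tau})$; taking the maximum yields the desired $O(\sqrt{\log |T| + \log \tau})$.

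For correctness, the argument is the same as for Theorems~\ref{thm:weak1} and~\ref{thm:weak2}, essentially clause by clause. The geometric (bump/dent) encoding of glue identity ensures that only mating gadgets with identical glue labels can interlock, and because the $r_i$ labels are strength-specific, the summed bond strength of a matched pair is exactly $s$; thus any pair of macrotile assemblies can combine in $\mathcal{U}_\mathcal{T}$ iff the corresponding pair of supertiles can combine in $\mathcal{T}$, which gives the $\mathcal{T} \dashv_R \mathcal{U}_\mathcal{T}$ direction. For $\mathcal{U}_\mathcal{T} \models^-_R \mathcal{T}$, given any $\alpha' \in \prodasm{\mathcal{U}_\mathcal{T}}$ with $\tilde{R}(\alpha') = \alpha$ and a simulated combination $\alpha \to_\mathcal{T} \beta$ via a partner $\beta$, I grow $\alpha'$ into $\alpha''$ by attaching every outstanding glue gadget (and its binary-strength coating) along the exposed macro edges of $\alpha'$, and choose $\beta'$ to be a supertile mapping to $\beta$ in which exactly the glue gadgets at macro-faces matched in the $\alpha$--$\beta$ bond are present (mismatched positions remain gadget-free). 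Equivalent production and clean mapping then follow by inspection of the macrotile geometry.

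The main obstacle is the same subtlety that arises in Theorem~\ref{thm:weak2}: ensuring that the binary strength coating neither permits spurious cross-matches (avoided by giving each power-of-two a unique glue label) nor creates sterically-hindered partial attachments that would spoil the clean mapping or block legitimate combinations. In particular I must check that the order-of-assembly subtlety noted in the proof of Theorem~\ref{thm:strongSim3}, where green identity pads must complete before the red strength pads become available, still holds for the binary version (it does, with the same ``two-wide path through green, then red'' trick), so that a glue gadget cannot expose strength-bearing $r_i$ tiles before its identity geometry is in place. Once this ordering is secured, the weak-simulation argument of Theorem~\ref{thm:weak2} carries over verbatim.
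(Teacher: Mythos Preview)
Your proposal is correct and matches the paper's approach essentially line for line: the paper's proof is a one-paragraph sketch that says to take the construction of Theorem~\ref{thm:weak2}, introduce $O(\log\tau)$ tile types exposing glues of strength $1,2,4,\ldots,2^{\lfloor\log\tau\rfloor}$, and thereby shrink the $\sqrt{\tau}\times\sqrt{\tau}$ strength region to $\sqrt{\log\tau}\times\sqrt{\log\tau}$ via a binary encoding. Your extra care about giving each $r_i$ its own glue label and about the green-before-red ordering is not spelled out in the paper's sketch but is entirely in keeping with the framework and does no harm.
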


This theorem represents an interesting compromise between the $O(\tau)$ size tileset required for the simulation for Theorem~\ref{thm:weak1}, and the substantial $\sqrt{\tau}$ scale factor increase from Theorem~\ref{thm:weak2}.  This middle ground approach utilizes a construction similar to Theorem~\ref{thm:weak2}, modified to make use of $O(\log \tau)$ tile types, each of which exposes a glue strength equal to one of the numbers $1, 2, 4, \ldots 2^{\log \tau}$.  The linear encoding of glue strengths via the $\sqrt{\tau} \times \sqrt{\tau}$ rectangular regions from Theorem~\ref{thm:weak2} may now be implemented with smaller $\sqrt{\log \tau} \times \sqrt{\log \tau}$ rectangular regions by using a binary encoding of each glue strength.

\begin{figure}[htp]
\begin{center}
\includegraphics[width=\textwidth]{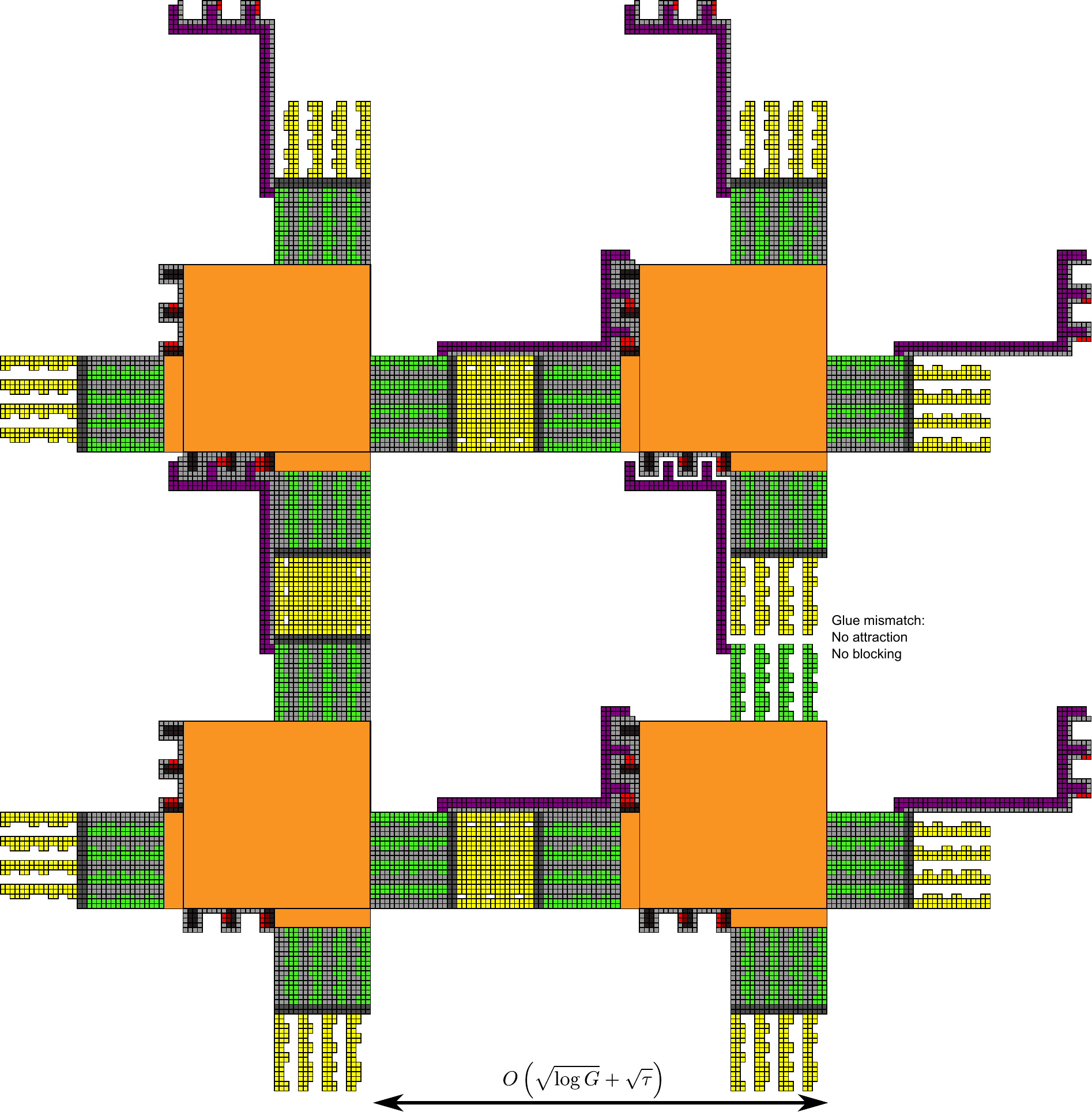}
\caption{Megablocks that have attached glue assemblies and subsequently placed enough single-strength glue tiles may begin attaching to each other based on the combined strength of matching glues.  The geometric teeth of the glue assemblies ensure that only matching glues types will be compatible, thus preventing mismatches from providing positive strength attachments.
Further, let $A'$ be any simulated assembly that combines with simulated assembly $B'$. Then for any assembly $A$ that maps to  $A'$ there must exist a simulator assembly $B$ that  attaches to $A$, even in the presence of simulated glue mismatches  (such a $B$ must exist---it simply has not yet attached its glue assembly in the positions for which simulated  glue mismatches occur).
}
\label{fig:sqrtLogAssemblyMismatch}
\end{center}
\end{figure}


\section{Simultaneous simulation of all 2HAM systems at temperature $\tau$}\label{sec:simAll}

In previous sections, we have shown that while the entire 2HAM is not intrinsically universal, the 2HAM at each temperature $\tau$ is intrinsically universal. In this section we show that there exists a tile set in the 2HAM at each temperature $\tau > 1$ which simultaneously and in parallel simulates \emph{every} 2HAM tile assembly system (with a default initial state) at temperature $\tau$. 
These simultaneous simulations are guaranteed to occur in parallel without any interaction between the macrotiles simulating different systems, and to faithfully strongly simulate each system.  Note that due to technical reasons discussed in Section~\ref{sec:simulate_all_proof}, this construction finitely self-assembles (see \cite{Versus}) the supertiles of the simultaneous simulations.

Before we formally can state our main result for this section, we must define the notion of \emph{simultaneous strong} simulation.

Assume we have a standard enumeration of every 2HAM TAS at temperature $\tau$, i.e., $\mathcal{T}_0 = (T_0, \tau), \mathcal{T}_1 = (T_1, \tau), \ldots\ $. Let $\mathcal{U} = \left(U_{\tau}, \tau\right)$ be a 2HAM TAS and for each $i \in \mathbb{N}$, let $R_i$ be an $m$-block representation function $R_i: B^U_m \dashrightarrow T_i$. Define $\mathcal{T} = \left \langle\mathcal{T}_i \right \rangle_{i=0}^{\infty}$ and $R = \left \langle R_i \right \rangle_{i=0}^{\infty}$.

\begin{definition}\label{scott-defn:simultaneous_alt-equiv-prod}
We say that $\mathcal{U}$ and $\mathcal{T}$ have \emph{simultaneous equivalent productions}, and we write $\mathcal{U} \Leftrightarrow_R^{\infty} \mathcal{T}$ if the following conditions hold:
\begin{enumerate}
    \item \label{scott-defn:simultaneous_simulate:equiv_prod_a} For all $\ta \in \prodasm{U}$, there exists at most one value $i \in \mathbb{N}$ such that $\tilde{R}_i\left(\ta\right) \in \mathcal{A}[\mathcal{T}_i]$.
    \item \label{scott-defn:simultaneous_simulate:equiv_prod_b} For all $i \in \mathbb{N}$, for every $\tb \in \mathcal{A}[\mathcal{T}_i]$, there exists $\ta \in \prodasm{U}$ such that $\tilde{R}_i\left(\ta\right) = \tb$.
    \item For all $\ta \in \prodasm{U}$ there exists at most one value $i \in \mathbb{N}$ such that $\ta$ maps cleanly to $\tilde{R}_i\left(\ta\right)$.
\end{enumerate}
\end{definition}

\begin{definition}\label{scott-defn:simultaneous_alt-equiv-dynamic-t-to-s}
We say that $\mathcal{T}$ \emph{follows} $\mathcal{U}$, and we write $\mathcal{T} \dashv_R^{\infty} \mathcal{U}$ if, for any $\ta, \tb \in \prodasm{\mathcal{U}}$ such that $\ta \rightarrow_{\mathcal{U}}^1 \tb$, there exists at most one value $i \in \mathbb{N}$ such that $\tilde{R}_i(\ta) \rightarrow_{\mathcal{T}_i}^{\leq 1} \tilde{R}_i\left(\tb\right)$.
\end{definition}

\begin{definition}\label{scott-defn:simultaneous_alt-equiv-dyanmic-s-to-t-strong}
We say that $\mathcal{U}$ \emph{strongly models} $\mathcal{T}$, and we write $\mathcal{U} \models^{\infty}_R \mathcal{T}$ if for any $i \in \mathbb{N}$, for any $\ta$, $\tb \in \mathcal{A}[\mathcal{T}_i]$ such that $\tilde{\gamma} \in C^{\tau}_{\ta , \tb}$, then for all $\ta', \tb' \in \prodasm{\mathcal{U}}$ such that $\tilde{R}_i(\ta')=\ta$ and $\tilde{R}_i\left(\tb'\right)=\tb$, it must be that there exist $\ta'', \tb'', \tilde{\gamma}' \in \prodasm{\mathcal{U}}$, such that $\ta' \rightarrow_{\mathcal{U}} \ta''$, $\tb' \rightarrow_{\mathcal{U}} \ta''$, $\tilde{R}_i(\ta'')=\ta$, $\tilde{R}_i\left(\tb''\right)=\tb$, $\tilde{R}_i(\tilde{\gamma}')=\tilde{\gamma}$, and $\tilde{\gamma}' \in C^{\tau}_{\ta'', \tb''}$.
\end{definition}

\begin{definition}\label{scott-defn:simultaneous_alt-strong-simulate}
Let $\mathcal{U} \Leftrightarrow_R^{\infty} \mathcal{T}$ and $\mathcal{T} \dashv_R^{\infty} \mathcal{U}$. We say that $\mathcal{U}$ \emph{strongly simultaneously simulates} $\mathcal{T}$ if $\mathcal{U} \models_R^{\infty} \mathcal{T}$.
\end{definition}


\begin{theorem}\label{thm:ham-for-all}
For each $\tau > 1$, there exists a 2HAM system $\mathcal{S} = (U_{\tau},\tau)$ which strongly simultaneously simulates all 2HAM systems $\mathcal{T} = (T,\tau)$.
\end{theorem}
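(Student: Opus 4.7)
The plan is to adapt the comb-based construction from Theorem~\ref{thm:strongSim3} so that the comb itself self-assembles from the default initial state of $U_\tau$, rather than being supplied as an input supertile. Since a temperature-$\tau$ 2HAM system with $\tau \geq 2$ can simulate the aTAM (with the ``2HAM-safe'' care described in the footnote of Section~\ref{subsec:strongSim3Sec}), standard nondeterministic tile-assembly techniques let us grow combs of arbitrary size from a single designated seed tile. The seed initiates a growth process that nondeterministically chooses (i) the dimensions of the comb (the number of teeth and the length of each tooth) and (ii) the encoded glues and strengths for each tile type on each tooth. For every finite tile set $T$ there is some assembly sequence producing a comb that encodes $T$.

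To ensure that macrotiles arising from combs that encode distinct tile sets do not bind, each comb propagates its entire encoded tile-set description into every macrotile it grows, as a tile-set \emph{fingerprint}. Specifically, when a crawler extracts the encoding of its chosen tile type from a tooth (step~(5) of Figure~\ref{fig:strongSimOneSeedOverview}), it also copies out a fixed-order serialization of the entire comb contents and incorporates this fingerprint into the bump/dent region of every glue-binding pad, in exactly the manner in which glue identity is encoded geometrically in Theorem~\ref{thm:strongSim3}. Two macrotile arms can bind only if their fingerprints match bit-for-bit, which happens precisely when both combs encode the same tile set. The arm and body dimensions are enlarged by a factor proportional to the fingerprint length so that $|G|$ fingerprint-bearing glue-binding pads still rasterize into each arm region.

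For each $\mathcal{T}_i = (T_i,\tau)$ in the enumeration, the representation function $R_i$ is defined on an $m_i$-block (with $m_i$ a computable function of $|T_i|, |G_i|, \tau$) by first checking that the block is a fully-grown macrotile whose embedded fingerprint equals the serialization of $T_i$; if so, $R_i$ outputs the encoded tile type, and otherwise $R_i$ is undefined. Fingerprint uniqueness, together with the assembly-order invariant from Theorem~\ref{thm:strongSim3} that the red binding-pad tiles only attach after the fingerprint and glue-identity geometry is complete, guarantees the ``at most one $i$'' clauses in Definitions~\ref{scott-defn:simultaneous_alt-equiv-prod} and~\ref{scott-defn:simultaneous_alt-equiv-dynamic-t-to-s}. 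Equivalent production and strong dynamics within each fingerprint class reduce to the corresponding local arguments of Theorem~\ref{thm:strongSim3}, applied in parallel across fingerprint classes.

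The main obstacle is ruling out cross-talk among partially assembled combs, partially assembled macrotiles, and fully assembled macrotiles of different tile-set encodings. This is enforced by a careful assembly-order discipline: binding pads expose positive-strength glues only after the fingerprint and glue-identity geometry is complete, and the internal nondeterministic comb-growth tiles are designed to expose no $\tau$-stable interfaces accessible to macrotile arms. A secondary subtlety is Definition~\ref{scott-defn:simultaneous_alt-equiv-dyanmic-s-to-t-strong}: every producible $\ta'$ mapping to some $\ta \in \prodasm{\mathcal{T}_i}$ must be able to grow into a combinable $\ta''$, which holds because $\ta'$ was born from a comb encoding $T_i$ and can always finish growing all its arms using the infinite supply of tiles in the default initial state. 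The ``finitely self-assembles'' caveat in the statement acknowledges that many stray or incomplete supertiles will coexist in $\prodasm{\mathcal{U}}$; what matters is that each target simulated supertile is producible in $\mathcal{U}$, which is all that Definition~\ref{scott-defn:simultaneous_alt-strong-simulate} requires.
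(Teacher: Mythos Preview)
Your proposal is essentially correct and workable, but it takes a genuinely different route from the paper.

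The paper does \emph{not} grow the comb of Theorem~\ref{thm:strongSim3} nondeterministically.  Instead it (i) nondeterministically assembles a single-row binary string $r$ from four dedicated tile types, (ii) simulates a fixed Turing machine \textsc{GetNthTAS}$(r,\tau)$ that enumerates all canonical temperature-$\tau$ tile sets and outputs the $r$th one as a row of encoded tile types, (iii) nondeterministically selects one tile type from that row, and (iv) builds a macrotile whose sides carry, as binary bumps, the number $r$ itself (not the tile-set description) together with a linear glue segment of $|G|\cdot\tau$ positions.  Cross-talk is prevented because two macrotiles can line up their glue segments only if their $r$-bumps match.

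Your approach skips the Turing-machine enumeration entirely: you grow the comb directly by nondeterministic tile placement and use the full serialized comb contents as the fingerprint embedded in every glue pad.  This is conceptually simpler (no TM layer) and reuses Theorem~\ref{thm:strongSim3} almost verbatim, at the cost of a much longer identifier---the whole tile-set encoding rather than $O(\log r)$ bits---and hence a correspondingly larger scale factor per simulated system.  The paper's enumeration buys a short, uniform identifier and a clean countability argument for coverage of all functionally distinct tile sets; your version must implicitly restrict to some canonical glue-label alphabet so that every functionally distinct tile set is hit by some comb, which you leave tacit but is easy to arrange.  Both approaches tolerate multiple distinct identifiers for the same tile set (the paper's enumeration already has duplicates), and both rely on the same assembly-order discipline---identifier geometry before binding glues---to rule out partial-macrotile cross-talk.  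Both also inherit the ``finitely self-assembles'' caveat from the unbounded nondeterministic growth that picks $r$ (or, in your case, the comb).
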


In \cite{Versus}, it was shown that for every temperature $\tau > 1$, for every aTAM system $\mathcal{T} = (T,\sigma,\tau)$ with $|\sigma| = 1$, there exists a 2HAM system $\mathcal{S} = (S,\tau)$ which simulates it.  From this and Theorem~\ref{thm:ham-for-all}, the following corollary arises:

\begin{corollary}\label{cor:ham-for-tam}
For every temperature $\tau > 1$, there exists a 2HAM system $\mathcal{S} = (U_{\tau},\tau)$ which simultaneously simulates every aTAM system $\mathcal{T} = (T,\sigma,\tau)$ where $|\sigma| = 1$.
\end{corollary}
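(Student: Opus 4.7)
The plan is to compose the two stated facts. Fix $\tau > 1$. By the result of Cannon et al.~\cite{Versus}, for every aTAM system $\mathcal{T} = (T, \sigma, \tau)$ with $|\sigma| = 1$, there exists a 2HAM system $\mathcal{S}_{\mathcal{T}} = (S_{\mathcal{T}}, \tau)$ (with default initial state) that simulates $\mathcal{T}$ at some fixed scale factor (namely 5 in their construction), via some $m$-block representation function $R_{\mathcal{T}}$. By Theorem~\ref{thm:ham-for-all}, there is a fixed 2HAM system $\mathcal{S} = (U_{\tau}, \tau)$ that strongly simultaneously simulates \emph{all} 2HAM systems at temperature $\tau$, via an appropriate family of representation functions $\{R_i\}_{i \in \mathbb{N}}$ indexed over a standard enumeration of temperature-$\tau$ 2HAM systems. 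In particular, for each $\mathcal{T} = (T,\sigma,\tau)$ with $|\sigma| = 1$, the system $\mathcal{S}$ simulates the corresponding $\mathcal{S}_{\mathcal{T}}$ via some $R_{i(\mathcal{T})}$ at some scale factor $m_{i(\mathcal{T})}$.

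The natural candidate representation function of $\mathcal{S}$ onto $\mathcal{T}$ is the composition $R_{\mathcal{T}} \circ R_{i(\mathcal{T})}$, interpreted at scale factor $5 \cdot m_{i(\mathcal{T})}$: an $(5 \cdot m_{i(\mathcal{T})})$-block of $U_\tau$ is first parsed as a $5$-block array of $m_{i(\mathcal{T})}$-blocks, each of which is mapped via $R_{i(\mathcal{T})}$ into a tile of $S_{\mathcal{T}}$; the resulting $5$-block over $S_{\mathcal{T}}$ is then mapped via $R_{\mathcal{T}}$ into a tile of $T$. I would then verify the three clauses of Definitions~\ref{scott-defn:simultaneous_alt-equiv-prod}, \ref{scott-defn:simultaneous_alt-equiv-dynamic-t-to-s}, and \ref{scott-defn:simultaneous_alt-equiv-dyanmic-s-to-t-strong} for this composed representation. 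Equivalent productions (clauses 1--2 of Definition~\ref{scott-defn:simultaneous_alt-equiv-prod}) follow by chaining the equivalent-productions property of each simulation: producibles in $\mathcal{S}$ map to producibles in $\mathcal{S}_{\mathcal{T}}$, which in turn map to producibles of $\mathcal{T}$, and every producible of $\mathcal{T}$ is witnessed. Uniqueness of the index $i$ in clause~1 for $\mathcal{S}$'s productions carries over from the simultaneous simulation guarantee of Theorem~\ref{thm:ham-for-all}, since the aTAM-to-2HAM map of \cite{Versus} is injective on the tile sets involved (distinct aTAM systems yield distinct 2HAM witnesses). Clean mapping and the ``follows'' relation transfer similarly by composition.

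The main obstacle will be verifying \emph{strong} simulation through the composition, specifically the clause of Definition~\ref{scott-defn:simultaneous_alt-equiv-dyanmic-s-to-t-strong} that requires \emph{any} simulator pair $\ta', \tb'$ representing $\ta, \tb$ can be grown into a pair realizing the combination $\tilde{\gamma}$. The stated corollary asks only for simulation (not strong simulation) of the aTAM, so the weaker version of the requirement suffices: given an aTAM attachment $\ta \to^1_{\mathcal{T}} \tb$, one first lifts this to a 2HAM attachment sequence in $\mathcal{S}_{\mathcal{T}}$ via Cannon et al.'s construction, and then lifts each such 2HAM step to an $\mathcal{S}$-level sequence via the strong simultaneous simulation of Theorem~\ref{thm:ham-for-all}. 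Here one must be careful that the aTAM-to-2HAM simulator of~\cite{Versus} has the ``any lift extends'' property (at least up to the weak notion of simulation of Definition~\ref{scott-defn:alt-equiv-dyanmic-s-to-t-weak}), which is exactly what their construction provides. Finally, since \cite{Versus} requires $|\sigma|=1$, no nontrivial input supertiles beyond the singleton seed need to be encoded, so the default initial state of $\mathcal{S} = (U_\tau, \tau)$ suffices and no per-system programming of $\mathcal{S}$ is needed, matching the statement of the corollary.
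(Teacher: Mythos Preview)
Your proposal is correct and follows the same approach as the paper: the paper derives the corollary simply by composing the aTAM-to-2HAM simulation result of~\cite{Versus} with Theorem~\ref{thm:ham-for-all}, stating only that ``from this and Theorem~\ref{thm:ham-for-all}, the following corollary arises'' without further argument. Your write-up is in fact more detailed than the paper's, carefully spelling out the composed representation function and the verification of the simultaneous-simulation clauses.
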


\subsection{Construction overview}
This construction works with the singleton tiles of the tile set $U_{\tau}$ being the only contents of the initial state, and with them, in parallel via the process of self-assembly, forming macrotiles such that each macrotile simulates one specific tile type from one specific 2HAM system at temperature $\tau$.  (Note that such a system, since it has a default initial state, i.e. only singleton tiles, can be fully specified by its tile set since the temperature is also given.) This is done in such a way that every tile type of every temperature $\tau$ 2HAM tile set is represented by a unique macrotile.  These macrotiles are guaranteed to interact only with macrotiles representing tile types from the same tile set, and to do so in such a way that the group of macrotiles for that system strongly simulate that system.  The scale factor of the simulation of each system, and thus the size of the macrotiles used to simulate it, is potentially unique, and depends upon the running time of a Turing machine and the size of the simulated tile set.

Each macrotile construction begins by randomly selecting the tile set to which the macrotile will belong, and then randomly selecting which tile in the chosen tile set that macrotile will simulate. We guarantee that, in parallel, macrotiles strongly simulate every 2HAM tile assembly system at temperature $\tau$.  Thus it is required that each tile type of each tile set has a corresponding macrotile.  Also, each macrotile must encode information about not only the tile type it is simulating, but also must encode information about the tile set (namely, its number in a fixed enumeration) to which the tile type belongs.  Without the information about which tile set the macrotile is simulating, it would be possible for two macrotiles that simulate tiles from different tile sets to bind with strength greater than $0$, possibly violating the definition of strong simulation.

\begin{figure}[htp]
\begin{center}
\includegraphics[width=2.5in]{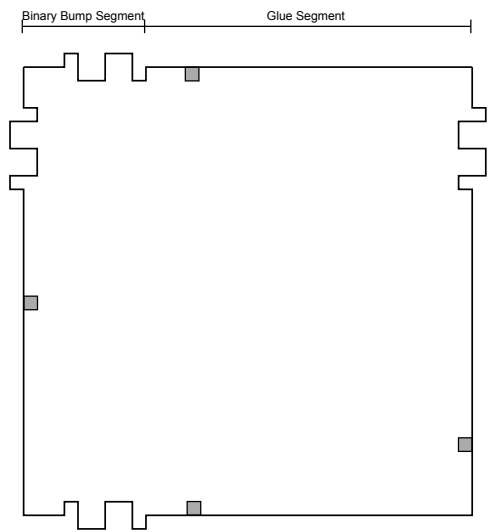}
\caption{A high level overview of a macrotile.}
\label{fig:sim_all_supertile_over}
\end{center}
\end{figure}

\begin{figure}[htp]
\begin{center}
\includegraphics[width=\textwidth]{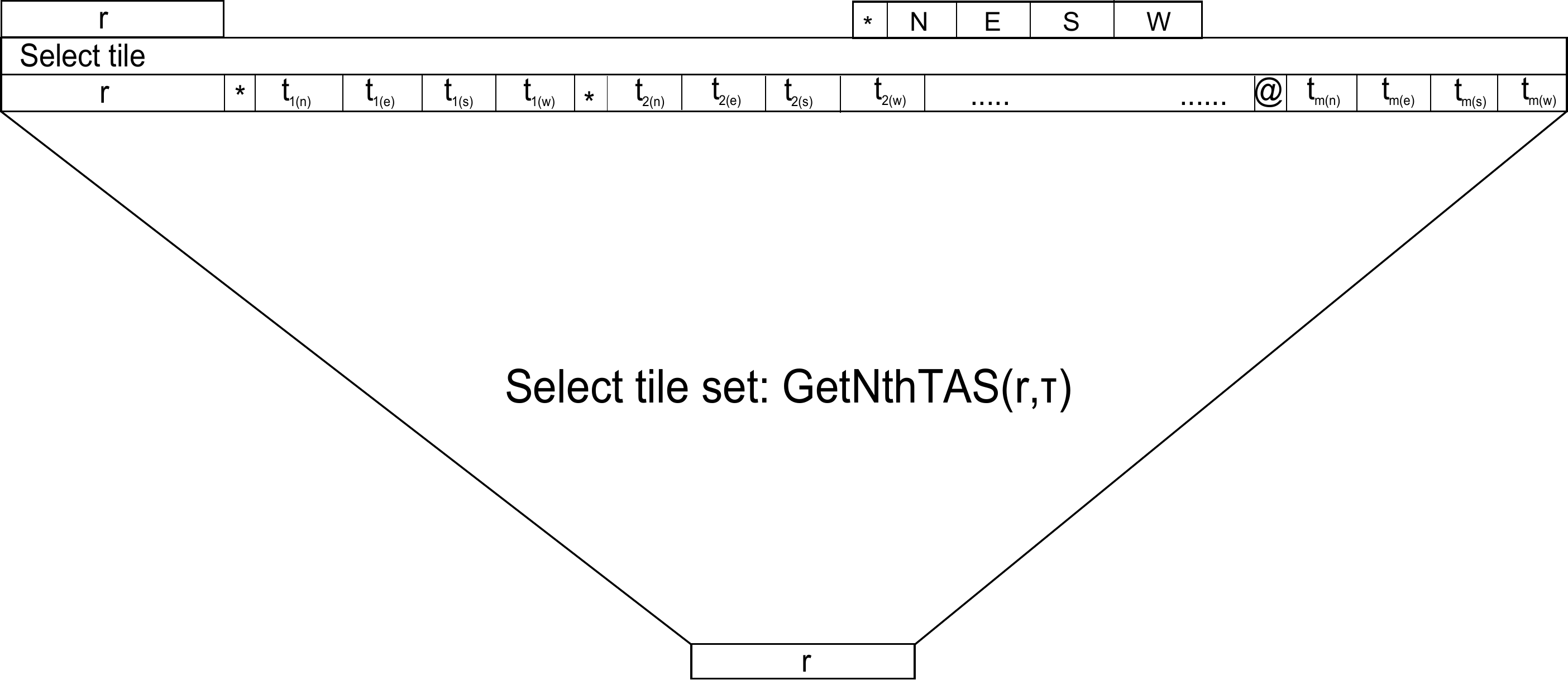}
\caption{A high level overview of the main components involved in building a macrotile. Note that each square represents a segment of tiles rather than an individual tile.}
\label{fig:sim_all_overview}
\end{center}
\end{figure}

\subsection{Construction details}

\begin{figure}[htp]
\begin{center}
\includegraphics[width=2.5in]{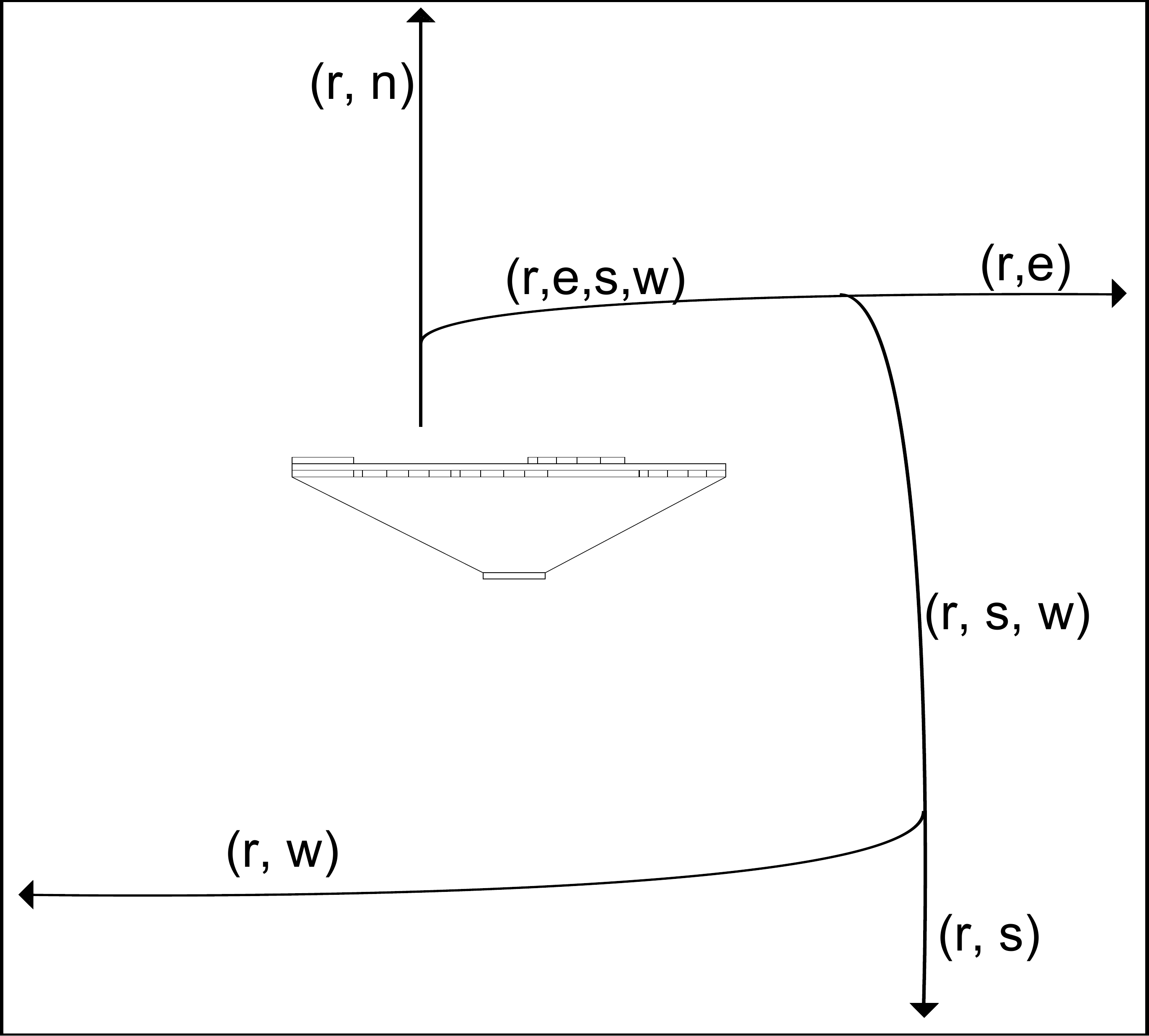}
\caption{The information flow during the assembly of a macrotile.  In this figure, $r$ represents the random number, and n, e, s, w stand for the binary representation of the north, east, south and west glues respectively. }
\label{fig:sim_all_info_flow}
\end{center}
\end{figure}

The assembly of a macrotile begins with the four tile types shown in Figure~\ref{fig:sim_all_r}.  These tiles nondeterministically assemble to yield a random number $r \in \mathbb{Z}^+$.  Next, a standard Turing machine simulation reads $r$ and outputs the tile set with that index number by running the enumeration program described below.  
The tile set is output as a row which encodes each tile type side by side, with spacing tiles in between, and with each tile type represented as the set of its four glues (which in turn are each represented as a pair of binary values for the label and strength of that glue).  Next, exactly one tile type is nondeterministically selected and its information is copied upward. (See Figure~\ref{fig:sim_all_overview}.)  Then, the assembly carries all of the information necessary for each macrotile side to the appropriate locations as seen in Figure~\ref{fig:sim_all_info_flow}.  Next, each macrotile edge first forms a region of bumps and dents which correspond to a binary encoding of the number $r$, which represents the number of the tile set being simulated.  Once those are complete, the remaining portion of the side, which encodes the glue information, forms. (See Figure~\ref{fig:sim_all_supertile_over}.)

\begin{figure}[t] 
\begin{center}
    \subfloat[][Two macrotiles (shown on right) simulating two tiles (on left).]{%
        \label{fig:sim_all_come_together}%
        \centering
        \includegraphics[width=2.0in]{}
       }%
    \quad\quad
    \subfloat[][Two macrotiles that are not part of the same tile set.  ``Blocking'' caused by the binary teeth representing the different values of $r$, i.e. the tile set numbers, prevents them from binding to one another.]{%
        \label{fig:sim_all_no_together}%
        \centering
        \includegraphics[width=1.8in]{images/sim_all_no_together}
        }
\caption{}
\label{fig:sim_all_examples}
\end{center}
\end{figure}

\subsubsection{Creation of random numbers}
The assembly generates random numbers using the tile set shown in Figure~\ref{fig:sim_all_r}.  Since all of the interior glues are of the same type and $\tau$ strength, these tiles assemble nondeterministically.  Because of this nondeterministic assembly, for any positive integer there exists an assembly of the four tiles such that the assembly is a binary representation of that number.  That is, these four tile types will generate the set of all positive integers.

\begin{figure}[htp]
\begin{center}
\includegraphics[width=0.45\linewidth]{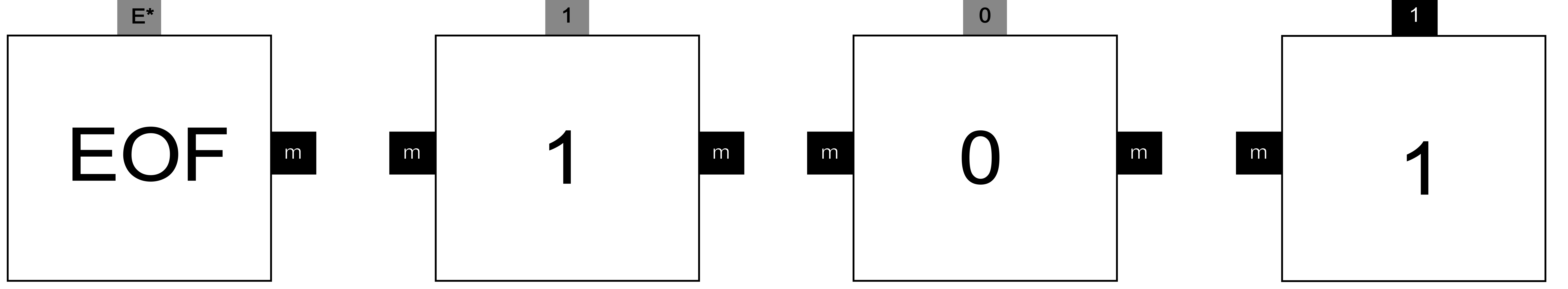}
\caption{The four tiles that bond nondeterministically to generate the random numbers required for our construction. The black squares represent $\tau$ strength glues and the grey squares represent $\lceil {\tau\over2} \rceil$ strength glues.}
\label{fig:sim_all_r}
\end{center}
\end{figure}

\subsubsection{Selecting a tile set}
Now that we have a random number, the assembly can decide to which tile set the macrotile belongs.  In order to accomplish this, we use the program $\proc{GetNthTAS}(tas\_num, \tau)$.  The idea of $\proc{GetNthTAS}(tas\_num, \tau)$ is to enumerate all of the tile sets at a temperature $\tau$.  (See Section~\ref{sec:simulate_all_proof} for more details.)  The algorithm will take a natural number and $\tau$ as input and output the tile set associated with that number.  It does so by, for every possible number of glues (in succession) and combinations of assignments of strengths to those glues from $1$ to $\tau$, creating all possible permutations of unique tile types for those glues.  For each such set, it then enumerates the power set, each element of that being a tile set. The program continues enumerating tile sets until the number assigned to a tile set equals $tas\_num$.  Once this happens, it outputs the tile set associated with this number.  The tile set is outputted as the glues that compose each tile in the tile set with special symbols in between each tile definition.

\begin{codebox}
\Procname{$\proc{GetNthTAS}(tas\_num, \tau)$}
\li    $tile\_set\_number \gets 0$
\li    $|G| \gets 1$
\li    \Do
\li         $glue\_config \gets$ the value of $|G|$ 1's in base $(\tau+1)$
\li         \While $glue\_config \leq (\tau+1)^|G|-1$
\zi         (the value of $|G|$ $\tau$'s in base $(\tau+1)$)
\li         \Do
\li              $a_i \gets$ the $i^{th}$ digit from the least
\zi              significant digit of $glue\_config$
\li              $T = \{\}$
\li              $n \gets 1$
\li               \While $n \leq (|G|+1)^4 -1$
\li               \Do
\li                    $s \gets n$, represented in base $(|G|+1)$ and
\zi                    padded to length 4
\li                    $s_i \gets$ the $i^{th}$ digit from the least significant
\zi                    digit of $s$
\li                    $t \gets ((s_3, a_{s_3}), (s_2, a_{s_2}), (s_1, a_{s_1}), (s_0, a_{s_0}))$
\li                    $T \gets T\cup \{t\}$
\li               \End
\li               \For each $p \in \mathcal{P}(T)$
\li               \Do
\li                 \If $tile\_set\_number = tas\_num$
\li                 \Then
\li                    \Return $p$
\li                 \Else
\li                    $tile\_set\_number++$
\li                 \End
\li             \End
\li           $glue\_config++$
\li       \End
\li     $|G|++$
\end{codebox}

\subsubsection{Selecting a tile}\label{sec:simulate_all_select}
Now that $\proc{GetNthTAS}(tas\_num, \tau)$ has selected the tile set in which our macrotile lives, the assembly chooses the specific tile that this macrotile will become from that tile set.  It achieves this by growing a row of tiles across the top of the row which defines the tile set (which is shown in Figure~\ref{fig:sim_all_overview}).  Until it has selected a particular tile type as that which will be simulated, at each position denoting the beginning of the definition of a new tile type, two tiles are able to bind - one which selects that tile and one which passes on it.  If a tile type is selected, the row which then grows above its definition copies the definition of the tile type upward.  If it is not selected, that information is not propagated upward.  Once a tile type has been selected, all others are ignored (i.e. the tile type which previously could have bound to select the tile can no longer do so).  Finally, if the row grows to the position of the final tile in the tile set (which is specially marked to denote that it is the last tile) and has yet to select a tile type, it is forced to choose this final type.  Thus, every tile type has some probability of being selected, and it is impossible for no tile type to be selected.

\subsubsection{Assembling the macrotile}
As in Section~\ref{subsec:strongSim3Sec}, we are able to use computational and geometric primitives to disperse the information to all sides of the macrotile as shown in Figure~\ref{fig:sim_all_info_flow}.  The glue portion is simply a series of $|G|\tau$ (where $|G|$ is the number of glues in the tile set being simulated) tile locations such that the first $\tau$ positions represent glue $1$, the second $\tau$ positions glue $2$, and so on.  If a given side of a supertile is to simulate the $i$th glue, whose strength is $j$, exactly $j$ of the $\tau$ tiles which represent that glue expose strength $1$ glues to the exterior of the macrotile.  (On the north and south sides of a macrotile it is the westernmost $j$ tiles of the group, and for the east and west sides it is the southern most $j$ tiles of the group.)  All other tiles on that side of the macrotile expose $0$ strength glues.  By forming the binary bumps and dents before the section encoding the glue, it is ensured that only two macrotiles belonging to the same tile set may come together, since only they will have the same pattern and thus geometrically be able to fit together. See Figure~\ref{fig:sim_all_come_together} for an example of how the ``binary bump'' segment allows tiles from the same tile set to come together, and see Figure~\ref{fig:sim_all_no_together} for an example of the ``binary bump'' segment blocking two tiles from coming together that belong to different tile sets.  On the other hand, the glue segment ensures that two macrotiles bind only if the tiles they are representing are able to bind.  As in Section~\ref{subsec:strongSim3Sec}, we must ensure that the ``binary bump'' segment of the side of a macrotile is assemble before the glue segment.  Without this measure, it would be possible for a partially assembled macrotile that has its glue segment assembled but not its teeth to bind to a macrotile simulating a different tile set.

\begin{figure}[htp]
\begin{center}
\includegraphics[width=4.5in]{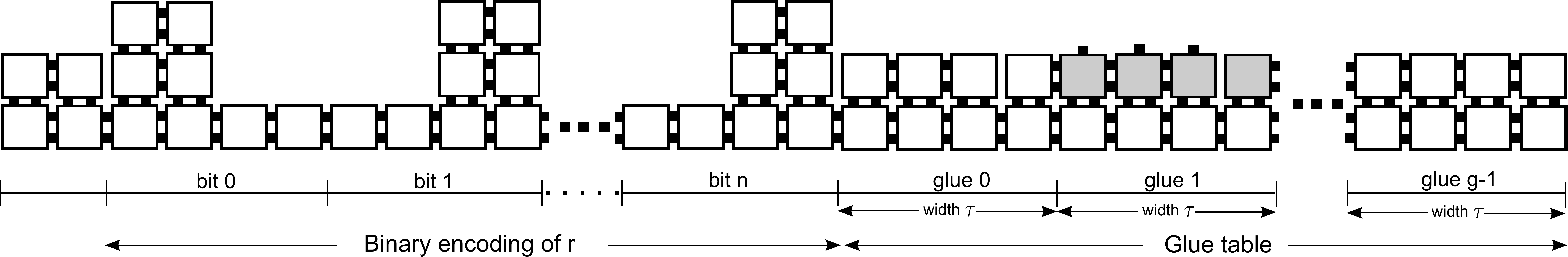}
\caption{Example macrotile side showing the binary teeth which encode $r$ as well as the glue table, which is simply a $\tau \cdot |G|$ (where $|G|$ is the number of glues in system $r$) row of tiles such that the $i$th consecutive group of $\tau$ tiles represent the $i$th glue.  Since each macrotile side can simulate exactly one glue, only the tiles of a single glue segment have non-zero external glues.  Those glues are all strength-$1$ and the number of them is equal to the strength value of the simulated glue. In this example, glue $1$ is the glue being simulated where the strength of glue $1$ is $3$ in $\tau=4$ system.}
\label{fig:sim_all_macrotile_side}
\end{center}
\end{figure}

\subsection{Proof of correctness}\label{sec:simulate_all_proof}
To see that Theorem~\ref{thm:ham-for-all} follows from this construction we show that 1) for every tile type in every possible tile set at temperature $\tau$, a unique macrotile self-assembles which maps to that tile type in that tile set, and 2) each system is strongly simulated by the macrotiles representing it.

We first define what it means for two tile sets to be \emph{functionally equivalent} to each other.

\begin{definition}
We say that tile set $T_1$ is \emph{functionally equivalent} to tile set $T_2$ if there exists a one-to-one mapping function $f: T_1 \rightarrow T_2$ that maps each tile type in $T_1$ to a unique tile type in $T_2$ such that for every pair of tiles $t_a, t_b \in T_1$, if and only if side $d_1 \in \{N,E,S,W\}$ of $t_1$ binds to side $d_2 \in \{N,E,S,W\}$ of $t_2$ with strength $s$, side $d_1$ of $f(t_1)$ binds to side $d_2$ of $f(t_2)$ with strength $s$.
\end{definition}

Therefore, if two tile sets are functionally equivalent, each tile type of each set has exactly one identical counterpart in the other tile set which is able to bind to the same exact sides of the same set of equivalent tiles. Furthermore, we say that two tile sets are \emph{functionally distinct} if they are not functionally equivalent.

We begin with the claim that a set covering all functionally distinct tile sets in the 2HAM at temperature $\tau$ is countable.  To do this we first note that regardless of the specific labels used for the glues of an arbitrary tile set $T$, and in fact there is an uncountably infinite set of possible labels, there is a functionally equivalent tile set which replaces each glue label with the integer $g$ for $0 \le g < |G|$ (represented as a string) where $G$ is the set of glues in $T$. We will refer to such tile sets (i.e. those using the integers as their glue labels) as the \emph{canonical} tile sets.  Now, observe that given a maximum strength (i.e. $\tau$) and a fixed number of glues $|G|$, the tile sets that can be produced is finite.  The number of possible glues for any canonical tile set is given by a positive integer, and the set of positive integers is countable.  It follows that the set of all canonical tile sets in the 2HAM at temperature $\tau$ thus consists of a countable union of finite sets.  Consequently, all the set of all canonical tile sets in the 2HAM at temperature $\tau$ is countable.  Since every tile set is functionally equivalent to a canonical tile set, the countable set of canonical tile sets represents a set of all functionally distinct 2HAM tile sets at temperature~$\tau$.

Now, we note that by the definition of (strong) simulation, the (strong) simulation of a system at temperature $\tau$ which includes a single element of a set of functionally equivalent tile sets $E$ is a (strong) simulation of all systems at temperature $\tau$ which include any tile set in $E$.  Therefore, in order to simulate every 2HAM system at temperature $\tau$, it is sufficient to simulate the set of all systems composed of canonical tile sets.  We now show that $\proc{GetNthTAS}(tas\_num, \tau)$ enumerates every canonical tile set at temperature $\tau$.  For every positive integer $|G|$ in succession, it uses $|G|$ as the number of glues and, using the canonical form of glue labels, creates every possible combination of mappings of glue labels to strengths from $1$ through $\tau$.  For each such mapping, signifying a set of glues and associated strengths, it creates a set of every possible tile type that can be created using those glues (and sides with no glue).  Given this set, the complete set of all tiles that could be created with these glues, it treats each element of the power set as its own tile set, assigning the next tile set number to each, until it reaches the number matching its input value and then outputs that tile set.  Note that, as it iterates and creates power sets, it will create power sets which contain elements (corresponding to tile sets) that are identical to those of other power sets.  This does not pose a problem, as it simply means that such tile sets will be counted multiple times and therefore simultaneously simulated under different numbers.  The important fact is that no canonical tile set is excluded from the enumeration.  In such a way, it is guaranteed that every canonical tile set is included in the enumeration.

Since every tile in a chosen tile set $r$ is selected to be built into a macrotile with some probability $p > 0$ (based on the nondeterministic selection described in Section~\ref{sec:simulate_all_select}) and there will be an infinite number of copies of the assembly signifying that tile set number $r$ has been selected in solution, it follows that there exists a unique macrotile that self-assembles which maps to that tile type for that system.  Note that an assembly sequence exists in which the line assembly built from the tile types used to nondeterministically select $r$ could grow to infinite length (in the limit) and never terminate, preventing the supertile from every growing into a macrotile.  For this reason, we say that this construction finitely self-assembles the macrotiles, as defined in \cite{Versus}.

To show that the behavior of the macrotiles is consistent with the definition of strong simulation, we first define an index function $I: A^{U_{\tau}} \dashrightarrow \mathbb{N}$ which takes as input an assembly over $U_{\tau}$ and returns the index of the tile set for which it is a macrotile simulation of some tile type, or is undefined if the assembly has not yet selected a tile set to simulate.  $I$ can do this by inspecting the supertile to find the number $r$ which was selected (simply looking for tiles from the set in Figure~\ref{fig:sim_all_r}) and running the same Turing machine program used by the construction to find the tile set and scaling factor (which is based on the running time of the Turing machine and the size of the tile set).  Using $I$, we can determine which supertile representation function $R^*_i$ to use to map a supertile over $U_{\tau}$ to a supertile over tile set $T_i$, for $0 \le i < \infty$.  Thus we can separate all supertiles $\alpha \in \prodasm{\mathcal{S}}$ into sets $S_i$ for $0 \le i < \infty$ such that the supertiles in $S_i$ represent macrotiles in the simulation of tile set $T_i$. Now we examine the physical features of macrotiles to prove the claim that those in $S_i$ strongly simulate the system $\mathcal{T}_i = (T_i,\tau)$. Since each macrotile is constructed such that each of its edges are encoded with a ``binary bump'' segment corresponding to $i$ (the tile set it is simulating), it is only possible for macrotiles' glue segments to come into contact if they are both in $S_i$ (i.e. simulating tiles from the same tile set).  Thus for all macrotiles $\alpha \in S_i$ and $\beta \in S_j$ where $i \ne j$, they are simulating different tile sets and their glue segments may never come into contact so they cannot bind with strength $s > 0$. Note that the guaranteed ordering of growth of the features of each side is important for this, because the binary teeth must be in place before the external glues to ensure that correct exclusion will occur for macrotiles from different sets.  Since the glues are placed uniquely on the glue segment with strengths equivalent to the tiles they are simulating, macrotiles bind with strength $s > 0$ if and only if the tiles they are simulating bind with strength $s > 0$.  Thus, by using the corresponding supertile representation function $R^*_i$ for each $S_i$, it is shown that tile set $T_i$ is strongly simulated at temperature $\tau$.

Further, while the macrotiles for a given tile set faithfully strongly simulate that tile set, they do not interfere with any of the (infinite) simulations simultaneously occurring in parallel.  Finally, we note that since the exterior of each macrotile contains no glues of strength $> 1$ along the flat glue portions of each side, it is impossible for the individual tiles of the simulator to attach to a macrotile.  In fact, only once a supertile has grown into a macrotile and represents at least the binary teeth and glue portion of some side, can it interact with any other macrotile.

\fi
\bibliographystyle{abbrv} 
\bibliography{tam}

\appendix

\end{document}